\documentclass[12pt,reqno]{amsart}

\usepackage{a4wide}
\usepackage{graphics}
\usepackage{geometry}
\usepackage{amssymb, amsmath}
\usepackage[applemac]{inputenc}
\usepackage[sans]{dsfont}

\DeclareGraphicsRule{.tif}{png}{.png}{`convert #1 `dirname #1`/`basename #1 .tif`.png}

\newtheorem{theorem}{Theorem}[section]
\newtheorem{lemma}[theorem]{Lemma}
\newtheorem{proposition}[theorem]{Proposition}
\newtheorem{corollary}[theorem]{Corollary}

\numberwithin{equation}{section}

\renewcommand{\Im}{\mathrm{Im}}
\renewcommand{\Re}{\mathrm{Re}}

\renewcommand{\d}{\mathrm{d}}
\renewcommand{\i}{\mathrm{i}}
\newcommand{\DETAILS}[1]{}

\newcommand{\DATUM}{November 4, 2009}              
\pagestyle{myheadings}                         
\markboth{\hfill{FreeElQED, \DATUM}}{{ResonQED, \DATUM}\hfill}  %

\begin{document}

\title{Local Decay in Non-relativistic QED}
\author{T. Chen}
\address[T. Chen]{Department of Mathematics, University of Texas at Austin, USA}
\email{tc@math.utexas.edu}
\author{J. Faupin}
\address[J. Faupin]{Institut de Math{\'e}matiques de Bordeaux \\
UMR-CNRS 5251, Universit{\'e} de Bordeaux 1 \\
351 cours de la lib{\'e}ration, 33405 Talence Cedex, France}
 \email{jeremy.faupin@math.u-bordeaux1.fr}
\author{J. Fr{\"o}hlich}
\address[J. Fr{\"o}hlich]{Institut f{\"u}r Theoretische Physik, ETH H{\"o}nggerberg, CH-8093 Z{\"u}rich, Switzerland}
\email{juerg@phys.ethz.ch}
\author{I. M. Sigal}
\address[I.M. Sigal]{Department of Mathematics, University of Toronto, Toronto, ON M5S 2E4, Canada}
\email{im.sigal@utoronto.ca}

\begin{abstract}
We prove the limiting absorption principle for a dressed electron at a fixed total momentum in the standard model of non-relativistic quantum electrodynamics. Our proof is based on an application of the smooth Feshbach-Schur map in conjunction with Mourre's theory.
\end{abstract}

\date{\DATUM}

\maketitle

\section{Introduction}\label{section:intro}

In this paper we study a non-relativistic electron interacting with the quantized electromagnetic field. Since the total momentum of the system is conserved, the Hamiltonian, $H^{SM} := \frac{1}{2} \big ( p - \alpha^{\frac{1}{2}} A(x_{\mathrm{el}}) )^2 + H_f$, can be written as a direct integral, $H^{\mathrm{SM}} = \int_{\mathbb{R}^3}^{\oplus} H(P) \d P$, where the fiber Hamiltonians, $H(P)$, are self-adjoint operators acting on photon Fock space $\mathcal{F}$. (Precise definitions will be given later in this introduction.) We will analyze these fiber Hamiltonians at a fixed total momentum $P \in \mathbb{R}^3$.

We prove the limiting absorption principle (LAP) for $H(P)$, for $|P|$ small enough. As a consequence, we obtain  local decay estimates and absolute continuity of the spectrum of $H(P)$ in the interval $(E(P),+\infty)$, where $E(P)$ denotes the bottom of the spectrum of $H(P)$. The former implies that photons move out of any bounded domain around the electron with probability one, as time tends to infinity.

The quantity $E(P)$ is the energy of a dressed one-particle state of momentum $P$, provided $|P|$ is small enough. Some of its properties have been investigated in several papers (see \cite{Chen, BCFS2, CF, HH, CFP2, FP}). It has been shown that, for $P$ sufficiently small, $H(P)$ has a ground state in the Fock space if and only if $P=0$, (unless an infrared regularization is introduced). This result is sometimes referred to as ``infrared catastrophe''. In \cite{CF}, the existence of a ground state is obtained in a non-Fock coherent infrared representation. The regularity of the map $P \mapsto E(P)$ is studied in \cite{Chen,BCFS2,CFP2,FP}. This yields, among other things, bounds on the renormalized electron mass. We also draw the reader attention to \cite{AFGG}, where related results for a model of a dressed non-relativistic electron in a magnetic field are established.

Our proof of the LAP for $H(P)$ is based on an application of the isospectral \emph{smooth Feshbach-Schur map} introduced in \cite{BCFS} (see also \cite{GH,FGS3}), in conjunction with \emph{Mourre's theory} (see \cite{Mo,PSS,ABG,HS}). For the standard model of charged non-relativistic particles bound to a static nucleus and interacting with the quantized electromagnetic field, a LAP just above the ground state energy has been recently proven in \cite{FGS1,FGS3}. An important ingredient in the proof of \cite{FGS1} is the use of a unitary Pauli-Fierz transformation (combined with exponential decay of states bound to the nucleus in the position variables of the electrons). Such a transformation does not exist in the model considered in the present paper so that the proof of \cite{FGS1} is not applicable.

The method developed in \cite{FGS3} is based on a spectral renormalization group analysis; (see \cite{BFS,BCFS, FGS2}). A similar analysis has been used in \cite{Chen,BCFS2} to study properties of dressed one-particle states in the model studied in the following.
The proof we present in the following is technically simpler, in that we require only one application of the smooth Feshbach-Schur map, whereas renormalization group methods are based on an iteration of this map. The renormalization group might, however, yield somewhat more precise results, in the sense that it is expected to provide an optimal estimate on the Hölder constant appearing in the LAP.

In an appendix, we explain how to modify the proof given in this paper to arrive at a LAP for bound electrons coupled to the radiation field. We emphasize that the infrared singularity in the interaction form factor is not an essential difficulty in our proof.

Next, we describe the model, state our main results and outline the strategy of our proof. Whenever the readers meet an unfamiliar notation they are encouraged to consult Appendix \ref{appendix:notations}. \\

\noindent \textbf{Definition of the model} \\
We consider a freely moving non-relativistic electron interacting with the quantized electromagnetic field. The Hilbert space describing the pure states of the system is given by $\mathcal{H} = \mathcal{H}_{\mathrm{el}} \otimes \mathcal{F}$, where $\mathcal{H}_{\mathrm{el}} = \mathrm{L}^2( \mathbb{R}^3 )$ is the Hilbert space for the electron. For the sake of simplicity, the spin of the electron is neglected. The symmetric Fock space, $\mathcal{F}$, for the photons is defined as
\begin{equation}
\mathcal{F} := \Gamma_s ( \mathrm{L}^2( \mathbb{R}^3 \times \mathbb{Z}_2 ) ) \equiv \mathbb{C} \oplus \bigoplus_{n=1}^\infty S_n \left [ \mathrm{L}^2( \mathbb{R}^3 \times \mathbb{Z}^2 )^{ \otimes^n} \right ],
\end{equation}
where $S_n$ denotes the symmetrization operator on $\mathrm{L}^2( \mathbb{R}^3 \times \mathbb{Z}^2 )^{ \otimes^n}$. As usual, the operators on this space will be expressed in terms of the photon creation and annihilation operators, $a^*_\lambda(k)$, $a_\lambda(k)$, which are operator-valued distributions obeying the canonical commutation relations
\begin{equation}
[ a^{\#}_\lambda(k) , a^{\#}_{\lambda'}(k') ] = 0 , \quad [ a_\lambda(k) , a^*_{\lambda'}(k') ] = \delta_{\lambda\lambda'} \delta( k - k' ),
\end{equation}
where $a^{\#}$ stands for $a^*$ or $a$. As usual, for any $h \in \mathrm{L}^2( \mathbb{R}^3 \times \mathbb{Z}_2 )$, we set
\begin{equation}
a^*(h) := \sum_{\lambda=1,2} \int_{\mathbb{R}^3} h( k,\lambda ) a^*_\lambda(k) \d k, \quad a(h) := \sum_{\lambda=1,2} \int_{\mathbb{R}^3} \bar h( k,\lambda ) a_\lambda( k) \d k.
\end{equation}

In the standard model of non-relativistic QED, the Hamiltonian for a freely moving electron interacting with photons is given by
\begin{equation}
H^{\mathrm{SM}} := \frac{1}{2} \big ( p - \alpha^{\frac{1}{2}} A(x_{\mathrm{el}}) )^2 + H_f,
\end{equation}
acting on $\mathcal{H} = \mathcal{H}_{\mathrm{el}} \otimes \mathcal{F}$. Here $x_{\mathrm{el}}$ denotes the position of the electron and $p=-\i \nabla_{x_{\mathrm{el}}}$ is the electron momentum operator. The Hamiltonian for the free quantized electromagnetic field is given by
\begin{equation}\label{eq:def_Hf}
H_f := \sum_{\lambda=1,2} \int_{\mathbb{R}^3} |k| a^*_\lambda(k) a_\lambda( k) \d k,
\end{equation}
and the electromagnetic vector potential is defined as
\begin{equation}
A(x_{\mathrm{el}}) := \frac{1}{\sqrt{2}} \sum_{\lambda=1,2} \int_{\mathbb{R}^3} \frac{ \kappa^\Lambda(k) }{ |k|^{\frac{1}{2}}} \varepsilon_\lambda(k) ( a^*_\lambda(k) e^{-\i k \cdot x_{\mathrm{el}} } + a_\lambda(k) e^{ \i k \cdot x_{\mathrm{el}} } ) \d k.
\end{equation}
In this expression, the ultraviolet cutoff function $\kappa^\Lambda$ is chosen such that
\begin{equation}\label{eq:kappaLambda}
\kappa^\Lambda \in \mathrm{C}_0^\infty( \{ k , |k| \le \Lambda \} ; [0,1] )\ \mbox{and}\ \kappa^\Lambda =1\ \mbox{on}\ \{ k , |k| \le 3 \Lambda / 4 \}.
\end{equation}
Furthermore, the polarization vectors $\varepsilon_\lambda(k)$, $\lambda=1,2$, are assumed to be real-valued, orthogonal to each other and to $k$.

The system is translation invariant in the sense that $H^{\mathrm{SM}}$ commutes with the total momentum operator $ p + P_f$, where
\begin{equation}
P_f := \sum_{\lambda=1,2} \int_{\mathbb{R}^3} k a^*_\lambda(k) a(k) \d k.
\end{equation}
It follows that $H^{\mathrm{SM}}$ admits the fiber decomposition
\begin{equation}
H^{\mathrm{SM}} = \int_{\mathbb{R}^3}^{\oplus} H(P) \d P,
\end{equation}
where the fiber operators $H(P)$, $P \in \mathbb{R}^3$, are self-adjoint operators on $\mathcal{F}$. The corresponding decomposition of the state space $\mathcal{H} = \mathcal{H}_{\mathrm{el}} \otimes \mathcal{F}$ can be written as $\Psi(x)=\int_{\mathbb{R}^3} e^{\i x\cdot(P-P_f)}\Phi (P) \d P$, with the fibers $\Phi (P)\in \mathcal{F}$. Using that $A(x)e^{\i x\cdot(P-P_f)} = e^{\i x\cdot(P-P_f)}A(0)$, we compute $H \Psi(x)=\int_{\mathbb{R}^3} e^{\i x\cdot(P-P_f)}H(P)\Phi (P) \d P$, where  $H(P)$ are given explicitly by
\begin{equation}
H(P) = \frac{1}{2} \big ( P - P_f - \alpha^{\frac{1}{2}} A \big )^2 + H_f.
\end{equation}
Here
\begin{equation}
A := A(0) = \frac{1}{\sqrt{2}} \sum_{\lambda=1,2} \int_{\mathbb{R}^3} \frac{ \kappa^\Lambda(k) }{ |k|^{\frac{1}{2}}} \varepsilon_\lambda(k) ( a^*_\lambda(k) + a_\lambda(k) ) \d k.
\end{equation}

We define $E(P) := \inf \sigma( H(P) )$. If $\alpha=0$, and if $|P|$ is less than the bare electron mass (equal to 1 in the units used in this paper), then $E(P) = P^2 /2$ is an \emph{eigenvalue} of $H(P)$. If $|P| > 1$, then $E(P) = |P| - 1/2$, and $E(P)$ is \emph{not} an eigenvalue of $H(P)$. The map $P \mapsto E(P)$ is pictured in figure \ref{bottom}, for $\alpha=0$.

\begin{figure}[htbp]
\centering
\resizebox{0.6\textwidth}{!}{\includegraphics{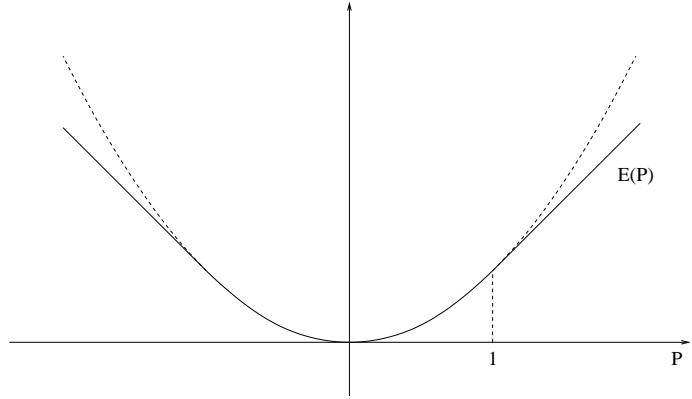}}
\caption{\textbf{The map $E(P) = \inf \sigma ( H(P) )$ for $\alpha=0$}:}
\begin{itemize}
\item[] $\qquad \qquad$ If $|P|\le 1$, $E(P) = P^2/2 \in \sigma_{\mathrm{pp}}( H(P) )$,
\item[] $\qquad \qquad$ If $|P|>1$, $E(P) = |P| - 1/2 \notin \sigma_{\mathrm{pp}}( H(P) )$.
\end{itemize}
 \label{bottom}
\end{figure}

We prove a limiting absorption principle for $H(P)$ in an energy interval just above $E(P)$, for $|P| \le p_c$, where $0 < p_c < 1$. In this paper we choose $p_c = 1/40$, and we do not attempt to find an optimal value for $p_c$.
\\

\noindent \textbf{Main results} \\
For an interval $J$, we set $J_{\pm} = \big \{ z \in \mathbb{C} , \mathrm{Re}z \in J , 0 < \pm \mathrm{Im}z \le 1 \big \}$. Let
\begin{equation}
y := \i \nabla_k
\end{equation}
be the observable accounting for the ``position''
of the photon relative to the electron position. Let $\d \Gamma( b )$ denote the usual (Lie-algebra) second quantization of an operator $b$ acting on $\mathrm{L}^2( \mathbb{R}^3 \times \mathbb{Z}_2 )$.
Our goal is to prove the following result.
\begin{theorem}\label{thm:LAP}
There exists an $\alpha_0>0$ such that, for any $|P| \le p_c \, ( \, \le 1/40)$, $0 \le \alpha \le \alpha_0$, $1/2 < s \le 1$, and any compact interval $J \subset (E(P),\infty)$, we have that
\begin{equation}\label{eq:main_LAP}
\sup_{ z \in J_\pm } \big \| (  \d \Gamma ( \langle y \rangle ) + 1 )^{- s} \big [ H(P) - z \big ]^{-1} ( \d \Gamma ( \langle y \rangle )  + 1 )^{-s} \big \| \le \mathrm{C},
\end{equation}
where $\mathrm{C}$ is a constant depending on $J$ and $s$.
Moreover, the map
\begin{equation}
J \ni \lambda \mapsto ( \d \Gamma ( \langle y \rangle ) + 1 )^{- s} \big [ H(P) - \lambda \pm \i 0 \big ]^{-1} ( \d \Gamma ( \langle y \rangle ) + 1 )^{- s} \in B( \mathcal{H} )
\end{equation}
is uniformly H\"older continuous in $\lambda$ of order $s-1/2$.
\end{theorem}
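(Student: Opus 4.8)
The plan is to combine the isospectral smooth Feshbach--Schur map of \cite{BCFS} with Mourre's commutator method \cite{Mo,PSS,ABG,HS}, following the overall strategy of \cite{FGS1,FGS3} but, since no Pauli--Fierz transformation is available here, working directly with the fiber operator. As conjugate operator I would take $A_f := \d\Gamma(b)$, the second quantization of the generator of dilations on photon momenta, $b := \tfrac12(y\cdot k + k\cdot y) = y\cdot k - \tfrac{3\i}{2}$, which obeys $[H_f,\i A_f] = H_f$ and, since $b$ is comparable to $\langle y\rangle$ on states with photon momenta in the support of $\kappa^\Lambda$, is controlled by $\d\Gamma(\langle y\rangle)+1$ on the low-energy sectors. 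By the abstract limiting absorption principle \cite{Mo,PSS,ABG,HS} it then suffices to establish, with respect to $A_f$, a \emph{strict} Mourre estimate on a neighborhood of each $\lambda_0\in(E(P),\infty)$ together with $C^{1,1}(A_f)$-regularity of $H(P)$; the weighted resolvent bound with $\langle A_f\rangle^{-s}$ and H\"older continuity of exponent $s-1/2$ for $1/2<s\le1$ are exactly what that machinery produces, and one passes to the weights $(\d\Gamma(\langle y\rangle)+1)^{-s}$ of \eqref{eq:main_LAP} using the boundedness of $\langle A_f\rangle^{s}(\d\Gamma(\langle y\rangle)+1)^{-s}$ on the relevant subspace. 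Covering any compact $J\subset(E(P),\infty)$ by finitely many such neighborhoods then finishes the proof.

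To produce the Mourre estimate at $\lambda_0\in J$, work for $z$ in a complex neighborhood of $\lambda_0$ and introduce a smooth cutoff $\chi=\chi(H_f\le\rho)$, $\bar\chi:=\sqrt{1-\chi^2}$, with $\rho$ chosen so that $\lambda_0$ lies well below $3\rho/4$. On $\mathrm{Ran}\,\bar\chi$ one has $H_f\ge 3\rho/4$, hence $H_0(P):=\tfrac12(P-P_f)^2+H_f\ge 3\rho/4$, while the interaction $W:=-\alpha^{1/2}(P-P_f)\cdot A + \tfrac{\alpha}{2}A^2$ is $H_0(P)$-bounded with relative bound $O(\alpha^{1/2})$ \emph{uniformly} in $\rho$ and in $|P|\le p_c$; since $\Re z<3\rho/4$ stays in the resolvent set of $\bar\chi H_0(P)\bar\chi$, the operator $\bar\chi(H(P)-z)\bar\chi$ is boundedly invertible on $\mathrm{Ran}\,\bar\chi$ once $\alpha\le\alpha_0$, with $\alpha_0$ independent of $J$. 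The smooth Feshbach--Schur map then yields an effective operator $H_{\mathrm{eff}}(z)$ on $\mathrm{Ran}\,\chi$ and the resolvent identity $[H(P)-z]^{-1}=Q(z)\,H_{\mathrm{eff}}(z)^{-1}\,Q^{\#}(z)+\bar\chi\,\bar R(z)\,\bar\chi$, in which $\bar\chi\bar R(z)\bar\chi$ and the intertwiners $Q(z),Q^{\#}(z)$ are holomorphic across $\Re z=\lambda_0$ and bounded, together with their $\langle A_f\rangle^{\pm s}$-conjugates, thanks to the $H_f$-regularizing factors supplied by $\bar R(z)$. Thus the LAP for $H(P)$ near $\lambda_0$ reduces to one for $H_{\mathrm{eff}}(z)$ near $0$.

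The heart of the matter is the analysis of $H_{\mathrm{eff}}(z)$. Using the isospectrality of the Feshbach map (which pins the bottom of its spectrum at $E(P)$) together with $|P_f|\le H_f$ on $\mathrm{Ran}\,\chi$, one shows that $H_{\mathrm{eff}}(z)+z$ is, modulo the \emph{dressed-electron} (vacuum) direction on which it equals $E(P)$ up to $o(1)$, bounded below by $E(P)+(1-o(1))H_f-o(1)$ and above by $E(P)+(1+o(1))H_f+o(1)$, the errors being controlled by $|P|$ and $\alpha^{1/2}$ \emph{relative} to $H_f+1$; and that $[H_{\mathrm{eff}}(z),\i A_f]=H_f+R_1(z)$ with $\pm R_1(z)\le o(1)(H_f+1)$, the commutators of $\tfrac12 P_f^2-P\cdot P_f$ with $A_f$ contributing $O(|P|)H_f$, those of $W$ contributing $O(\alpha^{1/2})(H_f+1)$, and the feedback term being $H_f$-small through the regularity of $\bar R$. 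Consequently, for $\psi\in\mathrm{Ran}\,E_\Delta(H_{\mathrm{eff}}(z)+z)$ with $\Delta$ a small interval about $\lambda_0$, the lower bound forces $H_f\psi\ge\big(\mathrm{dist}(\Delta,E(P))-o(1)-|\Delta|\big)\psi$ (the three-dimensional geometry of the free dispersion relation $|k|$ being what excludes low field energy above $E(P)$), whence $\langle\psi,[H_{\mathrm{eff}}(z),\i A_f]\psi\rangle\ge c\,\mathrm{dist}(\Delta,E(P))\|\psi\|^2$ with $c>0$ once $|P|$ and $\alpha$ are small --- a strict Mourre estimate, with no compact error term (so no virial bootstrap) and with a constant degenerating only as $\Delta\to E(P)$, the infrared-singular threshold. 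For the regularity, a short computation shows that $b$ applied to the form factor $\kappa^\Lambda(k)|k|^{-1/2}\varepsilon_\lambda(k)$ produces nothing worse than $|k|^{-1/2}$ behavior near $k=0$, so that $bf,b^2f\in\mathrm{L}^2$ and $H(P),H_{\mathrm{eff}}(z)\in C^{1,1}(A_f)$ uniformly; this is why the infrared singularity is not an essential difficulty. Feeding the Mourre estimate and the regularity into the abstract LAP (the mild holomorphic $z$-dependence of $H_{\mathrm{eff}}(z)$ being handled as in \cite{BCFS,FGS3}) and transferring back through the Feshbach resolvent identity and the weight comparison yields \eqref{eq:main_LAP} and the H\"older continuity; absolute continuity of $\sigma(H(P))$ on $(E(P),\infty)$ and the local decay estimates then follow by standard arguments.

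I expect the main obstacle to be exactly this passage from the \emph{degenerate} commutator $[H_f,\i A_f]=H_f$ --- which vanishes on the vacuum, the approximate ground state --- to a strictly positive lower bound on the relevant spectral subspace, carried out uniformly for $|P|\le p_c$ and $0\le\alpha\le\alpha_0$ with $\alpha_0$ independent of $J$. This forces one to control every error in the effective Hamiltonian and in its commutator with $A_f$ \emph{relative to $H_f+1$, with a small relative bound and no additive constant} --- in particular to exploit the cancellation that tames the linear interaction, despite the infrared singularity --- so that the restriction to $\mathrm{Ran}\,E_\Delta(H_{\mathrm{eff}}(z)+z)$ converts $H_f\ge0$ into $H_f\gtrsim\mathrm{dist}(\Delta,E(P))>0$ however close $\Delta$ sits above $E(P)$. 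The accompanying bookkeeping --- keeping the Feshbach invertibility bound on $\mathrm{Ran}\,\bar\chi$ uniform in $\rho$ (hence in $J$) and the $C^{1,1}(A_f)$-estimates valid in the infrared regime --- is the remaining technical point.
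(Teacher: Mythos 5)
Your overall strategy --- split $(E(P),\infty)$ into a region well above $E(P)$, where a direct Mourre estimate with the dilation generator $B=\d\Gamma(b)$ works, and a region near $E(P)$, where one applies a single smooth Feshbach--Schur step and then Mourre theory for the effective operator, transferring the LAP back via the Feshbach resolvent identity --- is exactly the paper's high-level plan, and your first step matches Section~\ref{section:LAP_away}. The gap is in the near-threshold step, and it is precisely the obstacle you flag at the end without resolving.

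You take $\chi=\chi(H_f\le\rho)$, $\bar\chi=\sqrt{1-\chi^2}$, with free part $H_0(P)=\tfrac12(P-P_f)^2+H_f$ and perturbation $W=-\alpha^{1/2}(P-P_f)\cdot A+\tfrac\alpha2 A^2$. The interaction $W$ has the full form factor $h(k,\lambda)=\kappa^\Lambda(k)|k|^{-1/2}\varepsilon_\lambda(k)$ with $\|h\|=O(1)$; in particular $\|W\chi\|=O(\alpha^{1/2})$ with \emph{no} factor of $\rho$, and likewise $[W,\i\d\Gamma(b)]$ carries an $O(\alpha^{1/2})$ additive constant (the contribution $\Phi(\i bh)$ does not vanish on the vacuum). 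Your own account concedes this: you write $[H_{\rm eff},\i A_f]=H_f+R_1$ with $\pm R_1\le o(1)(H_f+1)$ and $H_{\rm eff}+z\ge E(P)+(1-o(1))H_f-o(1)$, where $o(1)=O(\alpha^{1/2})$. When $\Delta$ sits at distance $\sigma$ above $E(P)$ with $\sigma\lesssim\alpha^{1/2}$, the spectral lower bound yields only $H_f\ge\sigma-O(\alpha^{1/2})-|\Delta|$, which is vacuous, and the candidate Mourre bound $c\,\sigma-O(\alpha^{1/2})$ is negative. There is no cancellation in your setup that removes the additive $O(\alpha^{1/2})$: the Feshbach feedback term $\chi W\bar\chi[H_{\bar\chi}-z]^{-1}\bar\chi W\chi$ is $O(\alpha/\rho)$, not $O(\alpha^{1/2}\rho)$, and the vacuum inside $\mathrm{Ran}\,\chi(H_f\le\rho)$ is not a good approximate eigenstate of $H(P)$.

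The paper's resolution is what is missing from your proposal: the infrared decomposition $H=H_\sigma+U_\sigma$ of Section~\ref{section:decomp}, the restriction $K_\sigma=H_\sigma|_{\mathcal F_\sigma}$ with its $\rho\sigma$ gap above $E_\sigma$, and the Feshbach cutoff $\chi=P_\sigma\otimes\chi^\sigma_f(H_f)$ built from the \emph{ground state projection of the infrared-cutoff fiber Hamiltonian}, not from a bare $H_f$ cutoff. This replaces the vacuum by the \emph{dressed} one-electron state, so the remaining perturbation couples only to photons of energy $\lesssim\sigma$; together with the choice $T_\sigma=K_\sigma\otimes\mathds 1+\mathds 1\otimes(\tfrac12 P_f^2+H_f)-\nabla E_\sigma\otimes P_f$ (so that $[\chi,T_\sigma]=0$) and the Feynman--Hellman identity $P_\sigma\nabla K_\sigma P_\sigma=\nabla E_\sigma P_\sigma$, the operators $W_1,W_2$ and their commutators with the sharpened conjugate $B^\sigma=\d\Gamma(\kappa^\sigma b\kappa^\sigma)$ are shown in Lemmata~\ref{lm:nablaK-nablaE}--\ref{lm:[W,iB]} to be of size $O(\alpha^{1/2}\sigma)$. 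That extra factor of $\sigma$ is what makes the Mourre bound $\tfrac{\rho\sigma}{128}$ survive as $\sigma\to0$. Your proposal, as written, does not obtain it, and therefore does not establish \eqref{eq:main_LAP} on intervals within distance $O(\alpha^{1/2})$ of $E(P)$.

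Two smaller points: (i) the paper uses the truncated conjugate $B^\sigma$ rather than the full $B$ in the near-threshold region because $B^\sigma$ commutes with $K_\sigma\otimes\mathds 1$, so the commutator sees only the soft-photon degrees of freedom; the full dilation generator would reintroduce $O(1)$ contributions from hard photons; (ii) the passage from $\langle B\rangle^{-s}$ or $\langle B^\sigma\rangle^{-s}$ weights to the $(\d\Gamma(\langle y\rangle)+1)^{-s}$ weights of \eqref{eq:main_LAP} is carried out, as you anticipate, by a boundedness-and-interpolation argument (Corollary~\ref{cor:LAPaway_2}), but it relies on an energy-localization identity $\phi(H(P))=\phi(H(P))\Gamma(\kappa^{2\tilde\Lambda})$ rather than on a comparison of $A_f$ with $\d\Gamma(\langle y\rangle)$ on a ``low-energy sector''.
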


This theorem follows from Corollaries \ref{cor:LAPaway_2} and \ref{cor:LAP_K_2} below. Our proof will show that, if $\mathrm{dist} ( E(P) , J ) = \sigma$ then the constant $\mathrm{C}$ in \eqref{eq:main_LAP} is bounded by $ O( \sigma^{-1} )$. Finding an optimal upper bound on $\mathrm{C}$ with respect to $\sigma$ is beyond the scope of this paper.

As a consequence of Theorem \ref{thm:LAP}, we obtain the following
\begin{corollary}\label{cor:absCont}
There exists $\alpha_0>0$ such that for any $|P| \le p_c$ and $0 \le \alpha \le \alpha_0$, the spectrum of $H(P)$ is purely absolutely continuous in the interval $(E(P) , + \infty )$.
\end{corollary}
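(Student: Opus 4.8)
The plan is to derive Corollary \ref{cor:absCont} from Theorem \ref{thm:LAP} by the classical argument linking the limiting absorption principle to absolute continuity of the spectrum. First I would fix $|P| \le p_c$, $0 \le \alpha \le \alpha_0$, and a single exponent $s \in (1/2,1]$, and set $G := ( \d \Gamma ( \langle y \rangle ) + 1 )^{-s}$. Since $\d \Gamma ( \langle y \rangle ) \ge 0$, the operator $G$ is bounded, self-adjoint, strictly positive and injective, so its range $\mathcal{D} := G \mathcal{F}$ is a dense subspace of $\mathcal{F}$. For $\psi \in \mathcal{F}$ let $\mu_\psi( B ) := \| \mathbf{1}_B( H(P) ) \psi \|^2$ denote the spectral measure of $H(P)$ associated with $\psi$. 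The set $\{ \psi \in \mathcal{F} : \mathbf{1}_{(E(P),\infty)}( H(P) ) \psi \in \mathcal{F}_{\mathrm{ac}}( H(P) ) \}$ is the preimage of the closed subspace $\mathcal{F}_{\mathrm{ac}}( H(P) )$ under the bounded operator $\mathbf{1}_{(E(P),\infty)}( H(P) )$, hence is itself a closed subspace; it therefore equals $\mathcal{F}$ as soon as it contains the dense set $\mathcal{D}$, and ``$= \mathcal{F}$'' is precisely the statement that $H(P)$ is purely absolutely continuous on $(E(P),\infty)$. Since $(E(P),\infty)$ is an increasing union of compact intervals, it thus suffices to show: for every compact interval $J \subset (E(P),\infty)$ and every $\psi \in \mathcal{D}$, the restriction of $\mu_\psi$ to the interior of $J$ is absolutely continuous with respect to Lebesgue measure.

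To prove this, fix such a $J$ and write $\psi = G \varphi$ with $\varphi \in \mathcal{F}$. By the spectral theorem, for $\mathrm{Re}\, z \in J$ and $0 < | \mathrm{Im}\, z | \le 1$ one has $\pi^{-1} | \mathrm{Im} \langle \psi , [ H(P) - z ]^{-1} \psi \rangle | = ( P_{| \mathrm{Im}\, z |} * \mu_\psi )( \mathrm{Re}\, z )$, the Poisson integral of $\mu_\psi$ at height $| \mathrm{Im}\, z |$. On the other hand, Theorem \ref{thm:LAP} yields, for all $z \in J_\pm$,
\begin{equation*}
\big| \langle \psi , [ H(P) - z ]^{-1} \psi \rangle \big| = \big| \langle \varphi , G \, [ H(P) - z ]^{-1} G \, \varphi \rangle \big| \le \mathrm{C} \, \| \varphi \|^2 =: M .
\end{equation*}
Combining the two, $( P_\varepsilon * \mu_\psi )( \lambda ) \le M / \pi$ for every $\lambda \in J$ and every $\varepsilon \in (0,1]$. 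The remaining step is the standard fact that a finite positive Borel measure with uniformly bounded Poisson integrals on an interval is absolutely continuous there: for any open $(a,b)$ with $[a,b] \subset J$, Fubini's theorem and dominated convergence give $\mu_\psi( (a,b) ) \le \liminf_{\varepsilon \downarrow 0} \int_a^b ( P_\varepsilon * \mu_\psi )( \lambda ) \, \d \lambda \le (M/\pi)(b-a)$, and by regularity of $\mu_\psi$ this upgrades to $\mu_\psi( E ) \le (M/\pi) | E |$ for every Borel set $E$ contained in the interior of $J$; hence $\mu_\psi$ restricted to the interior of $J$ is absolutely continuous, with density bounded by $M/\pi$.

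I do not expect any genuine obstacle here; the entire substance of the corollary sits in Theorem \ref{thm:LAP}. The only points that need a little care are the density of $\mathrm{Ran}\, G$ (immediate, since $G > 0$ and $\ker G = \{ 0 \}$) and the measure-theoretic passage from uniformly bounded boundary values of the resolvent to absolute continuity of $\mu_\psi$ — the Poisson-integral criterion just recalled, which incidentally also shows that the absolutely continuous density is locally bounded on $(E(P),\infty)$, with a local bound of order $\mathrm{dist}( E(P), J )^{-1}$ consistent with the remark on $\mathrm{C}$ following Theorem \ref{thm:LAP}.
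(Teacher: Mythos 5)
Your proof is correct: the paper states Corollary~\ref{cor:absCont} as an immediate consequence of Theorem~\ref{thm:LAP} without supplying an argument, and you have filled in exactly the classical reasoning that is being left implicit — the uniform bound on $\langle\psi,[H(P)-z]^{-1}\psi\rangle$ for $\psi$ in the dense range of $(\d\Gamma(\langle y\rangle)+1)^{-s}$, the identification of $\pi^{-1}\,\mathrm{Im}\langle\psi,[H(P)-z]^{-1}\psi\rangle$ with the Poisson extension of $\mu_\psi$, the bounded-Poisson-integral criterion for absolute continuity on each compact $J\subset(E(P),\infty)$, and the closed-subspace argument to pass from the dense set to all of $\mathcal{F}$. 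This is the standard route and matches what the paper intends.
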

\,

\noindent\textbf{Physical interpretation of the results}\\
Next, we describe a consequence of  Theorem \ref{thm:LAP}
related to a key physical property of the system.
We consider an initial state consisting of a dressed electron
together with a cloud of additional photons supported in a finite ball centered at the position of the electron.
As we demonstrate below, Theorem \ref{thm:LAP} implies that
asymptotically, as  $t\rightarrow\infty$, all photons disperse to spatial infinity. 

\begin{corollary}
Let  ${\mathcal{S}}:=\{P\in\mathbb{R}^3| \, |P|<p_c\}$,
and let $\Phi=\int_{\mathcal{S}}^\oplus \d P \, \Phi(P)
\in\mathcal{H} = \mathcal{H}_{\mathrm{el}} \otimes \mathcal{F}$ denote an arbitrary state,
satisfying
\begin{equation}
	\| \, (\d\Gamma(\langle y\rangle)+1)^s \, \Phi(P) \, \| \, < \, \infty \,,
\end{equation}
for some $1/2<s \le 1$ and for all  $P\in\mathcal{S}$.
Furthermore, let $f\in C^\infty_0(\mathcal{M}_{a.c.})$
be a smooth, compactly supported function on the set
\begin{equation}
	\mathcal{M}_{a.c.}\, := \{(\lambda,P) \in \mathbb{R}\times \mathcal{S}\, | \, \lambda>E(P)\} \,.
\end{equation}
Finally, let $x_{\mathrm{ph}}$ denote the photon ``position'' operator, $x_{\mathrm{ph}} := x_{\mathrm{el}}+y$. Then
\begin{eqnarray} \label{LD}
	\| \, (\d\Gamma(\langle x_{\mathrm{ph}} - x_{\mathrm{el}} \rangle)+1)^{-s} \, e^{-\i tH} \, f(H,P)\Phi \, \|
	\, \leq \, \mathrm{C}  \, t^{-(s-\frac12)} \,.
\end{eqnarray}
\end{corollary}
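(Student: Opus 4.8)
The plan is to obtain \eqref{LD} from Theorem \ref{thm:LAP} along the classical route: Stone's formula writes $e^{-\i tH}f(H,P)$ as a Fourier integral of the boundary values of the resolvent, and the time decay is produced by the H\"older continuity of those boundary values supplied by Theorem \ref{thm:LAP}, everything being carried out fiberwise in the total momentum $P$ and then integrated over $\mathcal{S}$.

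First I would reduce the claim to a fiberwise estimate. Since $x_{\mathrm{ph}}-x_{\mathrm{el}}=y$, the operator $\d\Gamma(\langle x_{\mathrm{ph}}-x_{\mathrm{el}}\rangle)$ equals $\d\Gamma(\langle y\rangle)$, which under the decomposition $\mathcal{H}=\int_{\mathbb{R}^3}^{\oplus}\mathcal{F}\,\d P$ acts on each fiber $\mathcal{F}$ as the operator appearing in Theorem \ref{thm:LAP}; moreover $e^{-\i tH}=\int^{\oplus}e^{-\i tH(P)}\,\d P$, and, since $H$ commutes with the total momentum, $f(H,P)=\int^{\oplus}g_P(H(P))\,\d P$ with $g_P(\lambda):=f(\lambda,P)\in\mathrm{C}_0^\infty(\mathbb{R})$. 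Setting $B:=(\d\Gamma(\langle y\rangle)+1)^{-s}$ and $\chi:=(\d\Gamma(\langle y\rangle)+1)^s\Phi\in\mathcal{H}$, so that $\Phi(P)=B\chi(P)$, the square of the left-hand side of \eqref{LD} equals $\int_{\mathcal{S}}\|B\,e^{-\i tH(P)}g_P(H(P))\,B\,\chi(P)\|^2\,\d P$, and the integrand vanishes unless $P$ belongs to the compact projection $\mathcal{K}\subset\mathcal{S}$ of $\mathrm{supp}\,f$ onto the momentum variable. Hence it suffices to prove
\begin{equation}\label{eq:fiberwiseLD}
\big\|\,B\,e^{-\i tH(P)}\,g_P(H(P))\,B\,\big\|\,\le\,\mathrm{C}\,(1+|t|)^{-(s-\frac12)}
\end{equation}
for all $P\in\mathcal{K}$, with $\mathrm{C}$ independent of $P$; integrating the square of \eqref{eq:fiberwiseLD} against $\|\chi(P)\|^2$ over $\mathcal{K}$ and using $\|\chi\|_{\mathcal{H}}<\infty$ then gives \eqref{LD} (note $(1+|t|)^{-(s-\frac12)}\le t^{-(s-\frac12)}$ for $t>0$).

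Next, fix $P\in\mathcal{K}$. Applying Stone's formula to $\lambda\mapsto e^{-\i t\lambda}g_P(\lambda)$ and invoking Theorem \ref{thm:LAP} to make sense of the boundary values, one gets
\begin{equation}\label{eq:Stone}
B\,e^{-\i tH(P)}g_P(H(P))\,B=\frac{1}{2\pi\i}\int_{\mathbb{R}}e^{-\i t\lambda}\,G_P(\lambda)\,\d\lambda,\qquad G_P(\lambda):=g_P(\lambda)\big(B[H(P)-\lambda-\i 0]^{-1}B-B[H(P)-\lambda+\i 0]^{-1}B\big).
\end{equation}
Because $\mathrm{supp}\,f$ is a compact subset of the open set $\mathcal{M}_{a.c.}$ and $P\mapsto E(P)$ is continuous (see \cite{Chen,BCFS2,CFP2,FP}), there are $\sigma_0>0$ and $\Lambda_0<\infty$ so that for every $P\in\mathcal{K}$ the set $\mathrm{supp}\,g_P$ is contained in a compact interval $J_P\subset(E(P),\infty)\cap(-\infty,\Lambda_0]$ with $\mathrm{dist}(E(P),J_P)\ge\sigma_0$. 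By Theorem \ref{thm:LAP} and the remark following it, $G_P$ is then a $B(\mathcal{F})$-valued function supported in $J_P$ that is bounded and H\"older continuous of exponent $s-\frac12$, with $\sup$-norm, H\"older constant and $|J_P|$ all bounded uniformly in $P\in\mathcal{K}$. The polynomial decay of the Fourier transform of such a function is classical: the substitution $\lambda\mapsto\lambda-\pi/t$ in \eqref{eq:Stone} yields $2\,B\,e^{-\i tH(P)}g_P(H(P))\,B=\tfrac{1}{2\pi\i}\int_{\mathbb{R}}e^{-\i t\lambda}\big(G_P(\lambda)-G_P(\lambda-\pi/t)\big)\,\d\lambda$, and estimating the integrand by the H\"older modulus of $G_P$ over its (uniformly bounded) support gives \eqref{eq:fiberwiseLD} with a $P$-independent constant. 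This completes the argument.

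The only step that is not entirely routine is the uniformity, over $P\in\mathcal{K}$, of the constant in the limiting absorption principle and of the H\"older constant. This is exactly what the remark after Theorem \ref{thm:LAP} provides — the bound being $O(\mathrm{dist}(E(P),J)^{-1})$ over a fixed bounded energy window — once combined with the compactness of $\mathrm{supp}\,f$ and the continuity of $P\mapsto E(P)$; I expect that checking that the H\"older constant enjoys the same uniformity, by inspection of the proof of Theorem \ref{thm:LAP}, is the main (if minor) point to verify. The remaining ingredients, namely Stone's formula and the decay of the Fourier transform of a compactly supported H\"older function, are standard and require no new input.
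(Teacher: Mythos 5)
Your proof follows essentially the same route as the paper's: Stone's formula expresses the evolution of $f(H,P)\Phi$ as a Fourier integral of the boundary values of the resolvent, and the H\"older continuity of those boundary values supplied by Theorem \ref{thm:LAP} yields the decay rate $t^{-(s-1/2)}$ through the classical estimate for the Fourier transform of a compactly supported H\"older function. The only presentational difference is that you first pass to a fiberwise operator-norm bound on $\mathcal{F}$ and then integrate, whereas the paper works directly in $\mathcal{H}$ via the duality $\sup_{\|\Phi'\|=1}$; both versions rest (implicitly in the paper, explicitly flagged by you) on the uniformity in $P$ of the LAP constant and of the H\"older constant over the compact momentum projection of $\mathrm{supp}\,f$.
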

\begin{proof} Let	$\Phi_f \, := \,  f(H,P) \,  \Phi \,.$ 
The state $\Phi_f\in\mathcal{H}$ can be written as
\begin{equation}
	\Phi_f \, = \, f(H,P)\Phi \, = \, \int_{\mathcal{S}}^\oplus \d P \, f(H(P),P) \, \Phi(P) \,.
\end{equation}
We recall that, in Theorem \ref{thm:LAP}, the observable $y$ is the relative position
of a photon with respect to the electron. We note that
\begin{eqnarray}
	e^{-\i t H}\Phi_{f}
	\, = \, \lim_{\varepsilon\rightarrow0} \frac{1}{2\i\pi}
	\int_{\mathcal{S}} \d P \int \d\lambda \, f(\lambda,P) \, e^{-\i t\lambda}
	{\rm Im}\frac{1}{H(P)-\lambda-\i \varepsilon}   \Phi(P) \, ,
\end{eqnarray}
so that
\begin{align}
&	\| \, (\d\Gamma(\langle x_{\mathrm{ph}} - x_{\mathrm{el}} \rangle)+1)^{-s} \, e^{-\i tH} \, \Phi_{f} \, \| \notag \\
	& =   \, \sup_{\|\Phi'\|=1}\Big| \,
	\lim_{\varepsilon\rightarrow0} \int_{\mathcal{S}} \d P \int \d\lambda \, e^{-\i t\lambda} \,
	f(\lambda,P) \notag\\
	&\qquad	\Big\langle \, \Phi' \, , \, (\d\Gamma(\langle y\rangle)+1)^{-s}
	{\rm Im}\frac{1}{H(P)-\lambda-\i\varepsilon} \Phi(P) 
 \, \Big\rangle \, \Big| \,.
	\label{eq-decaynorm-1}
\end{align}
 Since $f(\lambda,P)$ is supported on the set $\{\lambda>E(P)\}$, Theorem \ref{thm:LAP} implies that the expectation $\langle \cdots\rangle$ in \eqref{eq-decaynorm-1}
is $(s-\frac12)$-H\"older continuous in $\lambda$, for any choice of $\Phi'$, and
for a H\"older constant independent of $\Phi'$, because $\widetilde\Phi(P):=(\d\Gamma(\langle y\rangle)+1)^{ s}\Phi(P) \in \mathcal{F}$.
The Fourier transform $\widehat{g}(t)=\int \d\lambda e^{it\lambda}g(\lambda)$
of an $(s-\frac12)$-H\"older continuous function $g(\lambda)$ satisfies
$|\widehat g(t)|\leq \mathrm{C} t^{-(s-1/2)}$.
 Thus, \eqref{LD} follows.
\end{proof}

As was mentioned above, this corollary implies that photons which are not permanently bound to the electron move out of any bounded domain around the electron with probability one, as time tends to $\infty$.

We consider an observable $A$, given by
a selfadjoint operator on $\mathcal{H}$
which  we assume to satisfy
\begin{equation}
	\|(\d\Gamma(\langle x_{\mathrm{ph}} - x_{\mathrm{el}} \rangle)+1)^s A (\d\Gamma(\langle x_{\mathrm{ph}} - x_{\mathrm{el}} \rangle)+1)^s\| \, < \, \infty \,,
\end{equation}
Then,
\begin{equation}\label{eq-asymprelax-1}
	\lim_{t\rightarrow0}\Big\langle \,  \Phi_f \, , \, e^{ \i tH} \, A
	\, e^{ -\i tH} \, \Phi_f \, \, \Big\rangle
	\, = \, 0 \,.
\end{equation}
Indeed,  we have
\DETAILS{\begin{eqnarray}
	\mathcal{R}(t) \, := \,
	\Big\langle \, e^{-\i tH}\Phi_f \,,\, A \, e^{-\i tH}\Phi_f \Big\rangle \,.
\end{eqnarray}
To bound $\mathcal{R}(t)$}
\begin{align}
	|\Big\langle \, e^{-\i tH}\Phi_f \,,\, A \, e^{-\i tH}\Phi_f \Big\rangle| &\le \|(\d\Gamma(\langle x_{\mathrm{ph}} - x_{\mathrm{el}} \rangle)+1)^{s} A (\d\Gamma(\langle x_{\mathrm{ph}} - x_{\mathrm{el}} \rangle)+1)^{s}\| \notag \\
	&\quad 	\times \| \, (\d\Gamma(\langle x_{\mathrm{ph}} - x_{\mathrm{el}} \rangle)+1)^{-s} \, e^{-\i tH} \, \Phi_{f} \, \|^2 \notag \\
	&\leq \mathrm{C}  \, t^{-2(s-\frac12)}.
\end{align}

More generally, we expect the following picture to hold true.
We assume that $h \in C^\infty( ( - \infty , E_c ) \times \mathcal{S})$, where $E_c=E(P)$ with  $|P|=p_c$, and consider
the state
$$
	\Phi_h \, := \, h(H,P) \, \Phi
$$
where $\Phi\in\mathcal{H}$ is as above.
Let $A=\int^\oplus \d P \, A_P$ denote a translation invariant observable.
Then, we expect that
\begin{equation}
	\lim_{t\rightarrow\infty}\Big\langle \, e^{\i tH}\Phi_h \,,\, A \, e^{\i tH}\Phi_h \, \Big\rangle
	\, = \,  \int_{\mathcal{S}} \d\mu_{\Phi_h}( P) \,
	\Big\langle \Psi_P \, , \, A_P \, \Psi_P \, \Big\rangle \,,
\end{equation}
where ${\rm supp}\{\d \mu_{\Phi_h}\}\subseteq \mathcal{S}$.
Here, $\langle \, \Psi_P \, ,  ( \, \cdot \, ) \, \Psi_P \, \rangle$ denotes an
expectation in the generalized ground state of the fiber Hamiltonian $H(P)$.
This describes the relaxation of the state $\Phi_h$ to the mass shell,
asymptotically as $t\rightarrow\infty$,
under emission of photons which disperse to spatial infinity. (Note that, for $P \neq 0$, $\Psi_P$ does not belong to the Fock space, but to a Hilbert space carrying an infrared representation of the canonical commutation relations.)
\\

\noindent \textbf{Strategy of the proof} \\
The proof of Theorem \ref{thm:LAP} is divided into two steps. First, we prove a LAP in any compact interval $J \subset (E(P),\infty)$ with the property that $\inf J \ge E(P) + \mathrm{C}_0 \alpha^{1/2}$, where $\mathrm{C}_0$ is a fixed, sufficiently large, positive constant.  This follows from a Mourre estimate of the form
\begin{equation}\label{eq:strategy_Mourre1}
\mathds{1}_J(H(P)) [ H(P) , \i B ] \mathds{1}_J( H(P) ) \ge \mathrm{c} \mathds{1}_J( H(P) ),
\end{equation}
where $B$ is the generator of dilatations on Fock space (see Equation \eqref{B}) and $\mathrm{c}$ is positive. Using the assumption that $\inf J \ge E(P) + \mathrm{C}_0 \alpha^{1/2}$ and standard estimates, Equation \eqref{eq:strategy_Mourre1} can be proven in a straightforward way.

In a second, more difficult step, we prove a limiting absorption principle near $E(P)$. We use a theorem due to \cite{FGS3} (see Theorem \ref{thm:LAPtransfer} in Appendix \ref{appendix:Feshbach} of the present paper), which essentially says that one can derive a LAP for $H(P)$ from a LAP for an operator resulting from applying a smooth Feshbach-Schur map to $H(P)$.

Our construction of the smooth Feshbach-Schur map is based on a \emph{low-energy decomposition} of the Hamiltonian $H(P)$:
\begin{equation}
H(P)=H_\sigma(P) +U_\sigma(P),
\end{equation}
where  $\sigma \ge 0$,  $U_\sigma(P)$ is defined by this equation and the infrared cutoff Hamiltonian $H_\sigma(P)$, $\sigma \ge 0$, is given by
\begin{equation}\label{eq-HsigmaP-def-0}
H_\sigma(P) := \frac{1}{2} ( P - P_f - \alpha^{\frac{1}{2}} A_\sigma )^2 + H_f,
\end{equation}
for every $P \in \mathbb{R}^3$, with
\begin{equation}
A_\sigma :=  \frac{1}{\sqrt{2}} \sum_{\lambda=1,2} \int_{ \{ |k| \ge \sigma \} } \frac{ \kappa^\Lambda(k) }{ |k|^{\frac{1}{2}}} \varepsilon_\lambda(k) ( a^*_\lambda(k) + a_\lambda(k) ) \d k,
\end{equation}
(see Section \ref{section:decomp}).
Note that $H_{\sigma}(P)$ is cutoff in the infrared, with the property that photons of energy less than $\sigma$ do not interact with the electron.
Such a decomposition was used previously in the analysis of non-relativistic QED; (see, e.g., \cite{BFP,FGS1}).

Next we use the fact that the Hilbert space $\mathcal{F}$ is unitarily equivalent to $\mathcal{F}_\sigma \otimes \mathcal{F}^\sigma$ where $\mathcal{F}_\sigma := \Gamma_s( \mathrm{L}^2( \{ (k,\lambda) , |k| \ge \sigma \} ) )$ and $\mathcal{F}^\sigma := \Gamma_s( \mathrm{L}^2 ( \{ (k,\lambda)  , |k| \le \sigma \} ) )$. Below we will use this representation without always mentioning it.
The operator $H_\sigma(P)$ leaves invariant the Fock space $\mathcal{F}_\sigma $ of photons of energies larger than $\sigma$, and its restriction to $\mathcal{F}_\sigma$,
\begin{equation}
K_\sigma(P) := H_\sigma(P) |_{\mathcal{F}_\sigma},
\end{equation}
  has a gap of order $O(\sigma)$ in its spectrum above the ground state energy. We use the projection, $P_\sigma(P)$, onto the ground state of $K_\sigma(P)$ in order to construct the smooth Feshbach-Schur map $F_\chi$, where $\chi = P_\sigma(P) \otimes \chi^\sigma_f(H_f)$, with $\chi^\sigma_f(H_f)$ a smooth ``projection'' onto the spectral subspace $H_f \le \sigma$; (see Section \ref{section:Feshbach}). This map projects out the degrees of freedom corresponding to photons of energies larger than $\sigma$. The resulting operator $F (\lambda):=F_\chi(H(P)-\lambda)$, where $\lambda$ is the spectral parameter, is of the form
  \begin{align}
F(\lambda) =& K_\sigma(P) \otimes\mathds{1} + \mathds{1} \otimes \big( \frac{1}{2} P_f^2 + H_f \big ) -  \nabla E_\sigma(P) \otimes P_f  - \lambda +W,
\end{align}
where the operator $W$ can be estimated by  $O(\alpha^{1/2}\sigma)$, and $E_\sigma(P) := \inf \sigma( H_\sigma(P) )$.

Next, in order to obtain a LAP for $F(\lambda)$, we use again Mourre's theory, choosing a conjugate operator $B^\sigma$ defined as the generator of dilatations with a cutoff in the photon momentum variable:
\begin{align}\label{eq:Bsigma'}
& B^\sigma :=  \sum_{\lambda=1,2} \int_{\mathbb{R}^3}  a^*_\lambda(k) \kappa^\sigma b \kappa^\sigma a_\lambda( k) \d k,
\end{align}
with $\kappa^\sigma$ a cutoff in the photon momentum variable defined in \eqref{eq:kappaLambda}, and $ b := \frac{ \i }{2} ( k \cdot \nabla_k + \nabla_k \cdot k )$, the generator of dilatations (see Section \ref{section:mourre}). Let $\lambda$ be in the interval
\begin{equation}
J_\sigma^< := [ E(P) + 11 \rho \sigma / 128 , E(P) + 13 \rho \sigma / 128 ],
\end{equation}
where $\sigma$ satisfies $\sigma \le \mathrm{C}'_0 \alpha^{1/2}$ for some fixed, sufficiently large positive constant $\mathrm{C}'_0 \ge \mathrm{C}_0$, and $\rho \sigma$ is the size of the gap above $E_\sigma(P)$ in the spectrum of $K_\sigma(P)$. The Mourre estimate for $F(\lambda)$, on the spectral interval $\Delta_\sigma = [ - \rho \sigma / 128 , \rho \sigma / 128 ]$, is established as follows. By energy localization and the facts that the operator $K_\sigma(P)$ commutes with $B^\sigma$ and that $| \nabla E_\sigma(P) | \le |P| + \mathrm{C} \alpha \le 1/4$ for $|P| \le 1/40$ and $\alpha$ sufficiently small, the commutator of the unperturbed part in $F(\lambda)$ with $B^\sigma$ gives  a positive term of order $O(\sigma)$. This and the fact that the commutator with the perturbation $W$ is of order $O(\alpha^{1/2}\sigma)$ yield the Mourre estimate, and therefore the LAP, for $F(\lambda)$.  Once the LAP is established for $F(\lambda)$, it is transferred by the theorem of \cite{FGS3}, mentioned above, to the original Hamiltonian $H(P)$ on the interval $ J_\sigma^<$.  Finally, we use that the intervals above with $\sigma \le \mathrm{C}'_0 \alpha^{1/2}$ cover the interval $(E(P), \mathrm{C}_0\alpha^{1/2}]$.   \\

\noindent \textbf{Organization of the paper} \\
Our paper is organized as follows. In the next section, we prove the LAP for $H(P)$ outside a certain neighborhood of the ground state energy. Section \ref{section:decomp} is concerned with the approximation of $H(P)$ by the infrared cutoff Hamiltonian $H_\sigma(P)$. In Section \ref{section:Feshbach}, we prove the existence of the Feshbach-Schur operator $F(\lambda)$ mentioned above. We establish the Mourre estimate for $F(\lambda)$ in Section \ref{section:mourre}, from which we deduce the LAP for $H(P)$ near $E(P)$. In Appendix \ref{appendix:estimates}, we collect some technical estimates used in Sections \ref{section:Feshbach} and \ref{section:mourre}. Appendix \ref{appendix:Feshbach} recalls the definition of the smooth Feshbach-Schur map and some of its main properties. In Appendix \ref{appendix:bound}, we briefly explain how to adapt the method used in this paper to a model of bound non-relativistic electrons coupled to the radiation field. Finally, for the convenience of the reader, a list of notations used in this paper is contained in Appendix \ref{appendix:notations}.

Throughout the paper, $\mathrm{C}, \mathrm{C}', \mathrm{C}''$ will denote positive constants that may differ from one line to another.
\\

\noindent \textbf{Acknowledgements} \\
J.Fa., I.M.S., and T.C. are grateful to J.Fr. for hospitality at ETH Z{\"u}rich. T.C. thanks I.M.S. for his hospitality at the University of Toronto. The authors acknowledge the support of the Oberwolfach Institute. A part of this work was done during I.M.S.'s stay at the IAS, Princeton. Research of I.M.S. is supported by NSERC under Grant NA 7901. T.C. was supported by NSF grant DMS-070403/DMS-0940145.

\section{Limiting absorption principle outside a neighborhood of the ground state energy}\label{section:LAP_away}

In this section we shall prove Theorem \ref{thm:LAP} for any interval $J$ of the form $$J = J_\sigma^> := E(P) + [ \sigma , 2 \sigma ],$$ where the parameter $\sigma$ is chosen to satisfy $\sigma \ge \mathrm{C}_0 \alpha^{\frac{1}{2}}$, for some fixed positive constant $\mathrm{C}_0$. Our proof is based on the standard Mourre theory (\cite{Mo}), the conjugate operator $B$ being chosen as the generator of dilatations  on $\mathcal{F}$, i.e.,
\begin{equation} \label{B}
B := \d \Gamma ( b ), \quad\text{with}\quad b := \frac{ \i }{2} ( k \cdot \nabla_k + \nabla_k \cdot k ).
\end{equation}
One can easily check that
\begin{equation}\label{eq:[Hf,iB]_away}
[ H_f , \i B ] = H_f, \quad [ P_f , \i B ] = P_f,
\end{equation}
and, for any $f \in D(b)$,
\begin{equation}\label{eq:[Phi,iB]_away}
[ \Phi(f) , \i B ] = - \Phi ( \i b f ),
\end{equation}
where
\begin{equation}
\Phi(h) := \frac{1}{\sqrt{2}} ( a^*(h) + a(h) ),
\end{equation}
so that
\begin{equation}
A = \Phi ( h ), \quad h( k,\lambda ) := \frac{ \kappa^\Lambda(k) }{ |k|^{\frac{1}{2} } } \varepsilon_\lambda(k).
\end{equation}
\begin{theorem}\label{thm:mourre_away}
There exist constants $\alpha_0 > 0$ and $\mathrm{C}_0 > 0$ such that, for all $|P| \le p_c$, $0 \le \alpha \le \alpha_0$ and $\sigma \ge \mathrm{C}_0 \alpha^{1/2}$,
\begin{equation}
\mathds{1}_{J_\sigma^>} ( H(P) ) [ H(P) , \i B ] \mathds{1}_{J_\sigma^> } ( H(P) ) \ge \frac{\sigma}{2} \mathds{1}_{J_\sigma^> } ( H(P) ).
\end{equation}
\end{theorem}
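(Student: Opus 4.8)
The plan is to compute the commutator $[H(P),\i B]$ explicitly using the elementary commutation relations \eqref{eq:[Hf,iB]_away}--\eqref{eq:[Phi,iB]_away}, and then show that, after localizing in energy to the interval $J_\sigma^>$, it is bounded below by $\tfrac{\sigma}{2}$ times the spectral projection. First I would write $H(P)=\tfrac12(P-P_f-\alpha^{1/2}A)^2+H_f$ and observe that $[P-P_f-\alpha^{1/2}A,\i B]=-P_f-\alpha^{1/2}\Phi(\i b h)=:-P_f+\alpha^{1/2}\tilde A$, so that
\begin{align}\label{eq:commutator_expansion}
[H(P),\i B] = H_f + \tfrac12\big\{(P-P_f-\alpha^{1/2}A),(-P_f+\alpha^{1/2}\tilde A)\big\} =: H_f + R,
\end{align}
where $R$ is the symmetrized (anticommutator) remainder. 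The leading term $H_f$ is exactly what we want: on $J_\sigma^>$ the field energy is essentially bounded below by $\sigma$, since $H(P)\ge H_f + E(P) - O(\alpha^{1/2})$ by a standard relative-bound argument (and for $\alpha=0$, $H(P)=\tfrac12(P-P_f)^2+H_f$ forces $H_f$ large when the total energy exceeds $E(P)+\sigma$). The whole point of the hypothesis $\sigma\ge \mathrm{C}_0\alpha^{1/2}$ is to absorb all the error terms.

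The key steps, in order: (i) Derive \eqref{eq:commutator_expansion} rigorously as a quadratic-form identity on a suitable core (e.g. finite-particle vectors with smooth compactly supported components), checking that all the operators involved — $P_f$, $\Phi(h)$, $\Phi(\i bh)$, $H_f$ — are well-defined there and that $\i bh = \i b\big(\kappa^\Lambda(k)|k|^{-1/2}\varepsilon_\lambda(k)\big)$ still lies in $\mathrm L^2$ (this uses the ultraviolet cutoff \eqref{eq:kappaLambda} and that $b$ only introduces a factor $k\cdot\nabla_k$, which is harmless near $k=0$ on the weight $|k|^{-1/2}$). (ii) Bound the remainder $R$: each term in the anticommutator is a product of a factor of size $O(P_f)=O(H_f^{1/2}\cdot H_f^{1/2})$ or $O(1)$ with a factor of size $O(\alpha^{1/2}(N+1)^{1/2})$ or $O(P_f)$; using standard $N_\tau$-estimates and the operator inequalities $P_f^2\le \mathrm{C}(H_f+1)H_f$, $\|\Phi(g)\psi\|\le \mathrm{C}\|g/\sqrt{|k|}\|\,\|(H_f+1)^{1/2}\psi\|$, and $N\le\mathrm{C} H_f+\mathrm{C}$ on the relevant spectral subspace, one shows $\pm R\le \mathrm{C}\alpha^{1/2}(H_f+1)+\tfrac12 H_f$ as quadratic forms, or more precisely $\pm R \le \tfrac12 H_f + \mathrm{C}\alpha^{1/2}H_f + \mathrm{C}\alpha^{1/2}$. (iii) Combine: $[H(P),\i B]\ge \tfrac12 H_f - \mathrm{C}\alpha^{1/2}H_f - \mathrm{C}\alpha^{1/2}$. (iv) Localize: multiply on both sides by $\mathds 1_{J_\sigma^>}(H(P))$ and use that on this range $H_f\ge \sigma - \mathrm{C}\alpha^{1/2}$ (from $H(P)-E(P)\ge \sigma$ together with $H(P)-E(P)\le H_f + \mathrm{C}\alpha^{1/2}(H_f+1)$, i.e. $H_f \ge \frac{H(P)-E(P)-\mathrm{C}\alpha^{1/2}}{1+\mathrm{C}\alpha^{1/2}}$), to get a lower bound of the form $\big(\tfrac12 - \mathrm{C}\alpha^{1/2}\big)(\sigma - \mathrm{C}\alpha^{1/2}) - \mathrm{C}\alpha^{1/2} \ge \tfrac{\sigma}{2}$ once $\mathrm{C}_0$ is large and $\alpha_0$ small, using $\sigma\ge\mathrm{C}_0\alpha^{1/2}$ so that $\alpha^{1/2}\le \sigma/\mathrm{C}_0$.

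I expect the main obstacle to be step (i)–(ii): making the commutator computation and the bound on $R$ genuinely rigorous rather than formal. The operator $B=\d\Gamma(b)$ is unbounded and one must verify that $H(P)$ is of class $C^1(B)$ (or at least that the commutator form extends to the form domain and the Virial-type identity holds), which requires controlling $[\ (P-P_f-\alpha^{1/2}A)^2\ ,\ \i B]$ — involving a double commutator and products like $P_f\cdot\Phi(\i bh)$ — on a common core and checking invariance of $D(H(P))$ (or of the relevant quadratic form domain) under $e^{\i tB}$. The cleanest route is to note that $e^{\i tB}$ implements the dilation $k\mapsto e^t k$ on the one-photon space, so its action on $H(P)$ is explicit, and to use the standard criteria (e.g. from \cite{ABG} or the Appendix of \cite{FGS1,FGS3}) for $C^1$ or $C^2$ regularity; the analytic estimates then reduce to the $N_\tau$-bounds already invoked. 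Everything else is a bookkeeping exercise with relative bounds, and the hypothesis $\sigma\ge\mathrm{C}_0\alpha^{1/2}$ is precisely the ``budget'' that makes the final positivity work.
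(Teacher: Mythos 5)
Your overall strategy — compute $[H(P),\i B]$ explicitly via \eqref{eq:[Hf,iB]_away}--\eqref{eq:[Phi,iB]_away}, bound the perturbative terms by $O(\alpha^{1/2})$, energy-localize, and close using $\sigma \ge \mathrm{C}_0\alpha^{1/2}$ — is indeed what the paper does, and your remarks about verifying $C^2(B)$-regularity to make the virial argument rigorous match Corollary \ref{cor:LAPaway}. The difference lies in which operator you compare the commutator with, and it is here that your proposal develops gaps. You write $[H(P),\i B]=H_f+R$ and then try to bound $R$ by a fraction of $H_f$; the paper instead observes, comparing the commutator expression \eqref{eq:[H(P),iB]} with the expansion \eqref{eq:H(P)_1}, that $[H(P),\i B]\ge H(P)-\tfrac12 P^2 -O(\alpha^{1/2})$, where the $O(\alpha^{1/2})$ terms are relatively $H(P)$-bounded. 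Localizing then yields $\sigma-O(\alpha^{1/2})$ directly, because on $J_\sigma^>$ one has $H(P)\ge E(P)+\sigma$ and, by Proposition \ref{prop:Esigma(P)}, $|E(P)-\tfrac12 P^2|\le \mathrm{C}\alpha$. No lower bound on $H_f$ over the spectral subspace is needed.

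Your route requires exactly such a lower bound, and the inequality you invoke for it, $H(P)-E(P)\le H_f+\mathrm{C}\alpha^{1/2}(H_f+1)$, is false as a form inequality: expanding as in \eqref{eq:H(P)_1}, $H(P)-E(P)$ contains the terms $-P\cdot P_f+\tfrac12 P_f^2$, and the $\tfrac12 P_f^2$ contribution is quadratic in $H_f$ and cannot be absorbed into $H_f+\mathrm{C}\alpha^{1/2}(H_f+1)$ without first establishing an upper bound $H_f\lesssim\sigma$ on $\mathrm{Ran}\,\mathds{1}_{J_\sigma^>}(H(P))$ (which needs its own argument, e.g.\ from $\mathds{1}\otimes H_f\le 2(H_\sigma-E_\sigma)$). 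Relatedly, the claim $\pm R\le\tfrac12 H_f+\ldots$ cannot hold in both signs — $R$ contains $+P_f^2$ — though only the lower bound $R\ge -|P|H_f - O(\alpha^{1/2})(\ldots)$ is actually needed, and that one does hold since $P_f^2\ge 0$ and $|P\cdot P_f|\le|P|H_f$. Finally, your budget in step (iv) does not close: $(\tfrac12-\mathrm{C}\alpha^{1/2})(\sigma-\mathrm{C}\alpha^{1/2})-\mathrm{C}\alpha^{1/2}<\tfrac{\sigma}{2}$ for all $\alpha>0$; you would need the sharper coefficient $1-|P|\ge 39/40$ rather than $\tfrac12$ in front of $H_f$, together with the correct (quadratic-corrected) lower bound on $H_f$, to reach $\tfrac{\sigma}{2}$. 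The paper's rewriting of the commutator in terms of $H(P)-\tfrac12 P^2$ avoids all of this bookkeeping and is the key idea missing from your proposal.
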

\begin{proof}
Note that $H(P)$ can be written as
\begin{align}
H(P) =& \frac{1}{2} P^2 + \frac{1}{2} P_f^2 + H_f - P \cdot P_f - \alpha^{\frac{1}{2}} P \cdot \Phi (h)  \notag \\
&+ \frac{ \alpha^{\frac{1}{2}} }{2} \big ( \Phi( h ) \cdot P_f  + P_f \cdot \Phi(h) \big ) + \frac{ \alpha }{2} \Phi(h)^2. \label{eq:H(P)_1}
\end{align}
It follows from \eqref{eq:[Hf,iB]_away} and \eqref{eq:[Phi,iB]_away} that
\begin{align}
[ H(P) , \i B ]  =& - \frac{1}{2} \big ( P - P_f - \alpha^{\frac{1}{2}} \Phi(h) \big ) \cdot \big ( P_f - \alpha^{\frac{1}{2}} \Phi( \i b h ) \big ) \notag \\
&- \frac{1}{2} \big ( P_f - \alpha^{\frac{1}{2}} \Phi( \i b h ) \big ) \cdot \big ( P - P_f - \alpha^{\frac{1}{2}} \Phi(h) \big ) + H_f. \label{eq:[H(P),iB]}
\end{align}
By \eqref{eq:H(P)_1} we get
\begin{align}
[ H(P) , \i B ] \ge& H(P) - \frac{1}{2} P^2 + \alpha^{\frac{1}{2}} P \cdot \big ( \Phi( h ) + \Phi ( \i b h ) \big ) - \frac{ \alpha }{2} \Phi(h)^2 \notag \\
& - \frac{ \alpha^{\frac{1}{2}} }{2} \big ( \Phi( \i b h ) \cdot ( P_f + \alpha^{\frac{1}{2}} \Phi(h) ) + ( P_f + \alpha^{\frac{1}{2}} \Phi(h) ) \cdot \Phi ( \i b h ) \big ). \label{eq:mourre_away3}
\end{align}
Multiplying both sides of Inequality \eqref{eq:mourre_away3} by $\mathds{1}_{J_\sigma^>}(H(P))$, using in particular that $\Phi(h)$ and $\Phi( \i b h )$ are $H(P)$-bounded, this yields
\begin{equation}
\mathds{1}_{J_\sigma^>} ( H(P) ) [ H(P) , \i B ] \mathds{1}_{J_\sigma^> } ( H(P) ) \ge \big ( E(P) - \frac{1}{2} P^2 + \sigma - \mathrm{C} \alpha^{ \frac{1}{2} } \big ) \mathds{1}_{J_\sigma^> } ( H(P) ).
\end{equation}
Since, by Proposition \ref{prop:Esigma(P)}, $ | E(P) - P^2/2 | \le \mathrm{C}' \alpha $, we obtain
\begin{align}
\mathds{1}_{J_\sigma^>} ( H(P) ) [ H(P) , \i B ] \mathds{1}_{J_\sigma^> } ( H(P) ) & \ge \big ( \sigma - \mathrm{C}'' \alpha^{ \frac{1}{2} } \big ) \mathds{1}_{J_\sigma^>} ( H(P) ) \notag \\
& \ge \frac{ \sigma }{2} \mathds{1}_{J_\sigma^>} ( H(P) ),
\end{align}
provided that $ \sigma \ge \mathrm{C}_0 \alpha^{1/2}$, the constant $\mathrm{C}_0$ being chosen sufficiently large.
\end{proof}
\begin{corollary} \label{cor:LAPaway}
There exists $\alpha_0>0$ such that, for any $|P| \le p_c$, $0 \le \alpha \le \alpha_0$ and $1/2 < s \le 1$, and for any compact interval $J \subset [ E(P) + \mathrm{C}_0 \alpha^{1/2} , \infty )$,
\begin{equation}
\sup_{z \in J_{\pm}} \big \| \langle B \rangle^{- s} \big [ H(P) - z \big ]^{-1} \langle B \rangle^{-s} \big \| < \infty.
\end{equation}
Here $\mathrm{C}_0>0$ is given by Theorem \ref{thm:mourre_away}. Moreover, the map
\begin{equation}
J \ni \lambda \mapsto \langle B \rangle^{- s} \big [ H(P) -  \lambda \pm \i 0 ]^{-1} \langle B \rangle^{-s} \in B( \mathcal{H} )
\end{equation}
is uniformly H\"older continuous in $\lambda$ of order $s-1/2$.
\end{corollary}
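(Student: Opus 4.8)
The plan is to derive Corollary \ref{cor:LAPaway} from the Mourre estimate of Theorem \ref{thm:mourre_away} by invoking the abstract conjugate-operator theory (as in \cite{Mo,PSS,ABG,HS}). The abstract machinery requires three inputs: (i) a strict Mourre estimate on the relevant energy interval with a compact-operator remainder that vanishes here (the estimate in Theorem \ref{thm:mourre_away} is already of strict, remainder-free type); (ii) suitable regularity of $H(P)$ with respect to the conjugate operator $B=\d\Gamma(b)$, typically $H(P)\in C^{1,1}(B)$ or $C^2(B)$; and (iii) self-adjointness of $H(P)$ and the fact that the form domain is preserved by the unitary group $e^{\i tB}$. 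Once these are in place, the standard Mourre theorem yields both the uniform bound $\sup_{z\in J_\pm}\|\langle B\rangle^{-s}[H(P)-z]^{-1}\langle B\rangle^{-s}\|<\infty$ and the H\"older continuity of order $s-1/2$ of the boundary values, for any $1/2<s\le1$ and any compact $J$ contained in the region where the Mourre estimate holds, i.e. $J\subset[E(P)+\mathrm{C}_0\alpha^{1/2},\infty)$.

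Concretely, I would proceed as follows. First, since $J$ is compact and contained in $[E(P)+\mathrm{C}_0\alpha^{1/2},\infty)$, cover $J$ by finitely many intervals of the form $J_\sigma^>=E(P)+[\sigma,2\sigma]$ with $\sigma\ge\mathrm{C}_0\alpha^{1/2}$ (for instance dyadic scales $\sigma=2^{-j}\cdot$const chosen so the intervals overlap); on each such interval Theorem \ref{thm:mourre_away} gives $\mathds{1}_{J_\sigma^>}(H(P))[H(P),\i B]\mathds{1}_{J_\sigma^>}(H(P))\ge(\sigma/2)\mathds{1}_{J_\sigma^>}(H(P))$, a bona fide Mourre estimate with positive constant $\sigma/2$ and no compact remainder. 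Second, I would verify the commutator regularity: one checks using \eqref{eq:[Hf,iB]_away} and \eqref{eq:[Phi,iB]_away} that the repeated commutators $[H(P),\i B]$ and $[[H(P),\i B],\i B]$ extend to operators bounded relative to $H(P)$ (the first commutator is given explicitly in \eqref{eq:[H(P),iB]}, and the second is computed similarly using that $b$ maps $h$ and $bh$ into $D(b)$ again, since $h$ and all its dilatation derivatives lie in $L^2$ by the ultraviolet cutoff \eqref{eq:kappaLambda}); this gives $H(P)\in C^2(B)\subset C^{1,1}(B)$. One also notes that $e^{\i tB}$ preserves $D(H_f^{1/2})$, which is the form domain of $H(P)$, because $B$ is the generator of dilatations and commutes nicely with $H_f$ and $P_f$ in the sense of \eqref{eq:[Hf,iB]_away}. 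Third, with (i)-(iii) verified, I would quote the abstract LAP theorem (e.g. \cite{ABG} or \cite[Thm.]{HS}) to conclude the uniform resolvent bound and H\"older continuity on each $J_\sigma^>$, and then patch the finitely many intervals together to cover $J$.

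The main obstacle is step two, the verification of the $B$-regularity of $H(P)$ together with the invariance of the form domain under $e^{\i tB}$. The term $\frac12 P_f^2$ and the quadratic interaction term $\frac{\alpha}{2}\Phi(h)^2$, as well as the cross term $\frac{\alpha^{1/2}}{2}(\Phi(h)\cdot P_f+P_f\cdot\Phi(h))$, require some care: one must check that their commutators with $B$ are again controlled by $H(P)$ and that no unbounded-from-below contributions arise. The computation of $[H(P),\i B]$ already done in the proof of Theorem \ref{thm:mourre_away} does most of the work; iterating it once more, and bookkeeping the field operators $\Phi(h)$, $\Phi(\i bh)$, $\Phi((\i b)^2h)$ and their $H(P)$-bounds via standard $N_\tau$-type estimates, finishes it. Technically one wants to work with the quadratic form of $[H(P),\i B]$ on $D(H_f^{1/2})$ and invoke the form version of Mourre's theorem; this is the standard way such estimates are handled in non-relativistic QED and presents no conceptual difficulty, only a moderate amount of careful estimation. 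All the field-operator bounds needed are the elementary ones ($\|\Phi(g)\psi\|\lesssim\|g/|k|^{1/2}\|\,\|(H_f+1)^{1/2}\psi\|+\|g\|\,\|\psi\|$), and they apply verbatim to $h$, $\i bh$, and their higher dilatation-derivatives since the ultraviolet cutoff keeps everything in $L^2$.
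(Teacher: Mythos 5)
Your proposal follows essentially the same route as the paper: invoke the abstract conjugate-operator theory on the basis of the Mourre estimate of Theorem \ref{thm:mourre_away}, and reduce the matter to verifying $H(P)\in C^2(B)$ via the explicit commutator formula \eqref{eq:[H(P),iB]} and invariance of the domain under $e^{\i tB}$ (the paper cites \cite[Proposition 9]{FGS1} for the latter). The only cosmetic difference is that you phrase the domain invariance in terms of the form domain while the paper uses the operator domain, but this is the same standard hypothesis in the Mourre framework.
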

\begin{proof}
Using the well-known conjugate operator method (see \cite{Mo}, \cite{ABG}), it suffices to show that $H(P) \in \mathrm{C}^2(B)$. Since $D(H(P)) = D(P_f^2/2+H_f)$, one can check, in the same way as in \cite[Proposition 9]{FGS1}, that for all $t \in \mathbb{R}$,
\begin{equation}
e^{\i t B } D( H(P) ) \subset D( H(P) ).
\end{equation}
Therefore, in order to obtain the $\mathrm{C}^2$-property of $H(P)$ with respect to $B$, it is sufficient to verify that $[ H(P) , \i B ]$ and $[ [ H(P) , \i B ] , \i B ]$ extend to $H(P)$-bounded operators. This follows easily from the expression of the commutator of $H(P)$ with $\i B$, Equation \eqref{eq:[H(P),iB]}, and by computing similarly the double commutator $[ [ H(P) , \i B ], \i B]$.
\end{proof}
\begin{corollary} \label{cor:LAPaway_2}
Under the conditions of Corollary \ref{cor:LAPaway},
\begin{equation}\label{cor:LAPaway_2_eq1}
\sup_{z \in J_{\pm}} \big \| ( \d \Gamma (\langle y \rangle) + 1)^{- s} \big [ H(P) - z \big ]^{-1} ( \d \Gamma (\langle y \rangle) + 1)^{-s} \big \| < \infty,
\end{equation}
and the map
\begin{equation}\label{cor:LAPaway_2_eq2}
J \ni \lambda \mapsto ( \d \Gamma (\langle y \rangle) + 1)^{- s} \big [ H(P) - \lambda \pm \i 0 \big ]^{-1} ( \d \Gamma (\langle y \rangle) + 1)^{-s} \in B( \mathcal{H} )
\end{equation}
is uniformly H\"older continuous in $\lambda$ of order $s-1/2$.
\end{corollary}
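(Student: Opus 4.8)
The plan is to deduce both assertions of Corollary~\ref{cor:LAPaway_2} from Corollary~\ref{cor:LAPaway} by exchanging the weight $\langle B\rangle^{-s}$ for $(\d\Gamma(\langle y\rangle)+1)^{-s}$, bridging the two by an energy localization. Fix $g\in\mathrm{C}_0^\infty(\mathbb{R})$ with $g\equiv 1$ on a neighbourhood of the compact interval $J$. By the resolvent identity, $[H(P)-z]^{-1}=g(H(P))[H(P)-z]^{-1}g(H(P))+R_g(z)$, where $R_g(z):=(1-g(H(P)))[H(P)-z]^{-1}+g(H(P))[H(P)-z]^{-1}(1-g(H(P)))$ is, for $\mathrm{Re}\,z\in J$, a bounded operator depending analytically on $z$ across $J$ (because $(1-g(\lambda))(\lambda-z)^{-1}$ is smooth in $\lambda$ there, $g$ being $\equiv1$ near $z$); hence the contribution of $R_g(z)$ to \eqref{cor:LAPaway_2_eq1}--\eqref{cor:LAPaway_2_eq2} is trivially uniformly bounded and H\"older. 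For the remaining term I would write
\begin{equation*}
(\d\Gamma(\langle y\rangle)+1)^{-s}g(H(P))[H(P)-z]^{-1}g(H(P))(\d\Gamma(\langle y\rangle)+1)^{-s}=T\big(\langle B\rangle^{-s}[H(P)-z]^{-1}\langle B\rangle^{-s}\big)T^{*},
\end{equation*}
with $T:=(\d\Gamma(\langle y\rangle)+1)^{-s}g(H(P))\langle B\rangle^{s}$. Granted Corollary~\ref{cor:LAPaway}, the whole statement then reduces to the single bound
\begin{equation}\label{eq:weight-comp}
\big\|\langle B\rangle^{s}\,g(H(P))\,(\d\Gamma(\langle y\rangle)+1)^{-s}\big\|<\infty ;
\end{equation}
and, since $T$ does not depend on $z$, the uniform H\"older continuity of order $s-1/2$ is then inherited verbatim from Corollary~\ref{cor:LAPaway}.

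The proof of \eqref{eq:weight-comp} is the only genuine work, and the step I expect to be the main obstacle: $B=\d\Gamma(b)$ and $\d\Gamma(\langle y\rangle)$ are \emph{not} comparable as self-adjoint operators, since the generator $b$ of \eqref{B} couples the photon momentum $k$ (which is unbounded) to $\nabla_k=-\i y$, whereas $\langle y\rangle$ does not see $k$ at all; it is therefore essential that the energy cut-off $g(H(P))$ be used to control the photon momenta. Concretely, I would proceed in four steps. First, from $|B|\le\d\Gamma(|b|)$ together with one-particle estimates on $b$ and the elementary inequality $\d\Gamma(|k|^2)\le(\d\Gamma(|k|))^2=H_f^2$, one obtains, as quadratic forms, $\langle B\rangle^2\le\mathrm{C}\,H_f^2+\mathrm{C}\,(\d\Gamma(\langle y\rangle)+1)^2$. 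Second, one absorbs the factor $H_f^{m}g(H(P))$ into a bounded operator using the higher-order estimates $\|(N+1)^{m}g(H(P))\|<\infty$ and $\|H_f^{m}g(H(P))\|<\infty$ for all $m\in\mathbb{N}$ (with $N$ the photon number operator); these are proved just as in \cite[Section~3]{FGS1}, relying on the number and commutator bounds collected in Appendix~\ref{appendix:estimates}. Third, one commutes $(\d\Gamma(\langle y\rangle)+1)^{-s}$ through $g(H(P))$ by means of a Helffer--Sj\"ostrand representation of $g$, which requires only that $[\d\Gamma(\langle y\rangle),H(P)]$ be controlled by $(\d\Gamma(\langle y\rangle)+1)(P_f^2+H_f+1)^{1/2}$; the latter follows from the boundedness of the one-particle commutators $[\langle y\rangle,|k|]$ and $[\langle y\rangle,k_j]$ (pseudodifferential operators of order $\le 0$) and from $[\d\Gamma(\langle y\rangle),A]=\tfrac{1}{\sqrt2}\big(a^*(\langle y\rangle h)-a(\langle y\rangle h)\big)$. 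Fourth, interpolating between the resulting $L^2$-bound and $\d\Gamma(\langle y\rangle)$-bound (L\"owner--Heinz inequality, using $1/2<s\le1$) yields \eqref{eq:weight-comp}.

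Two points need extra care. In the third step the estimates must be arranged so that each commutator produces a strictly lower-order term that can be re-absorbed, which the Helffer--Sj\"ostrand calculus makes systematic. At the endpoint $s=1$, the form factor $h(k,\lambda)=\kappa^\Lambda(k)|k|^{-1/2}\varepsilon_\lambda(k)$ satisfies $\langle y\rangle^{s}h\in\mathrm{L}^2$ only for $s<1$ (its Fourier transform decays like $|y|^{-5/2}$), so $[\d\Gamma(\langle y\rangle),A]$ at $s=1$ must be handled by a limiting argument from $s<1$; this causes no real problem because Corollary~\ref{cor:LAPaway} itself holds for $s=1$, so a standard limiting procedure in \eqref{eq:weight-comp} closes the gap. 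I would also note that the scheme is insensitive to the infrared singularity of $h$, and that \eqref{eq:weight-comp} can alternatively be obtained more abstractly by checking that $\d\Gamma(\langle y\rangle)$ is an admissible comparison operator for the pair $(H(P),B)$ --- i.e.\ that $|[\d\Gamma(\langle y\rangle),\i B]|\le\mathrm{C}\,(\d\Gamma(\langle y\rangle)+1)$ and that $\d\Gamma(\langle y\rangle)$ is relatively regular with respect to $B$ and $H(P)$ --- and invoking the corresponding weighted version of the conjugate operator method.
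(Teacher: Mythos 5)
Your reduction — localize in energy, peel off the far part of the resolvent, and reduce the statement to a weight comparison estimate of the form \eqref{eq:weight-comp} with the energy localization inserted — is the same strategy the paper follows (the paper writes $1=\phi(H(P))+\bar\phi(H(P))$ rather than your $g(\cdot)[\cdot]g(\cdot)+R_g$, but that is cosmetic). The genuine work is, as you say, to prove the weight comparison, and this is where your argument has a gap.

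Your step~1 claims the quadratic form inequality $\langle B\rangle^2\le \mathrm{C}\,H_f^2+\mathrm{C}\,(\d\Gamma(\langle y\rangle)+1)^2$, and this is false. The generator $b=\tfrac{\i}{2}(k\cdot\nabla_k+\nabla_k\cdot k)$ is a \emph{product} of $k$ and $\nabla_k$, not a sum, so $b^2$ contains the pseudodifferential term $-\sum_{i,j}k_ik_j\partial_{k_i}\partial_{k_j}$, whose quadratic form on a wave packet $f(k)=g(k-k_0)e^{\i k\cdot y_0}$ grows like $|k_0|^2\|\nabla g\|^2$, whereas the right-hand side of your claimed bound grows only like $|k_0|^2\|g\|^2+\|\langle y\rangle g\|^2$; choosing $g$ with $\|\nabla g\|/\|g\|$ large and letting $|k_0|\to\infty$ violates the inequality. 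What does hold is $\pm b\le\tfrac12(|k|^2+|y|^2)$ as quadratic forms, hence $\pm B\le\tfrac12\bigl(H_f^2+\d\Gamma(\langle y\rangle)^2\bigr)$, but this is a bound on $B$, not $\langle B\rangle^2$, and it does not square (and cannot: the two sides do not commute). Consequently the interpolation in your step~4 has nothing to start from, and the subsequent Helffer--Sj\"ostrand commutation in step~3, which is considerably more delicate than you indicate, would in any case rest on a false premise.

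What rescues the estimate is precisely the observation that the energy localization produces not merely a bound on $H_f^m g(H(P))$ but a \emph{sharp} photon-momentum cutoff: since the interaction $A=\Phi(h)$ contains only photon momenta $|k|\le\Lambda$, the high-frequency vacuum projection $\mathcal{U}^*(P_\Omega\otimes\mathds{1})\mathcal{U}$ (over $\mathcal{F}_{\tilde\Lambda}\otimes\mathcal{F}^{\tilde\Lambda}$, $\tilde\Lambda=\max(\Lambda,2\sup J)$) commutes with $H(P)$, and comparison with $H_f$ gives $\phi(H(P))=\phi(H(P))\,\Gamma(\kappa^{2\tilde\Lambda})$. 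On the range of $\Gamma(\kappa^{2\tilde\Lambda})$ every photon has $|k|\le 2\tilde\Lambda$, and then $k\cdot\nabla_k$ \emph{is} controlled by $\langle y\rangle$ on each one-particle slot; restricting to $n$-particle sectors one reads off $\|B\,\Gamma(\kappa^{2\tilde\Lambda})(\d\Gamma(\langle y\rangle)+1)^{-1}\|\le\mathrm{C}$ directly, and interpolation with the trivial $s=0$ bound gives \eqref{eq:B_dGamma} for all $0\le s\le 1$ (in particular the endpoint $s=1$ causes no trouble here, unlike in your step~3, because the form factor $h$ never enters). This also removes the need for the Helffer--Sj\"ostrand commutation and the higher-order $H_f^m$ bounds. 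To repair your argument you would replace step~1 by this sharp momentum localization, after which steps~2--4 are superfluous.
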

\begin{proof}
Let $\phi \in \mathrm{C}_0^\infty( \mathbb{R} ; [0,1] )$ be such that $\phi=1$ on $[ E(P) , \sup J ]$ and $\mathrm{supp}(\phi) \subset (-\infty , \frac{3}{2} \sup J)$. Let $\bar \phi = 1 - \phi$. We have
\begin{equation}
\sup_{z \in J_{\pm}} \big \| \bar \phi ( H(P) ) \big [ H(P) - z \big ]^{-1} \big \| < \infty.
\end{equation}
Therefore, to establish \eqref{cor:LAPaway_2_eq1}, it suffices to prove that
\begin{equation}\label{eq:LAP_2}
\sup_{z \in J_{\pm}} \big \| ( \d \Gamma (\langle y \rangle) + 1)^{- s} \phi( H(P) ) \big [ H(P) - z \big ]^{-1} ( \d \Gamma (\langle y \rangle) + 1)^{-s} \big \| < \infty.
\end{equation}

Recall that $\kappa^\Lambda$ denotes a function in $\mathrm{C}_0^\infty( \{ k , |k| \le \Lambda \} ; [0,1] )$ chosen such that $\kappa^\Lambda =1$ on $\{ k , |k| \le 3 \Lambda / 4 \}$. Let $\tilde \Lambda := \max( \Lambda , 2 \sup J )$ and let $\mathcal{U}$ denote the  unitary operator identifying $\mathcal{F}$ and $\mathcal{F}_{\tilde \Lambda} \otimes \mathcal{F}^{\tilde \Lambda}$,
\begin{equation}
\mathcal{U} : \mathcal{F} \to \mathcal{F}_{\tilde \Lambda} \otimes \mathcal{F}^{\tilde \Lambda},
\end{equation}
where $\mathcal{F}_{\tilde \Lambda} := \Gamma_s( \mathrm{L}^2 ( \{ ( k , \lambda ) , |k| \ge \tilde \Lambda \} ) )$ and $\mathcal{F}^{\tilde \Lambda} := \Gamma_s( \mathrm{L}^2 ( \{ ( k , \lambda ) , |k| \le \tilde \Lambda \} ) )$. Let $P_\Omega$ denote the projection onto the Fock vacuum and $\bar P_\Omega := \mathds{1} - P_\Omega$. Since $A = \mathcal{U}^* ( \mathds{1} \otimes A ) \mathcal{U}$, it follows that $H(P)$ commutes with $\mathcal{U}^* ( P_\Omega \otimes \mathds{1} ) \mathcal{U}$. Moreover, we have that
\begin{align}
H(P) \, \mathcal{U}^* ( \bar P_\Omega \otimes \mathds{1} ) \mathcal{U} & \ge H_f \, \mathcal{U}^* ( \bar P_\Omega \otimes \mathds{1} ) \mathcal{U} \notag \\
&= \mathcal{U}^* ( ( H_f \bar P_\Omega ) \otimes \mathds{1} + \mathds{1} \otimes H_f ) \mathcal{U} \notag \\
& \ge \mathcal{U}^* ( ( H_f \bar P_\Omega ) \otimes \mathds{1} ) \mathcal{U} \notag \\
& \ge \tilde \Lambda \, \mathcal{U}^* ( \bar P_\Omega \otimes \mathds{1} ) \mathcal{U},
\end{align}
and hence, since $\mathrm{supp} (\phi) \subset (-\infty , \frac{3}{2} \sup J)$ with $\frac{3}{2} \sup J < \tilde \Lambda$, we obtain that
\begin{equation}\label{eq:phi(H(P))Gamma_1}
\phi( H(P) ) = \phi( H(P) ) \, \mathcal{U}^* ( P_\Omega \otimes \mathds{1} ) \mathcal{U}.
\end{equation}
Recall that, given an operator $a$ on $\mathrm{L}^2( \mathbb{R}^3 \times \mathbb{Z}_2 )$, the operator $\Gamma( a )$ on $\mathcal{F}$ is defined by its restriction to the $n$-particle sector as
\begin{equation}
\Gamma( a ) |_{ S_n \mathrm{L}^2( \mathbb{R}^3 \times \mathbb{Z}_2 )^{\otimes n} } = a \otimes \cdots \otimes a.
\end{equation}
From \eqref{eq:phi(H(P))Gamma_1} and the fact that $\mathcal{U}^* ( P_\Omega \otimes \mathds{1} ) \mathcal{U} = \mathcal{U}^* ( P_\Omega \otimes \mathds{1} ) \mathcal{U} \Gamma( \kappa^{2 \tilde \Lambda} )$, we obtain
\begin{equation}\label{eq:phi(H(P))Gamma_2}
\phi( H(P) ) = \phi ( H(P) ) \Gamma ( \kappa^{2 \tilde \Lambda } ).
\end{equation}
Note that the advantage in using \eqref{eq:phi(H(P))Gamma_2} rather than \eqref{eq:phi(H(P))Gamma_1} in what follows comes from the fact that the function $\kappa^{2 \tilde \Lambda}$ is smooth.

Considering the restriction of the operator below to all $n$-particles subspaces of the Fock space, one verifies that
\begin{equation}
\big \| B \, \Gamma( \kappa^{2 \tilde \Lambda } ) ( \d \Gamma (\langle y \rangle) + 1)^{- 1} \big \| \le \mathrm{C}.
\end{equation}
Using an interpolation argument, this implies
\begin{equation}\label{eq:B_dGamma}
\big \| \langle B \rangle^s \, \Gamma( \kappa^{2 \tilde \Lambda} ) ( \d \Gamma (\langle y \rangle) + 1)^{- s} \big \| \le \mathrm{C},
\end{equation}
for any $0 \le s \le 1$.
Combining Corollary \ref{cor:LAPaway} with \eqref{eq:phi(H(P))Gamma_2} and \eqref{eq:B_dGamma}, we obtain \eqref{eq:LAP_2}, which concludes the proof of \eqref{cor:LAPaway_2_eq1}. The H{\"o}lder continuity stated in \eqref{cor:LAPaway_2_eq2} follows similarly. 
\end{proof}
Henceforth and throughout the remainder of this paper, we assume that
\begin{equation}
\sigma \le \mathrm{C}'_0 \alpha^{\frac{1}{2}},
\end{equation}
where $\mathrm{C}'_0$ is a positive constant such that $\mathrm{C}'_0 \ge \mathrm{C}_0$ (here $\mathrm{C}_0$ is given by Theorem \ref{thm:mourre_away}).

\section{Low energy decomposition}\label{section:decomp}

For $\sigma \ge 0$ we define the infrared cutoff Hamiltonian $H_\sigma(P)$ by
\begin{equation}\label{eq-HsigmaP-def-1}
H_\sigma(P) := \frac{1}{2} ( P - P_f - \alpha^{\frac{1}{2}} A_\sigma )^2 + H_f,
\end{equation}
where
\begin{equation}
A_\sigma := \Phi( h_\sigma ), \quad h_\sigma( k,\lambda ) := \frac{ \kappa_\sigma^\Lambda(k) }{ |k|^{\frac{1}{2} } } \varepsilon_\lambda(k),
\end{equation}
and
\begin{equation}
\kappa_\sigma^\Lambda(k) := \mathds{1}_{ \{ |k| \ge \sigma \} } (k) \kappa^\Lambda(k).
\end{equation}
Note that $H_0(P) = H(P)$. Let
\begin{equation}
E_\sigma(P) := \inf \sigma( H_\sigma(P) ).
\end{equation}
For $\sigma = 0$ we set $E(P) := E_0(P)$. Let $\mathcal{F}_\sigma := \Gamma_s( \mathrm{L}^2( \{ (k,\lambda) , |k| \ge \sigma \} ) )$ and
\begin{equation}
K_\sigma(P) := H_\sigma(P) |_{\mathcal{F}_\sigma}.
\end{equation}

Let $\mathrm{Gap}( H )$ be defined by $\mathrm{Gap}( H ) := \inf \{ \sigma(H) \setminus \{ E(H) \} \} - E(H)$, where $E(H) := \inf \{ \sigma(H)  \}$, for any self-adjoint and semi-bounded operator $H$. The following proposition is proven in \cite{Chen,BCFS2,CFP2,FP}:
\begin{proposition}\label{prop:Esigma(P)}
There exists $\alpha_0>0$ such that for all $0\le \alpha \le \alpha_0$, the following properties hold:
\begin{itemize}
\item[1)] For all $\sigma > 0$ and $|P| \le p_c$,
\begin{equation} \label{gapineq}
\mathrm{Gap}( K_\sigma(P) ) \ge \rho \sigma \text{ for some } 0 < \rho < 1.
\end{equation}
Moreover $\inf \sigma ( K_\sigma(P) ) = E_\sigma(P)$ is a non-degenerate (isolated) eigenvalue of $K_\sigma(P)$.
\item[2)] For all $\sigma \ge 0$ and $|P| \le p_c$,
\begin{equation}
\big | E_\sigma(P) - E(P) \big | \le \mathrm{C} \alpha \sigma,
\end{equation}
where $\mathrm{C}$ is a positive constant independent of $\sigma$.
\item[3)] For all $\sigma > 0$, the map $P \mapsto E_\sigma(P)$ is twice continuously differentiable on $\{ P \in \mathbb{R}^3 , |P| \le p_c \}$ and satisfies
\begin{align}
& \big | E_\sigma(P) - \frac{P^2}{2} \big | \le \mathrm{C} \alpha, \quad \big | \nabla E_\sigma (P) - P \big | \le \mathrm{C} \alpha, \\
& \big | \nabla E_\sigma(P) - \nabla E_\sigma(P') \big | \le \mathrm{C} | P - P' |\quad \text{for all } |P| , |P'| \le p_c, \label{eq:nablaE(P)-nablaE(P')}
\end{align}
where $\mathrm{C}$ is a positive constant independent of $\sigma$.
\item[4)] For all $\sigma \ge 0$, $|P| \le p_c$ and $k \in \mathbb{R}^3$,
\begin{equation}
E_\sigma(P-k) \ge E_\sigma(P) - \frac{1}{3} |k|.
\end{equation}
\end{itemize}
\end{proposition}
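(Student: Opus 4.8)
\textbf{Proof proposal for Proposition \ref{prop:Esigma(P)}.}
The plan is to reduce everything to known facts about the infrared-cutoff fiber Hamiltonian $H_\sigma(P)$, which has a genuine infrared cutoff at scale $\sigma$ and hence a spectral gap, and then to control the dependence on $\sigma$ and $P$ by perturbative and operator-monotonicity arguments. The four claims are essentially a compilation of results from \cite{Chen,BCFS2,CFP2,FP}, so the work is mainly in assembling and slightly sharpening them rather than proving them from scratch.

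For part (1), the point is that $K_\sigma(P)$ acts on the Fock space $\mathcal{F}_\sigma$ of photons of energy $\ge\sigma$, on which $H_f \ge \sigma \bar P_\Omega$. For $\alpha=0$ the operator $K_\sigma(P) = \frac12(P-P_f)^2 + H_f$ has ground state energy $P^2/2$ (the vacuum) and a gap of size at least $\min(\sigma, 1 - P^2/2 - \dots)$, which for $|P|\le p_c$ is bounded below by a fixed multiple of $\sigma$; one then turns on the interaction and uses analytic (or iterative Feshbach/renormalization-group) perturbation theory, valid uniformly in $\sigma>0$ because the relative bound of the interaction with respect to $H_\sigma(P)$ is $O(\alpha^{1/2})$ independently of $\sigma$, to conclude that for $\alpha$ small the gap persists as $\rho\sigma$ for some fixed $0<\rho<1$ and that $E_\sigma(P)$ stays a nondegenerate isolated eigenvalue. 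This is exactly the content established in \cite{BCFS2,CFP2,FP} and I would simply cite it, spelling out only the uniformity in $\sigma$. For part (3), twice continuous differentiability of $P\mapsto E_\sigma(P)$ on $|P|\le p_c$ follows from analytic perturbation theory applied to the eigenvalue $E_\sigma(P)$ (isolated by part (1)), together with the explicit $P$-dependence of $H_\sigma(P)$, which is quadratic in $P$; the bounds $|E_\sigma(P) - P^2/2|\le \mathrm{C}\alpha$ and $|\nabla E_\sigma(P) - P|\le\mathrm{C}\alpha$ come from first- and second-order perturbation theory (the linear-in-$\alpha^{1/2}$ term vanishes by the vacuum expectation of $a^\#$, so the correction is $O(\alpha)$), and the Lipschitz bound \eqref{eq:nablaE(P)-nablaE(P')} is uniform continuity of the $\mathrm{C}^2$ function $\nabla E_\sigma$ on the compact ball, with constant independent of $\sigma$ because the perturbation-theoretic estimates are.

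Parts (2) and (4) are the genuinely ``soft'' items. For (4), I would write $H_\sigma(P-k) = \frac12(P-k-P_f-\alpha^{1/2}A_\sigma)^2 + H_f$ and note that this is $U_k^* H_\sigma(P) U_k$ shifted appropriately, or more directly use that $H_\sigma(P-k)\ge H_\sigma(P) - |k|\,|P-P_f-\alpha^{1/2}A_\sigma| - \tfrac12|k|^2$ is not quite enough; the clean route is the standard convexity/pull-through argument (as in \cite{Chen,FP}): compare with a trial state and use $E_\sigma(P-k)\ge E_\sigma(P) + \nabla E_\sigma(P)\cdot(-k) - \mathrm{C}|k|^2$ together with $|\nabla E_\sigma(P)|\le |P| + \mathrm{C}\alpha \le 1/4$, yielding $E_\sigma(P-k)\ge E_\sigma(P) - \tfrac14|k| - \mathrm{C}|k|^2$, and then absorbing the quadratic term; for $|k|$ large one uses instead $H_\sigma(P-k)\ge H_f \ge 0 \ge E_\sigma(P) - \tfrac13|k|$ once $|k|$ exceeds a fixed constant, so the two regimes combine to give the bound with $1/3$. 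For (2), the estimate $|E_\sigma(P) - E(P)|\le \mathrm{C}\alpha\sigma$: since $H(P) = H_\sigma(P) + U_\sigma(P)$ with $U_\sigma(P)$ built from the form factor restricted to $|k|\le\sigma$, and that restricted form factor has $\mathrm{L}^2$ (and $\omega^{-1}\mathrm{L}^2$) norm $O(\sigma)$ — indeed $\int_{|k|\le\sigma}|k|^{-1}\d k = O(\sigma^2)$ and the relevant combination gives $O(\sigma)$ after the electron-momentum pairing — one gets $\|U_\sigma(P)(H_\sigma(P)+1)^{-1/2}\| = O(\alpha^{1/2}\sigma)$, and a second-order perturbation/variational estimate on the ground state energy upgrades the naive $O(\alpha^{1/2}\sigma)$ to $O(\alpha\sigma)$ because the first-order term again vanishes by the vacuum structure (the ground states of $H_\sigma(P)$ and $H(P)$ are both close to the vacuum-dominated state, and the cross term is $O(\alpha\sigma)$). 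The main obstacle, and the step I would be most careful about, is precisely this gain of a full power of $\sigma$ (not $\sigma^2$) combined with a full power of $\alpha$ (not $\alpha^{1/2}$) in part (2): it requires a genuine second-order expansion rather than a crude relative bound, and one must check that the ground-state projections' overlap and the denominators (controlled by the gap $\rho\sigma$ from part (1)) conspire so that the two small factors multiply cleanly. Everything else is routine given \cite{Chen,BCFS2,CFP2,FP}, which I would cite for the detailed bookkeeping.
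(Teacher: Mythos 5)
The paper does not prove Proposition \ref{prop:Esigma(P)} itself: it simply cites \cite{Chen,BCFS2,CFP2,FP} and relies on those references for the detailed arguments. Your proposal likewise rests on those citations, so at the level the paper treats this statement, your approach matches.

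The additional proof sketches you supply are in the right spirit, but two of the specific estimates as written would not quite close if one tried to turn them into a full proof. In part (2), you assert $\|U_\sigma(P)(H_\sigma(P)+1)^{-1/2}\| = O(\alpha^{1/2}\sigma)$; the standard $a$/$a^*$ bounds give $\|A^\sigma (H_f+1)^{-1/2}\| \le \||k|^{-1/2}h^\sigma\|_{L^2} + \|h^\sigma\|_{L^2} = O(\sigma^{1/2})$, so without restricting to the low-$H_f$ sector the relative bound is only $O(\alpha^{1/2}\sigma^{1/2})$; the extra factor of $\sigma^{1/2}$ has to be extracted by another mechanism (e.g.\ a telescoping argument in scale, or energy localization near the vacuum), and the ``upgrade'' to $O(\alpha\sigma)$ for the energy shift needs the nonexistence of a Fock ground state for $H(P)$, $P\neq0$, to be circumvented (one works with $H_\sigma$ and $H_{\sigma'}$ for positive $\sigma,\sigma'$ rather than directly with $H$). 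In part (4), the split into $|k|\lesssim 1$ and $|k|\gtrsim 1$ is the right idea, but in the small-$k$ regime $P-tk$ can leave the ball $|P|\le p_c$; one must establish parts (1)--(3) on a slightly larger ball than $p_c$ so that the Lipschitz bound on $\nabla E_\sigma$ applies along the whole segment. These are loose ends in a sketch of results the paper cites rather than proves, so they do not affect the match in approach, but they are exactly the places where the cited papers do nontrivial work.
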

We fix $P \in \mathbb{R}^3$ and, to simplify notations, we drop, from now on,  the dependence on $P$ everywhere unless a confusion may arise.
Note that the Hilbert space $\mathcal{F}$ is unitarily equivalent to $\mathcal{F}_\sigma \otimes \mathcal{F}^\sigma$ where $\mathcal{F}^\sigma := \Gamma_s( \mathrm{L}^2 ( \{ (k,\lambda)  , |k| \le \sigma \} ) )$. In this representation we have
\begin{equation}\label{eq:K^sigma=}
H_\sigma = K_\sigma \otimes \mathds{1} + \mathds{1} \otimes \big ( \frac{1}{2} P_f^2 + H_f \big ) - \nabla K_\sigma \otimes P_f,
\end{equation}
where we used (with obvious abuse of notation) that $P_f = P_f \otimes \mathds{1} + \mathds{1} \otimes P_f,\ H_f = H_f \otimes \mathds{1} + \mathds{1} \otimes H_f$ and $A_\sigma = A_\sigma \otimes \mathds{1}$, and where we used the notation
\begin{equation}
\nabla K_\sigma := \nabla H_\sigma |_{\mathcal{F}_\sigma},\quad
\text{with}\quad
\nabla H_\sigma := P - P_f - \alpha^{\frac{1}{2}} A_\sigma.
\end{equation}
In conclusion of this section we mention the decomposition
\begin{equation}
H = H_\sigma + U_\sigma,
\end{equation}
where
\begin{align}
U_\sigma &:= - \alpha^{\frac{1}{2}} \nabla K_\sigma \otimes A^\sigma + \frac{ \alpha^{ \frac{1}{2} } }{2} \mathds{1} \otimes \Big ( A^\sigma \cdot P_f + P_f \cdot A^\sigma \Big ) + \frac{ \alpha }{2} \mathds{1} \otimes ( A^\sigma )^2, \label{eq:Rsigma1}
\end{align}
and
\begin{equation}
A^\sigma := \Phi ( h^\sigma ), \quad h^\sigma( k,\lambda ) := h( k,\lambda ) - h_\sigma( k,\lambda ) = \frac{ \mathds{1}_{ \{ |k| \le \sigma \} }(k) }{ |k|^{\frac{1}{2}} } \varepsilon_\lambda(k).
\end{equation}

\section{Feshbach-Schur operator}\label{section:Feshbach}

In this section we use the ``smooth Feshbach-Schur map'', $F_\chi$, introduced in \cite{BCFS} to map the operators $H - \lambda$ onto more tractable operators. Define
\begin{equation}
\chi^\sigma_f := \chi^\sigma_f(H_f) \equiv \kappa^{\rho\sigma} ( H_f ),\quad \bar \chi^\sigma_f := \sqrt{ \mathds{1} - ( \chi^\sigma_f )^2 },
\end{equation}
with $\kappa^{\rho \sigma}$ as defined in \eqref{eq:kappaLambda}, $\rho$ the same as in \eqref{gapineq}, and
\begin{equation}
\chi := P_\sigma \otimes \chi^\sigma_f,\quad \bar \chi := P_\sigma \otimes \bar \chi^\sigma_f + \bar P_\sigma \otimes \mathds{1},
\end{equation}
where
\begin{equation}
P_\sigma := \mathds{1}_{\{ E_\sigma \}}(K_\sigma)\ \mbox{and}\ \bar P_\sigma:=\mathds{1} - P_\sigma.
\end{equation} Note that $\chi^2 + \bar \chi^2=\mathds{1}$ and $[ \chi , \bar \chi ]=0$.

It is tempting to apply the Feshbach-Schur map $F_\chi$ to $H - \lambda$, the operator $T$ of Appendix \ref{appendix:Feshbach} being chosen as $T=H_\sigma - \lambda$. However this choice is not suitable, since, because of the term $-\nabla K_\sigma \otimes P_f$ in $H_\sigma$ (see Equation \eqref{eq:K^sigma=}), the commutator $[H_\sigma,\chi]$ does not vanish (hence Hypothesis (1) of Appendix \ref{appendix:Feshbach} is not satisfied).

One could apply $F_\chi$ to $H - \lambda$ with $T = H_\sigma + \nabla K_\sigma \otimes P_f - \lambda$. However, as far as the Mourre estimate of Section \ref{section:mourre} is concerned, this choice is not suitable either, since it gives rise to ``perturbation'' terms of order $O(\sigma)$ in $F_\chi( H - \lambda )$, that is the same order as the leading order terms in $F_\chi( H - \lambda )$.

To circumvent this difficulty, we set $T_{\sigma} := H_\sigma + ( \nabla K_\sigma - \nabla E_\sigma) \otimes P_f$, that is
\begin{equation} \label{eq:Hsigma+}
T_\sigma = K_\sigma \otimes \mathds{1} + \mathds{1} \otimes \big ( \frac{1}{2} P_f^2 + H_f \big ) - \nabla E_\sigma \otimes P_f.
\end{equation}
Notice that $[ \chi , T_{\sigma} ] =0$, and that
\begin{equation}\label{eq:Rsigma+}
H = T_\sigma + W_\sigma, \quad \text{where} \quad W_\sigma := U_\sigma - ( \nabla K_\sigma - \nabla E_\sigma ) \otimes P_f.
\end{equation}
Using the Feynman-Hellman formula, we shall see in the following that the term $( \nabla K_\sigma - \nabla E_\sigma ) \otimes P_f$ can indeed be treated as a perturbation, and leads to terms of order $O( \alpha^{1/2} \sigma )$ in $F_\chi( H - \lambda )$; (see Lemmata \ref{lm:nablaK-nablaE}, \ref{lm:||W||} and \ref{lm:[W,iB]}).

On operators of the form $H-\lambda$ we introduce the Feshbach-Schur map (see Appendix \ref{appendix:Feshbach}):
\begin{equation}\label{eq:expr_Feshbach1}
F_\chi( H - \lambda) = T_\sigma - \lambda + \chi W_\sigma \chi - \chi W_\sigma \bar \chi \big [ H_{\bar \chi} - \lambda \big ]^{-1} \bar \chi W_\sigma \chi,
\end{equation}
where (cf. Appendix \ref{appendix:Feshbach})
\begin{align}
& H_{\bar \chi} := T_\sigma + \bar \chi W_{\sigma} \bar \chi. \label{eq:Hbarchi}
\end{align}
This family is well-defined as follows from the fact that the operators $\chi W_\sigma$ and $W_\sigma \chi$ are bounded and from the Proposition \ref{prop:Hchibar} below.
The Feynman-Hellman formula gives $P_\sigma \nabla K_\sigma P_\sigma = \nabla E_\sigma P_\sigma$ and hence $\chi W_\sigma \chi = \chi U_\sigma \chi$. Thus Equations \eqref{eq:Hsigma+} and \eqref{eq:expr_Feshbach1} imply
\begin{align}
F_\chi( H - \lambda ) =& K_\sigma \otimes\mathds{1} + \mathds{1} \otimes \big( \frac{1}{2} P_f^2 + H_f \big ) -  \nabla E_\sigma \otimes P_f  - \lambda \notag \\
& + {\chi U_{\sigma} \chi} - \chi W_\sigma \bar \chi \big [ H_{\bar \chi}-\lambda \big ]^{-1} \bar \chi W_\sigma \chi. \label{eq:expr_Feshbach}
\end{align}
\begin{proposition}\label{prop:Hchibar}
For any $\mathrm{C}_0>0$, there exists $\alpha_0>0$ such that, for all $|P| \le p_c$, $0 \le \alpha \le \alpha_0$ and $0 < \sigma \le \mathrm{C}_0 \alpha^{1/2}$, for all $\lambda \le E_\sigma +  \rho \sigma / 4$, $H_{\bar \chi} - \lambda$ is bounded invertible on $\mathrm{Ran}( \bar \chi )$ and
\begin{align}
& \big \| \bar \chi \big [ H_{\bar \chi} - \lambda \big ]^{-1} \bar \chi \big \| \le \mathrm{C} \sigma^{-1}, \label{eq:Feshbach3} \\
& \big \| \bar \chi \big [ H_{\bar \chi} - \lambda \big ]^{-1} \bar \chi W_\sigma  \chi \big \| \le \mathrm{C}. \label{eq:Feshbach4}
\end{align}
\end{proposition}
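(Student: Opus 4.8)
\textbf{Proof proposal for Proposition \ref{prop:Hchibar}.}

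The plan is to decompose $\mathrm{Ran}(\bar\chi)$ along the spectral projections $P_\sigma$ and $\bar P_\sigma$ and show that on each summand the operator $T_\sigma - \lambda$ is bounded below by a positive multiple of $\sigma$; the perturbation $\bar\chi W_\sigma\bar\chi$ will be shown (using the estimates announced in Lemmata \ref{lm:nablaK-nablaE}, \ref{lm:||W||}, and the structure of $W_\sigma$ in \eqref{eq:Rsigma+}) to be small compared to this gap, so that a Neumann-series argument gives bounded invertibility together with the bound \eqref{eq:Feshbach3}. Concretely, on $\mathrm{Ran}(\bar P_\sigma\otimes\mathds{1})$ we have $K_\sigma\otimes\mathds{1}\ge (E_\sigma+\rho\sigma)$ by the gap estimate \eqref{gapineq} of Proposition \ref{prop:Esigma(P)}, while the remaining terms $\mathds{1}\otimes(\tfrac12 P_f^2+H_f)-\nabla E_\sigma\otimes P_f$ are bounded below by $-\tfrac12|\nabla E_\sigma|^2\ge -\tfrac18$ (completing the square, using $|\nabla E_\sigma|\le 1/4$); hence for $\lambda\le E_\sigma+\rho\sigma/4$ this part of $T_\sigma-\lambda$ is $\ge 3\rho\sigma/4 - 1/8$, and since $\sigma\le \mathrm{C}_0\alpha^{1/2}$ is \emph{small} one has to be slightly more careful — in fact $E_\sigma\ge P^2/2 - \mathrm{C}\alpha$, and one arranges the constants so that $T_\sigma-\lambda\ge c\rho\sigma$ there once $\alpha_0$ is taken small enough, because $\lambda - E_\sigma \le \rho\sigma/4$ and the ``$-1/8$'' is actually absorbed by writing $\tfrac12 P_f^2 - \nabla E_\sigma\cdot P_f + H_f = \tfrac12(P_f - \nabla E_\sigma)^2 + H_f - \tfrac12|\nabla E_\sigma|^2$ and noting $H_f\ge P_f$... — more cleanly, one uses part 4) of Proposition \ref{prop:Esigma(P)} style convexity to get the cross term under control. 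On $\mathrm{Ran}(P_\sigma\otimes\bar\chi^\sigma_f)$, $K_\sigma$ acts as the scalar $E_\sigma$, and $\bar\chi^\sigma_f$ restricts to $H_f\ge 3\rho\sigma/4$, so the same completion-of-the-square argument (now with the genuine $H_f\ge 3\rho\sigma/4$ gain) yields $T_\sigma-\lambda\ge c'\sigma$ on that subspace too.

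Having established $\bar\chi(T_\sigma-\lambda)\bar\chi\ge c\sigma$ on $\mathrm{Ran}(\bar\chi)$ uniformly for $\lambda\le E_\sigma+\rho\sigma/4$, I would write
\[
H_{\bar\chi}-\lambda = \bar\chi(T_\sigma-\lambda)\bar\chi + \bar\chi W_\sigma\bar\chi
\]
on $\mathrm{Ran}(\bar\chi)$ and factor out $(T_\sigma-\lambda)^{1/2}$ (restricted to the range, where it is positive and invertible): $H_{\bar\chi}-\lambda = (T_\sigma-\lambda)^{1/2}\big(\mathds{1} + (T_\sigma-\lambda)^{-1/2}\bar\chi W_\sigma\bar\chi(T_\sigma-\lambda)^{-1/2}\big)(T_\sigma-\lambda)^{1/2}$. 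The middle factor is invertible by Neumann series provided $\|(T_\sigma-\lambda)^{-1/2}\bar\chi W_\sigma\bar\chi(T_\sigma-\lambda)^{-1/2}\|<1$. Since $W_\sigma = U_\sigma - (\nabla K_\sigma-\nabla E_\sigma)\otimes P_f$ with $U_\sigma$ of order $O(\alpha^{1/2})$ relative to $H_f^{1/2}+\mathds{1}$ and $\nabla K_\sigma-\nabla E_\sigma$ of order $O(\alpha^{1/2}\sigma)$ (Lemma \ref{lm:nablaK-nablaE}) times $P_f\lesssim H_f$, the operator $\bar\chi W_\sigma\bar\chi$ is bounded by $O(\alpha^{1/2})(T_\sigma-\lambda+\mathrm{C})$ in form sense, hence $\|(T_\sigma-\lambda)^{-1/2}\bar\chi W_\sigma\bar\chi(T_\sigma-\lambda)^{-1/2}\|\le \mathrm{C}\alpha^{1/2}(1+\sigma^{-1})\le \mathrm{C}'\alpha^{1/2}\sigma^{-1}\le \mathrm{C}'\mathrm{C}_0^{-1}$... which is \emph{not} automatically $<1$ — so here I must use the relative bound with the better power: the relevant quantity is actually $\|W_\sigma(T_\sigma-\lambda+\mathrm{C})^{-1}\|$, and because $W_\sigma$ contains no term worse than $O(\alpha^{1/2})\cdot(\text{number operator-type})$, one gets $\le\mathrm{C}\alpha^{1/2}$, \emph{uniformly in $\sigma$}, which is $<1/2$ for $\alpha_0$ small. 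Combining, $\|\bar\chi[H_{\bar\chi}-\lambda]^{-1}\bar\chi\|\le 2\|\bar\chi(T_\sigma-\lambda)^{-1}\bar\chi\|\le 2(c\sigma)^{-1}$, giving \eqref{eq:Feshbach3}.

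For \eqref{eq:Feshbach4}, I would not estimate $\|W_\sigma\chi\|$ crudely (that loses a power of $\sigma$); instead write $\bar\chi[H_{\bar\chi}-\lambda]^{-1}\bar\chi W_\sigma\chi = \bar\chi[H_{\bar\chi}-\lambda]^{-1}(H_{\bar\chi}-\lambda+\mathrm{C})^{1/2}\cdot (H_{\bar\chi}-\lambda+\mathrm{C})^{-1/2}\bar\chi W_\sigma\chi$ and observe that the first factor is bounded by $\mathrm{C}\sigma^{-1/2}$ (from \eqref{eq:Feshbach3} and $\|(H_{\bar\chi}-\lambda+\mathrm{C})^{1/2}\bar\chi[H_{\bar\chi}-\lambda]^{-1/2}\|\le\mathrm{C}$ since $H_{\bar\chi}-\lambda\ge c\sigma$), while the second factor is bounded using that $W_\sigma\chi$ maps into $\mathrm{Ran}(\bar\chi)$ and, since $\chi = P_\sigma\otimes\chi^\sigma_f$ localizes $H_f\le\rho\sigma$, one has $\|(H_f+1)^{1/2}W_\sigma\chi\|\le\mathrm{C}\alpha^{1/2}\sigma^{1/2}$ (this is essentially the content of Lemma \ref{lm:||W||}), hence $\|(H_{\bar\chi}-\lambda+\mathrm{C})^{-1/2}\bar\chi W_\sigma\chi\|\le\mathrm{C}\alpha^{1/2}\sigma^{1/2}$, and the product is $\le\mathrm{C}\alpha^{1/2}\le\mathrm{C}$. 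The \textbf{main obstacle} is the bookkeeping of $\sigma$-powers: a naive relative-bound argument gives $\|W_\sigma(T_\sigma-\lambda)^{-1/2}\|\sim\alpha^{1/2}\sigma^{-1/2}$, which blows up as $\sigma\to0$, so one must exploit that every term in $W_\sigma$ carries either an explicit factor $\sigma$ (the $\nabla K_\sigma-\nabla E_\sigma$ term, via Feynman-Hellman and Lemma \ref{lm:nablaK-nablaE}) or an infrared form factor $h^\sigma$ supported in $\{|k|\le\sigma\}$ whose $\mathrm{L}^2$-norm gains a power $\sigma$ — this is exactly where the smallness of the low-energy sector, rather than just $\alpha$, must be used, and it is the crux of why the decomposition $H = T_\sigma + W_\sigma$ (rather than $H = H_\sigma + U_\sigma$) was chosen.
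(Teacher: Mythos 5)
There is a genuine gap, and it is twofold: circularity and an incorrect smallness claim.

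\textbf{Circularity.} You invoke Lemma \ref{lm:nablaK-nablaE} (and, for \eqref{eq:Feshbach4}, effectively Lemma \ref{lm:||W||}) as inputs. But both of those lemmata appear in Section \ref{section:mourre}, \emph{after} Proposition \ref{prop:Hchibar}, and both of their proofs use the resolvent $[H_{\bar\chi}-\lambda]^{-1/2}$ and the invertibility of $H_{\bar\chi}-\lambda$ — i.e., precisely the conclusions of Proposition \ref{prop:Hchibar} and of Lemma \ref{lm:f(Hsigma)} inside its proof. You cannot cite them here.

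\textbf{Incorrect smallness of $(\nabla K_\sigma-\nabla E_\sigma)\otimes P_f$.} Your Neumann-series plan treats all of $W_\sigma=U_\sigma-(\nabla K_\sigma-\nabla E_\sigma)\otimes P_f$ as a perturbation of $T_\sigma$, and for this you claim that $\nabla K_\sigma-\nabla E_\sigma$ is ``of order $O(\alpha^{1/2}\sigma)$.'' That is not what Lemma \ref{lm:nablaK-nablaE} says: with $\delta=\rho\sigma$ its bound is $\mathrm{C}(1+\delta^{1/2}\sigma^{-1/2})=O(1)$, not small, and it holds only after sandwiching with $P_\sigma$ and with $[H_{\bar\chi}-\lambda]^{-1/2}$, not $[T_\sigma-\lambda]^{-1/2}$. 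In fact the operator $(\nabla K_\sigma-\nabla E_\sigma)\otimes P_f$ is \emph{not} form-small relative to $T_\sigma-\lambda$ on $\mathrm{Ran}(\bar\chi)$: on $\mathrm{Ran}(\bar P_\sigma\otimes\mathds{1})$ both $\nabla K_\sigma$ and $P_f$ (on multi-photon states in $\mathcal{F}^\sigma$) grow with the energy, and the cross term scales like $\sqrt{K_\sigma}\cdot\sqrt{H_f}$, which is comparable to $T_\sigma-\lambda$, not $O(\alpha^{1/2})$ times it. So the quantity $\|(T_\sigma-\lambda)^{-1/2}\bar\chi(\nabla K_\sigma-\nabla E_\sigma)\otimes P_f\,\bar\chi(T_\sigma-\lambda)^{-1/2}\|$ has no reason to be $<1$, and your Neumann series need not converge.

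The paper avoids this by \emph{not} perturbing off $T_\sigma$. It introduces $H_{\bar\chi}^1=T_\sigma-\bar\chi(\nabla K_\sigma-\nabla E_\sigma)\otimes P_f\,\bar\chi$ and observes, via $\bar\chi=P_\sigma\otimes(\bar\chi^\sigma_f-\mathds{1})+\mathds{1}\otimes\mathds{1}$, the Feynman-Hellman identity $P_\sigma\nabla K_\sigma P_\sigma=\nabla E_\sigma P_\sigma$, and the bounds $\|\nabla K_\sigma P_\sigma\|\le 1/36$, $\|P_f(\bar\chi^\sigma_f-\mathds{1})\|\le 2\rho\sigma$, that $H_{\bar\chi}^1$ equals the \emph{physical} Hamiltonian $H_\sigma$ plus corrections of size $O(\sigma)$ with small prefactor. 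The lower bound $(\Phi,H_\sigma\Phi)\ge E_\sigma+\tfrac38\rho\sigma$ on $\mathrm{Ran}(\bar\chi)$ — which is itself nontrivial because $H_\sigma$ contains the operator-valued $\nabla K_\sigma\otimes P_f$, and requires Lemmata \ref{lm:appendix0} and \ref{lm:appendix1} — is what delivers $H_{\bar\chi}^1-\lambda\gtrsim\rho\sigma$. Only then is a Neumann series used, and only for the remaining piece $\bar\chi U_\sigma\bar\chi$, which carries the genuine $\alpha^{1/2}$ smallness via the infrared form factor $h^\sigma$. In short: the correct comparison operator is $H_\sigma$, not $T_\sigma$, and the Feynman-Hellman reduction is the missing idea in your argument; your step 1 (lower bound on $T_\sigma$ restricted to $\mathrm{Ran}(\bar\chi)$) is fine, but it is not the right starting point.
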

\begin{proof} By \eqref{eq:Rsigma+}, the perturbation $W_\sigma$ consists of two terms. As a first step in the proof of Proposition \ref{prop:Hchibar}, we focus on the term $(\nabla K_\sigma - \nabla E_\sigma) \otimes P_f$, which is analyzed in the following lemma.
\begin{lemma}\label{lm:f(Hsigma)}
Let
\begin{equation}\label{eq:Hbarchi1}
H_{\bar \chi}^1 := T_\sigma - \bar \chi ( \nabla K_\sigma - \nabla E_\sigma ) \otimes P_f \bar \chi \,.
\end{equation}
For any $\mathrm{C}_0>0$, there exists $\alpha_0>0$ such that, for all $|P| \le p_c$, $0 \le \alpha \le \alpha_0$ and $0 < \sigma \le \mathrm{C}_0 \alpha^{1/2}$, for all $\lambda \le E_\sigma +  \rho \sigma / 4$, $H_{\bar \chi}^1 - \lambda$ is bounded invertible on $\mathrm{Ran}( \bar \chi )$ and
\begin{align}
& \big \| \bar \chi \big [ H_{\bar \chi}^1 - \lambda \big ]^{-1} \bar \chi \big \| \le \mathrm{C} \sigma^{-1}, \label{eq:Feshbach1} \\
& \big \| \bar \chi \big [ H_{\bar \chi}^1 - \lambda \big ]^{-1} \bar \chi ( \nabla K_\sigma - \nabla E_\sigma ) \otimes P_f  \chi \big \| \le \mathrm{C}. \label{eq:Feshbach2}
\end{align}
\end{lemma}
\begin{proof}
Let $\Phi = \bar \chi \Psi \in D( H_\sigma ) \cap \mathrm{Ran} ( \bar \chi )$, $\| \Phi \|=1$. Let us first prove that
\begin{equation}\label{eq:f(Hsigma)_1}
( \Phi , H_\sigma \Phi ) \ge E_\sigma + \frac{3}{8} \rho \sigma.
\end{equation}
We decompose
\begin{align}\label{eq:Hsigma_decomp}
( \Phi , H_\sigma \Phi ) &= ( \Phi , H_\sigma ( \mathds{1} \otimes \mathds{1}_{H_f \le 3 \rho \sigma /4} ) \Phi ) + ( \Phi , H_\sigma ( \mathds{1} \otimes \mathds{1}_{H_f \ge 3 \rho \sigma /4} ) \Phi ),
\end{align}
and use that $\Phi = \bar \chi \Psi =(\bar P_\sigma \otimes \mathds{1} ) \Psi + ( P_\sigma \otimes \bar{\chi}^\sigma_f ) \Psi$. Using Lemma \ref{lm:appendix1} and the fact that $\mathds{1}_{ H_f \le 3\rho\sigma/4 } \, \bar{\chi}^\sigma_f = 0$, we obtain that
\begin{align}
( \Phi , H_\sigma ( \mathds{1} \otimes \mathds{1}_{H_f \le 3 \rho \sigma /4} ) \Phi ) & \ge (1 - \frac{3}{4} \rho \sigma ) ( \Phi , K_\sigma \otimes \mathds{1} ( \mathds{1} \otimes \mathds{1}_{H_f \le 3 \rho \sigma /4} ) \Phi ) \notag \\
& = ( 1 - \frac{3}{4} \rho \sigma ) ( ( \bar P_\sigma \otimes \mathds{1} ) \Psi , K_\sigma \otimes \mathds{1} ( \bar P_\sigma \otimes \mathds{1}_{H_f \le 3 \rho \sigma /4} ) \Psi ).
\end{align}
Since, by Proposition \ref{prop:Esigma(P)}, $\bar P_\sigma K_\sigma \bar P_\sigma \ge E_\sigma + \rho \sigma$, this implies
\begin{align}
( \Phi , H_\sigma ( \mathds{1} \otimes \mathds{1}_{H_f \le 3 \rho \sigma /4} ) \Phi ) & \ge ( 1 - \frac{3}{4} \rho \sigma ) ( E_\sigma + \rho \sigma ) ( \Phi , ( \mathds{1} \otimes \mathds{1}_{H_f \le 3 \rho \sigma / 4 } ) \Phi ) \notag \\
& \ge ( E_\sigma + \frac{3}{8} \rho \sigma ) ( \Phi , ( \mathds{1} \otimes \mathds{1}_{H_f \le 3 \rho \sigma / 4 } ) \Phi ).
\end{align}
Note that in the last inequality we used that, by Proposition \ref{prop:Esigma(P)}, $E_\sigma \le 1/100$ for $|P| \le 1/40$ and $\alpha$ sufficiently small. The second term on the hand side of \eqref{eq:Hsigma_decomp} is estimated thanks to Lemma \ref{lm:appendix0}, which gives:
\begin{align}
( \Phi , H_\sigma ( \mathds{1} \otimes \mathds{1}_{H_f \ge 3 \rho \sigma /4} ) \Phi ) &\ge E_\sigma + \frac{1}{2} ( \Phi , ( \mathds{1} \otimes H_f ) ( \mathds{1} \otimes \mathds{1}_{H_f \ge 3 \rho \sigma /4} ) \Phi ) \notag \\
&\ge ( E_\sigma + \frac{3}{8} \rho \sigma ) ( \Phi , ( \mathds{1} \otimes \mathds{1}_{H_f \ge 3 \rho \sigma / 4 } ) \Phi ).
\end{align}
Hence \eqref{eq:f(Hsigma)_1} is proven.

From the definition of $H_{\bar \chi}^1$, we infer that
\begin{align}
H_{\bar \chi}^1
=& H_\sigma + (\nabla K_\sigma-\nabla E_\sigma) \otimes P_f
- \bar\chi (\nabla K_\sigma-\nabla E_\sigma) \otimes P_f \bar\chi \notag\\
=& H_\sigma + \big ( P_\sigma \otimes ( \bar\chi^\sigma_f - \mathds{1} ) \big ) \nabla K_\sigma \otimes P_f \big( \bar P_\sigma \otimes \mathds{1} \big ) \notag \\
& + \big( \bar P_\sigma \otimes \mathds{1} \big ) \nabla K_\sigma \otimes P_f \big ( P_\sigma \otimes ( \bar\chi^\sigma_f - \mathds{1} ) \big ) \label{eq:lm_Feshbach2}
\end{align}
where we used that $\bar\chi= P_\sigma \otimes ( \bar\chi^\sigma_f - \mathds{1} )  + \mathds{1}\otimes\mathds{1}$, and
\begin{align}
&	\big(\mathds{1}\otimes\mathds{1}\big)(\nabla K_\sigma-\nabla E_\sigma) \otimes P_f
	\big( P_\sigma \otimes ( \bar\chi^\sigma_f - \mathds{1} ) \big) \notag \\
&	=
	\big( \bar P_\sigma\otimes\mathds{1} \big)
	\nabla K_\sigma\otimes P_f \big( P_\sigma \otimes ( \bar\chi^\sigma_f - \mathds{1} )  \big). \label{eq:FH_formula_2}
\end{align}
Equation \eqref{eq:FH_formula_2} follows from the Feynman-Hellman formula, $P_\sigma \nabla K_\sigma P_\sigma = \nabla E_\sigma P_\sigma$, and orthogonality, $P_\sigma\bar P_\sigma=0$.
By Proposition \ref{prop:Esigma(P)}, for $|P|\le p_c = 1/40$ and $\alpha$ sufficiently small,
\begin{equation}\label{eq:lm_Feshbach3}
\big \| \nabla K_\sigma P_\sigma \big \|^2 \le 2 E_\sigma \le P^2 + \mathrm{C} \alpha
\le \frac{1}{36^2}.
\end{equation}
Thus, combined with
\begin{equation}\label{eq:lm_Feshbach4}
\| P_f ( \bar\chi^\sigma_f - \mathds{1} ) \| \le 2 \| H_f ( \bar\chi^\sigma_f - \mathds{1} ) \| \le 2 \rho \sigma
\end{equation}
and \eqref{eq:f(Hsigma)_1}, Equations \eqref{eq:lm_Feshbach2}--\eqref{eq:lm_Feshbach3} imply that
\begin{equation}
(\Phi , H_{\bar \chi}^1 \Phi) \ge E_\sigma + ( \frac{3}{8} - \frac{1}{9} )  \rho \sigma = E_\sigma + \frac{19}{72}  \rho \sigma,
\end{equation}
provided that $\alpha$ is sufficiently small. This establishes that $H_{\bar \chi}^1 - \lambda$ is bounded invertible for any $\lambda \le E_\sigma +  \rho \sigma / 4$, and leads to \eqref{eq:Feshbach1}. To obtain \eqref{eq:Feshbach2}, it suffices to combine \eqref{eq:Feshbach1} with \eqref{eq:lm_Feshbach3} and the fact that $\|P_f \chi^\sigma_f \| \le \mathrm{C} \sigma$. 
\end{proof}
We return to the proof of Proposition \ref{prop:Hchibar}. Using the operator $H_{\bar\chi}^1$ introduced in the statement of Lemma \ref{lm:f(Hsigma)}, we have that
\begin{equation}
H_{\bar\chi} = H_{\bar\chi}^1 + \bar\chi U_\sigma \bar\chi.
\end{equation}
Consider the Neumann series
\begin{equation}\label{eq:Neumann}
\bar \chi \big [ H_{\bar \chi} - \lambda \big ]^{-1} \bar \chi = \bar \chi \big [ H_{\bar \chi}^1 - \lambda \big ]^{-1} \sum_{n \ge 0} \Big ( - \bar \chi U_\sigma \bar \chi \big [ H_{\bar \chi}^1 - \lambda \big ]^{-1} \Big )^n \bar \chi.
\end{equation}
We claim that
\begin{equation}\label{eq:Neumann_2}
\big \| \big [ H_{\bar \chi}^1 - \lambda \big ]^{ - \frac{1}{2} } \bar \chi U_\sigma \bar \chi \big [ H_{\bar \chi}^1 - \lambda \big ]^{ - \frac{1}{2} } \bar \chi \big \| \le \mathrm{C} \alpha^{\frac{1}{2}}.
\end{equation}
Indeed, inserting the expression \eqref{eq:Rsigma1} of $U_\sigma$ into the left-hand side of \eqref{eq:Neumann_2}, we obtain three terms: The first one is given by
\begin{equation}\label{eq:Neumann_3}
\big \| \alpha^{\frac{1}{2}} \big [ H_{\bar \chi}^1 - \lambda \big ]^{ - \frac{1}{2} } \bar \chi \nabla K_\sigma \otimes A^\sigma \bar \chi \big [ H_{\bar \chi}^1 - \lambda \big ]^{ - \frac{1}{2} } \bar \chi \big \|.
\end{equation}
It follows from Lemmata \ref{lm:standard1}, \ref{lm:appendix0} and \ref{lm:f(Hsigma)} that
\begin{equation}
\big \| ( \mathds{1} \otimes a(h^\sigma) ) \bar \chi \big [ H_{\bar \chi}^1 - \lambda \big ]^{ - \frac{1}{2} } \bar \chi \big \| \le \mathrm{C} \sigma^{\frac{1}{2}}.
\end{equation}
Using in addition that, by Lemma \ref{lm:f(Hsigma)},
\begin{equation}
\big \| ( \nabla K_\sigma \otimes \mathds{1} ) \bar \chi \big [ H_{\bar \chi}^1 - \lambda \big ]^{ - \frac{1}{2} } \bar \chi \big \| \le \mathrm{C} \sigma^{ - \frac{1}{2}},
\end{equation}
we get $\| \eqref{eq:Neumann_3} \| \le \mathrm{C} \alpha^{1/2}$. The second and third terms from $\eqref{eq:Rsigma1}$ are estimated similarly, which leads to \eqref{eq:Neumann_2}.
Together with \eqref{eq:Feshbach1} from Lemma \ref{lm:f(Hsigma)} this implies that, for any $n\in\mathbb{N}$,
\begin{equation}
\Big \| \bar \chi \big [ H_{\bar \chi}^1 - \lambda \big ]^{-1} \Big ( - \bar \chi U_\sigma \bar \chi \big [ H_{\bar \chi}^1 - \lambda \big ]^{-1} \Big )^n \bar \chi \Big \| \le \mathrm{C} \sigma^{-1} ( \mathrm{C}' \alpha^{\frac{1}{2}} )^n.
\end{equation}
Hence, for $\alpha$ sufficiently small, the right-hand-side of \eqref{eq:Neumann} is convergent and \eqref{eq:Feshbach3} holds. Estimate \eqref{eq:Feshbach4} follows similarly.
\end{proof}
%
%

\section{Mourre estimate for the Feshbach-Schur operator}\label{section:mourre}

In this section we shall prove Theorem \ref{thm:LAP} in the case where
$$J = J_\sigma^< := [ E(P) + 11 \rho \sigma / 128 , E(P) + 13 \rho \sigma / 128 ],$$
and $\sigma$ is such that $\sigma \le \mathrm{C}_0 \alpha^{1/2}$. We shall begin with proving a limiting absorption principle for the  Feshbach-Schur operator
\begin{equation}\label{eq:defF(z)}
F(\lambda) := F_\chi ( H - \lambda ) |_{ \mathrm{Ran} ( P_\sigma \otimes \mathds{1} )},
\end{equation}
defined in \eqref{eq:expr_Feshbach1}, Section \ref{section:Feshbach}. Note that the operator $F(\lambda) $ is self-adjoint $\forall \lambda \in J_\sigma^<$. Here the parameter $\lambda$ shall be fixed such that $\lambda \in J_\sigma^<$ and we shall prove a LAP for $F(\lambda)$ on the interval $\Delta_\sigma$ defined in this section by
\begin{equation}\label{eq:Deltasigma_2}
\Delta_\sigma = [ - \rho \sigma / 128 , \rho \sigma / 128 ].
\end{equation}
Then we shall deduce a limiting absorption principle for $H$ near the ground state energy $E$ by applying Theorem \ref{thm:LAPtransfer}.

We begin with showing the Mourre estimate for  $F(\lambda)$, $\lambda \in J_\sigma^<$.

Recall that $\kappa^\sigma$ denotes a function in  $\mathrm{C}_0^\infty( \{ k , |k| \le \sigma \} ; [0,1] )$ such that $\kappa^\sigma =1$ on $\{ k , |k| \le 3 \sigma / 4 \}$. The conjugate operator we shall use in this section is the operator $B^\sigma$, defined by:
\begin{align}\label{eq:Bsigma}
& B^\sigma =  \d \Gamma( b^\sigma ), \quad\text{with}\quad b^\sigma = \kappa^\sigma b \kappa^\sigma.
\end{align}
Clearly, $B^\sigma$ acts on the second component of the tensor product $\mathcal{F}_\sigma \otimes \mathcal{F}^\sigma$. The main theorem of this section is:
\begin{theorem}\label{thm:Mourre}
For any $\mathrm{C}_0>0$, there exists $\alpha_0>0$ such that, for all $|P| \le p_c$, $0 \le \alpha \le \alpha_0$, $0 < \sigma \le \mathrm{C}_0 \alpha^{1/2}$, and $\lambda \in J_\sigma^<$,
\begin{equation}
\mathds{1}_{\Delta_\sigma}(F(\lambda)) [ F(\lambda) , \i B^\sigma ] \mathds{1}_{\Delta_\sigma}(F(\lambda)) \ge \frac{\rho \sigma}{128} \mathds{1}_{\Delta_\sigma}(F(\lambda)).
\end{equation}
\end{theorem}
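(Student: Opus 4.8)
The plan is to prove the Mourre estimate for $F(\lambda)$ on $\Delta_\sigma$ by splitting $F(\lambda)$ into its ``unperturbed'' part and a small perturbation, computing the commutator of the unperturbed part with $B^\sigma$ explicitly, and controlling the commutator with the perturbation by order $O(\alpha^{1/2}\sigma)$. Write $F(\lambda) = F_0(\lambda) + R$, where
\begin{equation*}
F_0(\lambda) := K_\sigma \otimes \mathds{1} + \mathds{1} \otimes \big( \tfrac{1}{2} P_f^2 + H_f \big) - \nabla E_\sigma \otimes P_f - \lambda,
\end{equation*}
restricted to $\mathrm{Ran}(P_\sigma \otimes \mathds{1})$, and $R := \chi U_\sigma \chi - \chi W_\sigma \bar\chi [H_{\bar\chi} - \lambda]^{-1} \bar\chi W_\sigma \chi$ is the ``perturbation.'' The first task is to establish that $F(\lambda) \in \mathrm{C}^2(B^\sigma)$ so that Mourre theory applies; this follows as in Corollary \ref{cor:LAPaway}, using that $B^\sigma$ involves the momentum cutoff $\kappa^\sigma$, that $[H_f, \i B^\sigma]$, $[P_f \cdot \nabla E_\sigma, \i B^\sigma]$ are relatively bounded, and that $e^{\i t B^\sigma}$ preserves the relevant domains; the resolvent $[H_{\bar\chi}-\lambda]^{-1}$ in $R$ must be shown to be $\mathrm{C}^2(B^\sigma)$ as well, which uses Proposition \ref{prop:Hchibar} together with commutator estimates collected in Appendix \ref{appendix:estimates}.

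Next I would compute $[F_0(\lambda), \i B^\sigma]$. Since $K_\sigma$ acts on $\mathcal{F}_\sigma$ and $B^\sigma$ acts on $\mathcal{F}^\sigma$, they commute, so $K_\sigma$ contributes nothing. On $\mathcal{F}^\sigma$, using the analogue of \eqref{eq:[Hf,iB]_away}, one has $[H_f, \i B^\sigma] = \mathrm{d}\Gamma(\kappa^\sigma |k| \kappa^\sigma) \ge 0$ and in fact $[H_f,\i B^\sigma]$ is comparable to $H_f$ on the low-energy sector where $\kappa^\sigma = 1$; similarly $[P_f, \i B^\sigma] = \mathrm{d}\Gamma(\kappa^\sigma k \kappa^\sigma)$, giving $[\tfrac12 P_f^2, \i B^\sigma] \ge 0$ on that sector (modulo cutoff errors supported at $H_f \gtrsim 3\sigma/4$). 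The term $-\nabla E_\sigma \otimes P_f$ produces $-\nabla E_\sigma \cdot \mathrm{d}\Gamma(\kappa^\sigma k \kappa^\sigma)$, which is bounded in absolute value by $|\nabla E_\sigma| \cdot H_f$ on the low-energy sector; using $|\nabla E_\sigma| \le |P| + \mathrm{C}\alpha \le 1/4$ from Proposition \ref{prop:Esigma(P)}, this is dominated by $\tfrac14 H_f$. Altogether, up to errors localized in $\{H_f \ge 3\rho\sigma/4\}$ (which are irrelevant after multiplying by $\mathds{1}_{\Delta_\sigma}(F(\lambda))$ and using the gap structure),
\begin{equation*}
[F_0(\lambda), \i B^\sigma] \ge \mathds{1} \otimes \big( H_f - \tfrac14 H_f \big) \ge \tfrac34\, \mathds{1}\otimes H_f \ge c\, H_f
\end{equation*}
for some $c>0$, and on the spectral subspace $\mathds{1}_{\Delta_\sigma}(F(\lambda))$ one argues, via the lower bound on $F_0(\lambda)$ coming from the gap $\rho\sigma$ of $K_\sigma$ above $E_\sigma$ and the choice $\lambda \in J_\sigma^<$, that $\mathds{1} \otimes H_f$ is bounded below by a multiple of $\rho\sigma$ on that subspace; the bookkeeping of constants ($11/128$, $13/128$, $\rho\sigma/128$) is precisely engineered so that the net lower bound is $\rho\sigma/128$.

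Finally, the commutator with the perturbation: one shows $\| [R, \i B^\sigma] \| \le \mathrm{C}\alpha^{1/2}\sigma$. For the term $\chi U_\sigma \chi$ this uses the explicit form \eqref{eq:Rsigma1} of $U_\sigma$ together with estimates of the type $\|\mathds{1}\otimes a(h^\sigma)(\cdot)\| \le \mathrm{C}\sigma^{1/2}$ and $\|P_f\chi^\sigma_f\| \le \mathrm{C}\sigma$, plus the commutator $[\Phi(h^\sigma), \i B^\sigma] = -\Phi(\i b^\sigma h^\sigma)$ and the boundedness of $\chi^\sigma_f(H_f)$-type cutoffs under commutation with $B^\sigma$ (Appendix \ref{appendix:estimates}, cf.\ Lemmata \ref{lm:||W||}, \ref{lm:[W,iB]}). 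For the second-order term $\chi W_\sigma \bar\chi [H_{\bar\chi}-\lambda]^{-1}\bar\chi W_\sigma \chi$ one differentiates the resolvent, $[\,[H_{\bar\chi}-\lambda]^{-1}, \i B^\sigma] = -[H_{\bar\chi}-\lambda]^{-1}[H_{\bar\chi}, \i B^\sigma][H_{\bar\chi}-\lambda]^{-1}$, and uses Proposition \ref{prop:Hchibar} (the $O(\sigma^{-1})$ bounds) together with the fact that $\chi W_\sigma$ and $W_\sigma\chi$ carry a factor $O(\alpha^{1/2}\sigma)$ by Lemma \ref{lm:||W||}, so that each of the several pieces is $O(\alpha^{1/2}\sigma)$ or better. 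Then, since $\mathrm{C}\alpha^{1/2}\sigma \le \tfrac{1}{2}\cdot\tfrac{\rho\sigma}{128}$ for $\alpha$ small (because $\sigma \le \mathrm{C}_0\alpha^{1/2}$ forces no constraint here, but $\alpha^{1/2}$ can be made small), combining with the lower bound on $[F_0(\lambda), \i B^\sigma]$ and a standard energy-localization argument (replacing $\mathds{1}_{\Delta_\sigma}$ by a smooth almost-projection and estimating the difference using the already-established bounds) yields the claimed inequality. The main obstacle I expect is not any single estimate but the careful energy-localization / gap argument: one must use the spectral projection $\mathds{1}_{\Delta_\sigma}(F(\lambda))$ together with the fact that on its range the $\mathcal{F}_\sigma$-factor is essentially forced onto (a neighborhood of) the ground state of $K_\sigma$ — otherwise the $K_\sigma$ energy exceeds $E_\sigma + \rho\sigma$ and pushes $F(\lambda)$ out of $\Delta_\sigma$ — so that the effective operator is genuinely $\tfrac12 P_f^2 + H_f - \nabla E_\sigma\cdot P_f$ on $\mathcal{F}^\sigma$, and only for this effective operator is the positivity of the commutator transparent. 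Managing the cutoff errors from $\kappa^\sigma$ and from $\chi^\sigma_f$ consistently throughout this reduction is the delicate part.
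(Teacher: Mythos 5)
Your proposal follows the same overall architecture as the paper's proof: decompose $F = F_0 + W_1 + W_2$ with $F_0$ the $K_\sigma$-restricted unperturbed part, compute $[F_0,\i B^\sigma]$ from below on the low-$H_f$ sector using $[H_f,\i B^\sigma]=\d\Gamma(\kappa^\sigma(k)^2|k|)$ and $|\nabla E_\sigma|\le 1/4$, bound $\|W_i\|$ and $\|[W_i,\i B^\sigma]\|$ by $O(\alpha^{1/2}\sigma)$, and assemble via energy localization. The paper implements the localization step with a translated operator $\tilde F=F+\lambda-E_\sigma$, an intermediate interval $\Delta'_\sigma$, a smooth bump $f_\sigma$, and Lemmata \ref{lm:Delta->Delta'}--\ref{lm:[F,iB]bounded}, but this is a technical elaboration of what you describe informally.

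There is, however, one genuine gap in your reasoning for the $W_2$ bound. You assert that ``$\chi W_\sigma$ and $W_\sigma\chi$ carry a factor $O(\alpha^{1/2}\sigma)$,'' and that this, combined with the $O(\sigma^{-1})$ resolvent bound of Proposition \ref{prop:Hchibar}, yields the required estimate. This is false: the piece $-\big((\nabla K_\sigma-\nabla E_\sigma)P_\sigma\big)\otimes P_f\chi^\sigma_f$ of $W_\sigma\chi$ has operator norm only $O(\sigma)$, \emph{not} $O(\alpha^{1/2}\sigma)$, since $\|(\nabla K_\sigma-\nabla E_\sigma)P_\sigma\|$ is $O(1)$ (bounded via $\|\nabla K_\sigma P_\sigma\|^2\le 2E_\sigma$). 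If you naively factor as $\|\bar\chi[H_{\bar\chi}-\lambda]^{-1}\bar\chi\|\cdot\|W_\sigma\chi\|^2 = O(\sigma^{-1})\cdot O(\sigma^2) = O(\sigma)$, you lose the required small prefactor and the Mourre estimate fails. The actual mechanism is Lemma \ref{lm:nablaK-nablaE} (attributed to \cite{AFGG}), which shows that the \emph{combination} $[H_{\bar\chi}-\lambda]^{-1/2}\bar\chi\big((\nabla K_\sigma-\nabla E_\sigma)_jP_\sigma\big)\otimes\mathds{1}_{H_f\le\delta}$ is $O(1+\delta^{1/2}\sigma^{-1/2})$ rather than the naive $O(\sigma^{-1/2})\cdot O(1)$. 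This gain of $\sigma^{1/2}$ relies crucially on the $\mathrm{C}^2$-regularity of $P\mapsto E_\sigma(P)$ uniformly in $\sigma$ (estimate \eqref{eq:nablaE(P)-nablaE(P')} in Proposition \ref{prop:Esigma(P)}), via the difference-quotient trick \eqref{eq:nablaK-nablaE_j_0}--\eqref{eq:nablaK-nablaE_j_2}; combined with $\|P_f\chi^\sigma_f\|\le\mathrm{C}\sigma$ and $\sigma\le\mathrm{C}_0\alpha^{1/2}$ it produces \eqref{eq:||W_2||_13}, and hence the $O(\alpha^{1/2}\sigma)$ bound for $W_2$. You cite Lemma \ref{lm:||W||} correctly, but the stated reason that lemma works is wrong, and the missing ingredient is precisely the hardest part of the section.
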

Before proceeding to the proof of this theorem we draw the desired conclusions from it.
\begin{proposition}\label{lm:LAP_F}
For any $\mathrm{C}_0>0$, there exists $\alpha_0>0$ such that, for any $|P| \le p_c$, $0 \le \alpha \le \alpha_0$, $0<\sigma\le \mathrm{C}_0 \alpha^{1/2}$, $1/2 < s \le 1$, and $\lambda \in J_\sigma^<$,
\begin{equation}\label{eq:LAP_F_1}
\sup_{z \in (\Delta_\sigma)_{\pm}} \big \| \langle B^\sigma \rangle^{- s} \big [ F(\lambda) - z \big ]^{-1} \langle B^\sigma \rangle^{-s} \big \| < \infty.
\end{equation}
Here $(\Delta_\sigma)_{\pm} = \{ z \in \mathbb{C} , \mathrm{Re} z \in [ - \rho \sigma / 128 , \rho \sigma / 128 ]  , 0 < \pm \mathrm{Im} z \le 1 \}$. Moreover, the map
\begin{equation}\label{eq:LAP_F_2}
J_\sigma^< \times \Delta_\sigma \ni ( \lambda \times \mu ) \mapsto \langle B^\sigma \rangle^{- s} \big [ F(\lambda) - \mu \pm \i 0 \big ]^{-1} \langle B^\sigma \rangle^{-s} \in B( \mathcal{H} )
\end{equation}
is uniformly H\"older continuous in $( \lambda , \mu )$ of order $s-1/2$.
\end{proposition}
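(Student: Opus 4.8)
The plan is to obtain \eqref{eq:LAP_F_1} and the H\"older continuity in $\mu$ from the Mourre estimate of Theorem \ref{thm:Mourre} by the standard conjugate operator method (\cite{Mo,ABG,HS}), and then to upgrade this to joint H\"older continuity in $(\lambda,\mu)$ via a resolvent identity in the parameter $\lambda$. Besides the Mourre estimate, the abstract theory requires: self-adjointness of $F(\lambda)$ (already noted); invariance of $D(F(\lambda))$ under the unitary group $e^{\i t B^\sigma}$; and the regularity $F(\lambda)\in \mathrm{C}^2(B^\sigma)$, i.e. that $[F(\lambda),\i B^\sigma]$ and $[[F(\lambda),\i B^\sigma],\i B^\sigma]$ extend to $F(\lambda)$-bounded operators. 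The domain $D(F(\lambda))=D(\mathds{1}\otimes(\tfrac{1}{2} P_f^2+H_f))\cap \mathrm{Ran}(P_\sigma\otimes\mathds{1})$ is independent of $\lambda$, and since $B^\sigma$ acts only on the factor $\mathcal{F}^\sigma$ and commutes with $P_\sigma\otimes\mathds{1}$, the invariance $e^{\i t B^\sigma}D(F(\lambda))\subset D(F(\lambda))$ follows exactly as in the proof of Corollary \ref{cor:LAPaway} (cf. \cite[Proposition~9]{FGS1}).

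For the $\mathrm{C}^2(B^\sigma)$-property I would proceed term by term in the expression \eqref{eq:expr_Feshbach}. The operator $K_\sigma\otimes\mathds{1}$ commutes with $B^\sigma$, since the two act on complementary photon energy ranges. The commutators of $\mathds{1}\otimes(\tfrac{1}{2} P_f^2+H_f)$ and of $\nabla E_\sigma\otimes P_f$ with $B^\sigma$ are computed from the identities \eqref{eq:[Hf,iB]_away}--\eqref{eq:[Phi,iB]_away} adapted to the cut-off generator $b^\sigma=\kappa^\sigma b\kappa^\sigma$, and are relatively bounded with respect to $\mathds{1}\otimes(\tfrac{1}{2} P_f^2+H_f)$, hence $F(\lambda)$-bounded. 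For the remaining terms $\chi U_\sigma\chi$ and the non-local Feshbach term $\chi W_\sigma\bar\chi[H_{\bar\chi}-\lambda]^{-1}\bar\chi W_\sigma\chi$ I would invoke the commutator estimates on $W_\sigma$ (Lemmata \ref{lm:nablaK-nablaE}, \ref{lm:||W||}, \ref{lm:[W,iB]}) together with the identity
\[
[B^\sigma,[H_{\bar\chi}-\lambda]^{-1}]=-[H_{\bar\chi}-\lambda]^{-1}[B^\sigma,H_{\bar\chi}][H_{\bar\chi}-\lambda]^{-1}
\]
and the resolvent bounds of Proposition \ref{prop:Hchibar}, which apply uniformly on $J_\sigma^<$ because $J_\sigma^<\subset(-\infty,E_\sigma+\rho\sigma/4]$ for $\alpha$ small; the double commutator is treated in the same way. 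All bounds so obtained are uniform in $\lambda\in J_\sigma^<$.

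Granted the regularity and the strict Mourre estimate of Theorem \ref{thm:Mourre}, the abstract conjugate operator theorem (\cite{Mo,ABG,HS}) yields \eqref{eq:LAP_F_1} together with $(s-1/2)$-H\"older continuity of $\mu\mapsto\langle B^\sigma\rangle^{-s}[F(\lambda)-\mu\pm\i 0]^{-1}\langle B^\sigma\rangle^{-s}$ on $\Delta_\sigma$ (the estimate of Theorem \ref{thm:Mourre} holds by the same proof on a slightly larger interval, so that $\Delta_\sigma$ sits in its interior). Since the Mourre constant $\rho\sigma/128$, the $\mathrm{C}^2(B^\sigma)$-bounds and the relative bounds are all uniform over $\lambda\in J_\sigma^<$, so are the resulting LAP bound and the H\"older constant in $\mu$.

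Finally, to pass to joint H\"older continuity: by \eqref{eq:expr_Feshbach}, $F(\lambda_1)-F(\lambda_2)=(\lambda_2-\lambda_1)G_{12}$ with $G_{12}:=\mathds{1}+\chi W_\sigma\bar\chi[H_{\bar\chi}-\lambda_1]^{-1}[H_{\bar\chi}-\lambda_2]^{-1}\bar\chi W_\sigma\chi$, which is bounded by Proposition \ref{prop:Hchibar} and the boundedness of $\chi W_\sigma$, $W_\sigma\chi$, and which moreover satisfies $\|\langle B^\sigma\rangle^s G_{12}\langle B^\sigma\rangle^s\|\le\mathrm{C}$ by the same weighted estimates used above. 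Writing $R_j:=\langle B^\sigma\rangle^{-s}[F(\lambda_j)-\mu\pm\i 0]^{-1}\langle B^\sigma\rangle^{-s}$, the second resolvent identity gives
\[
\langle B^\sigma\rangle^{-s}\big([F(\lambda_1)-\mu\pm\i 0]^{-1}-[F(\lambda_2)-\mu\pm\i 0]^{-1}\big)\langle B^\sigma\rangle^{-s}=-(\lambda_2-\lambda_1)\,R_1\,\langle B^\sigma\rangle^s G_{12}\langle B^\sigma\rangle^s\,R_2,
\]
and $R_1,R_2$ are bounded uniformly by \eqref{eq:LAP_F_1}, so the left side is $\le\mathrm{C}|\lambda_1-\lambda_2|$ uniformly in $\mu$. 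Combining the uniform $(s-1/2)$-H\"older bound in $\mu$ with this Lipschitz bound in $\lambda$, and using $s-1/2\le 1$, the triangle inequality gives joint $(s-1/2)$-H\"older continuity in $(\lambda,\mu)$. I expect the main obstacle to be the verification of $F(\lambda)\in\mathrm{C}^2(B^\sigma)$, specifically controlling the first and (especially) second commutators of the non-local Feshbach term with $B^\sigma$: this requires combining the commutator estimates for $W_\sigma$ with good control of $[B^\sigma,H_{\bar\chi}]$ and the resolvent bounds of Proposition \ref{prop:Hchibar}, all while keeping every constant uniform in $\lambda\in J_\sigma^<$.
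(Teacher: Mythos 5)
Most of your proposal runs along the same lines as the paper's proof: the invariance of $D(F(\lambda))$ under $e^{\i tB^\sigma}$ via \cite[Proposition~9]{FGS1}, the term-by-term verification of $F(\lambda)\in\mathrm{C}^2(B^\sigma)$ using the commutator computations of Lemmata~\ref{lm:[F0,iB]} and~\ref{lm:[W,iB]}, and the uniformity in $\lambda\in J_\sigma^<$ of all constants — this is exactly what the paper does and your description is correct.

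Your argument for the joint H\"older continuity, however, contains a genuine gap. You rely on the bound $\|\langle B^\sigma\rangle^{s}G_{12}\langle B^\sigma\rangle^{s}\|\le\mathrm{C}$, where
\[
G_{12}=P_\sigma\otimes\mathds{1}+\chi W_\sigma\bar\chi\,[H_{\bar\chi}-\lambda_1]^{-1}[H_{\bar\chi}-\lambda_2]^{-1}\bar\chi W_\sigma\chi.
\]
This bound is false: the first summand commutes with $B^\sigma$ on $\mathrm{Ran}(P_\sigma\otimes\mathds{1})$, so $\langle B^\sigma\rangle^{s}(P_\sigma\otimes\mathds{1})\langle B^\sigma\rangle^{s}=\langle B^\sigma\rangle^{2s}$ restricted to $\mathrm{Ran}(P_\sigma\otimes\mathds{1})$, which is manifestly unbounded. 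Consequently the right-hand side of your resolvent identity,
$-(\lambda_2-\lambda_1)R_1\,\langle B^\sigma\rangle^{s}G_{12}\langle B^\sigma\rangle^{s}\,R_2$, is not under control. Indeed, inserting the identity part into the display gives back $\langle B^\sigma\rangle^{-s}[F(\lambda_1)-\mu\pm\i 0]^{-1}[F(\lambda_2)-\mu\pm\i 0]^{-1}\langle B^\sigma\rangle^{-s}$, which is not a bounded operator: for $s\le1$ it diverges like $|\Im\mu|^{-(3/2-s)}$ as $\Im\mu\to0$, precisely because the boundary value of the weighted resolvent is only $(s-1/2)$-H\"older and hence not differentiable in $\mu$. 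This is also why the Lipschitz-in-$\lambda$ bound you aim for is stronger than what can actually be true here: the $\mathds{1}$-piece of $G_{12}$ amounts to a shift $\mu\mapsto\mu-(\lambda_2-\lambda_1)$ of the spectral parameter, which is only controllable up to order $s-1/2$. The best one gets in $\lambda$ is again $(s-1/2)$-H\"older, not Lipschitz.

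The paper circumvents all this: it records only the plain operator bound $\|G_{12}\|\le\mathrm{C}$, uniform in $\lambda,\lambda'\in J_\sigma^<$ (a direct consequence of \eqref{eq:Feshbach4} in Proposition~\ref{prop:Hchibar}), and then invokes the standard parameter-dependent Mourre theory of \cite{PSS,AHS,HS}: given a uniform strict Mourre estimate, uniform $\mathrm{C}^2(B^\sigma)$-bounds, and a bounded Lipschitz-in-$\lambda$ perturbation, the weighted boundary values $\langle B^\sigma\rangle^{-s}[F(\lambda)-\mu\pm\i0]^{-1}\langle B^\sigma\rangle^{-s}$ are jointly $(s-1/2)$-H\"older in $(\lambda,\mu)$. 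No weighted bound on $G_{12}$ is needed, and only H\"older (not Lipschitz) regularity in $\lambda$ is obtained, consistent with the statement of the Proposition.
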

\begin{proof}
It follows from Equations \eqref{eq:Hsigma+} and \eqref{eq:expr_Feshbach1} that
\begin{align}
F(\lambda) =& \mathds{1} \otimes \big ( \frac{1}{2} P_f^2 + H_f \big ) - \nabla E_\sigma \otimes P_f + E_\sigma- \lambda, \notag \\
& + \chi W_\sigma \chi - \chi W_\sigma \bar \chi \big [ H_{\bar \chi} - \lambda \big ]^{-1} \bar \chi W_\sigma \chi.
\end{align}
By standard Mourre theory (see for instance \cite{ABG}) and in view of Theorem \ref{thm:Mourre}, 
the limiting absorption principle \eqref{eq:LAP_F_1} and the H{\"o}lder continuity in $\mu$ follow from the fact that $F(\lambda) \in \mathrm{C}^2( B^\sigma )$. Since $\chi W_\sigma$ and $W_\sigma \chi$ are bounded operators, it follows that $D( F(\lambda) ) = D( \mathds{1} \otimes ( \frac{1}{2} P_f^2 + H_f ) )$, and, using the method of \cite[Proposition 9]{FGS1}, one verifies that
\begin{equation}
e^{ \i t B^\sigma } D( \mathds{1} \otimes ( \frac{1}{2} P_f^2 + H_f ) ) \subset D( \mathds{1} \otimes ( \frac{1}{2} P_f^2 + H_f ) ),
\end{equation}
for all $t \in \mathbb{R}$. Hence it suffices to show that $[ F(\lambda) , \i B^\sigma ]$ and $[ [ F(\lambda) , \i B^\sigma ] , \i B^\sigma ]$ are bounded with respect to $\mathds{1} \otimes ( \frac{1}{2} P_f^2 + H_f )$, which follows easily from the expressions of the commutators; (see, in particular, the proofs of Lemmata \ref{lm:[F0,iB]} and \ref{lm:[W,iB]}).

Now, for $\lambda, \lambda' \in J_\sigma^<$, we have
\begin{equation}
F(\lambda) - F(\lambda') = ( \lambda' - \lambda ) \big ( P_\sigma \otimes \mathds{1} + \chi W_\sigma \bar \chi \big [ H_{\bar \chi} - \lambda \big ]^{-1} \big [ H_{\bar \chi} - \lambda' \big ]^{-1} \bar \chi W_\sigma \chi \big ).
\end{equation}
Equation \eqref{eq:Feshbach4} in the statement of Proposition \ref{prop:Hchibar} implies that
\begin{equation}
\big \| \chi W_\sigma \bar \chi \big [ H_{\bar \chi} - \lambda \big ]^{-1} \big [ H_{\bar \chi} - \lambda' \big ]^{-1} \bar \chi W_\sigma \chi \big \| \le \mathrm{C},
\end{equation}
where $\mathrm{C}$ is independent of $\lambda$ and $\lambda'$. Thus, the H{\"o}lder continuity in $(\lambda,\mu)$ stated in \eqref{eq:LAP_F_2} follows again by standard arguments of Mourre theory (see \cite{PSS,AHS,HS}).
\end{proof}
This proposition and Theorem \ref{thm:LAPtransfer} imply the following
\begin{corollary}\label{cor:LAP_K}
For any $\mathrm{C}_0>0$, there exists $\alpha_0>0$ such that, for any $|P| \le p_c$, $0 \le \alpha \le \alpha_0$, $0<\sigma\le \mathrm{C}_0 \alpha^{1/2}$ and $1/2 < s \le 1$,
\begin{equation}
\sup_{z \in (J_\sigma^<)_{\pm}} \big \| \langle B^\sigma \rangle^{- s} \big [ H - z \big ]^{-1} \langle B^\sigma \rangle^{-s} \big \| < \infty,
\end{equation}
where $(J_\sigma^<)_{\pm} = \{ z \in \mathbb{C} , \mathrm{Re}z \in [ E + 11 \rho \sigma / 128 , E + 13 \rho \sigma / 128 ] , 0 < \pm \mathrm{Im}z \le 1 \}$. Moreover, the map
\begin{equation}
[ E + \frac{ 11 \rho \sigma }{128} , E + \frac{ 13 \rho \sigma }{ 128 } ]  \ni \lambda \mapsto \langle B^\sigma \rangle^{- s} \big [ H - \lambda \pm \i 0 \big ]^{-1} \langle B^\sigma \rangle^{-s} \in B( \mathcal{H} )
\end{equation}
is uniformly H\"older continuous in $\lambda$ of order $s-1/2$.
\end{corollary}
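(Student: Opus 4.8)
The plan is to deduce Corollary \ref{cor:LAP_K} from Proposition \ref{lm:LAP_F} by means of the transfer theorem for the limiting absorption principle through the smooth Feshbach--Schur map, Theorem \ref{thm:LAPtransfer} of \cite{FGS3} (recalled in Appendix \ref{appendix:Feshbach}); the work consists in checking that all of its hypotheses have already been established in Sections \ref{section:decomp}--\ref{section:mourre}. The structural hypotheses are supplied by the splitting $H = T_\sigma + W_\sigma$ of \eqref{eq:Rsigma+}, the intertwining $[\chi,T_\sigma]=0$, the boundedness of $\chi W_\sigma$ and $W_\sigma\chi$, and Proposition \ref{prop:Hchibar}, which yields the bounded invertibility of $H_{\bar\chi}-\lambda$ on $\mathrm{Ran}(\bar\chi)$ together with the estimates \eqref{eq:Feshbach3}--\eqref{eq:Feshbach4}. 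First one checks that $J_\sigma^<$ lies in the admissible range: by item 2 of Proposition \ref{prop:Esigma(P)}, $\sup J_\sigma^< = E + 13\rho\sigma/128 \le E_\sigma + \mathrm{C}\alpha\sigma + 13\rho\sigma/128 < E_\sigma + \rho\sigma/4$ once $\alpha$ is small, so $F_\chi(H-\lambda)$, hence $F(\lambda)$, is well defined for every $\lambda\in J_\sigma^<$; moreover the proof of Lemma \ref{lm:f(Hsigma)} together with the Neumann-series argument in the proof of Proposition \ref{prop:Hchibar} shows that $H_{\bar\chi}|_{\mathrm{Ran}(\bar\chi)}$ has spectrum contained in $[E_\sigma + c_0\rho\sigma,\infty)$ with $c_0>13/128$, so that $H_{\bar\chi}-z$ is bounded invertible on $\mathrm{Ran}(\bar\chi)$ for every $z\in(J_\sigma^<)_\pm$, uniformly with $\|\bar\chi[H_{\bar\chi}-z]^{-1}\bar\chi\|\le\mathrm{C}\sigma^{-1}$.

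The dynamical hypotheses concern the conjugate operator $B^\sigma$. Since $B^\sigma=\d\Gamma(\kappa^\sigma b\kappa^\sigma)$ acts only on $\mathcal{F}^\sigma$ and commutes with $K_\sigma\otimes\mathds{1}$, the commutators of the unperturbed part $T_\sigma$ of \eqref{eq:Hsigma+} with $\i B^\sigma$ are $T_\sigma$-bounded (as in \eqref{eq:[Hf,iB]_away}, with the extra cutoff $\kappa^\sigma$) and $e^{\i tB^\sigma}$ preserves $D(T_\sigma)$; combined with Lemma \ref{lm:[W,iB]} (which gives $[W_\sigma,\i B^\sigma]$ bounded) and Proposition \ref{lm:LAP_F} (which supplies $F(\lambda)\in\mathrm{C}^2(B^\sigma)$, the LAP for $F(\lambda)$ on $(\Delta_\sigma)_\pm$ uniformly in $\lambda\in J_\sigma^<$, and the joint $(s-\tfrac12)$-Hölder continuity in $(\lambda,\mu)$), this is exactly the input Theorem \ref{thm:LAPtransfer} requires. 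The theorem reconstructs the resolvent of $H$ from that of $F(\lambda)$ via a formula of the type
\begin{equation*}
[H-z]^{-1}=\bar\chi[H_{\bar\chi}-z]^{-1}\bar\chi+Q_\chi(z)\,\big[F(\lambda)-(z-\lambda)+R(\lambda,z)\big]^{-1}\,Q_\chi(z)^{\#},
\end{equation*}
with $Q_\chi(z)=\chi-\bar\chi[H_{\bar\chi}-z]^{-1}\bar\chi W_\sigma\chi$ and $R(\lambda,z)$ the (operator-norm small, $z$-analytic) nonlinear remainder of the Feshbach map, and it sandwiches this identity between the weights $\langle B^\sigma\rangle^{-s}$. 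Compatibility of the weights follows because $B^\sigma$ commutes with $K_\sigma$ and $W_\sigma$ is of class $\mathrm{C}^1(B^\sigma)$: the operators $\langle B^\sigma\rangle^{s}\chi W_\sigma\langle B^\sigma\rangle^{-s}$, $\langle B^\sigma\rangle^{s}W_\sigma\chi\langle B^\sigma\rangle^{-s}$, and $\langle B^\sigma\rangle^{s}\bar\chi[H_{\bar\chi}-z]^{-1}\bar\chi\langle B^\sigma\rangle^{-s}$ are bounded uniformly in $z\in(J_\sigma^<)_\pm$ — for the last one using $[\i B^\sigma,[H_{\bar\chi}-z]^{-1}]=[H_{\bar\chi}-z]^{-1}[\i B^\sigma,H_{\bar\chi}][H_{\bar\chi}-z]^{-1}$ together with the $\mathrm{C}\sigma^{-1}$ bound above.

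Applying Theorem \ref{thm:LAPtransfer} then gives, with a constant uniform in $\lambda\in J_\sigma^<$ (thanks to the uniformity in $\lambda$ in Propositions \ref{lm:LAP_F} and \ref{prop:Hchibar} and in the weighted bounds just listed), the estimate $\sup_{z\in(J_\sigma^<)_\pm}\|\langle B^\sigma\rangle^{-s}[H-z]^{-1}\langle B^\sigma\rangle^{-s}\|<\infty$: for $z\in(J_\sigma^<)_\pm$ one takes $\lambda=\mathrm{Re}\,z$, so that the spectral parameter of $F(\lambda)$ is $0$, well inside $\Delta_\sigma$, and the LAP of Proposition \ref{lm:LAP_F} applies. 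The joint Hölder continuity \eqref{eq:LAP_F_2}, transported through the reconstruction formula (whose remaining ingredients are analytic, hence Hölder, in $z$ on $(J_\sigma^<)_\pm$), yields the uniform $(s-\tfrac12)$-Hölder continuity of $\lambda\mapsto\langle B^\sigma\rangle^{-s}[H-\lambda\pm\i0]^{-1}\langle B^\sigma\rangle^{-s}$ on $[E+11\rho\sigma/128,E+13\rho\sigma/128]$.

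The main point is that all the genuinely analytic work has already been done: the Mourre estimate for $F(\lambda)$ (Theorem \ref{thm:Mourre}), the LAP for $F(\lambda)$ (Proposition \ref{lm:LAP_F}), and the control of the Feshbach denominator (Proposition \ref{prop:Hchibar}). What remains for this corollary is bookkeeping — matching $\Delta_\sigma$ with $(J_\sigma^<)_\pm$ and checking the $B^\sigma$-regularity of the reconstruction formula — the only delicate element being the uniform, correctly weighted control of the nonlinear remainder $R(\lambda,z)$ as $z$ varies, which is exactly what Theorem \ref{thm:LAPtransfer} is engineered to absorb; it therefore suffices to confirm its hypotheses rather than to reprove it.
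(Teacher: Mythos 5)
Your proposal is correct and takes the same route as the paper, which offers no written proof beyond the assertion that Corollary~\ref{cor:LAP_K} follows from Proposition~\ref{lm:LAP_F} together with Theorem~\ref{thm:LAPtransfer}. You have simply made explicit the verification of the hypotheses of the transfer theorem (structural hypotheses from Section~\ref{section:Feshbach}, $B^\sigma$-regularity from Lemma~\ref{lm:[W,iB]} and Proposition~\ref{lm:LAP_F}, uniformity in $\lambda$ from the joint H\"older continuity of \eqref{eq:LAP_F_2}), which the paper leaves implicit.
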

By arguments similar to ones used in the proof of Corollary \ref{cor:LAPaway_2}, Corollary \ref{cor:LAP_K} implies the following result.
\begin{corollary}\label{cor:LAP_K_2}
Under the conditions of Corollary \ref{cor:LAP_K}, for any compact interval $J \subset (E , \mathrm{C}_0 \alpha^{\frac{1}{2}}]$,
\begin{equation}
\sup_{z \in J_{\pm}} \big \| ( \d \Gamma (\langle y \rangle) + 1)^{- s} \big [ H(P) - z \big ]^{-1} ( \d \Gamma (\langle y \rangle) + 1)^{-s} \big \| < \infty,
\end{equation}
and the map
\begin{equation}
J \ni \lambda \mapsto ( \d \Gamma (\langle y \rangle) + 1)^{- s} \big [ H(P) - \lambda \pm \i 0 \big ]^{-1} ( \d \Gamma (\langle y \rangle) + 1)^{-s} \in B( \mathcal{H} )
\end{equation}
is uniformly H\"older continuous in $\lambda$ of order $s-1/2$.
\end{corollary}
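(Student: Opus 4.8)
The plan is to imitate, step by step, the proof of Corollary~\ref{cor:LAPaway_2}, with the dilatation generator $B=\d\Gamma(b)$ replaced by the cutoff generator $B^\sigma=\d\Gamma(\kappa^\sigma b\kappa^\sigma)$ and Corollary~\ref{cor:LAPaway} replaced by Corollary~\ref{cor:LAP_K}. Fix a compact interval $J\subset(E,\mathrm{C}_0\alpha^{1/2}]$ and pick $\phi\in\mathrm{C}_0^\infty(\mathbb{R};[0,1])$ with $\phi=1$ on $[E,\sup J]$ and $\mathrm{supp}(\phi)\subset(-\infty,\tfrac32\sup J)$, and put $\bar\phi:=1-\phi$. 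Since $\bar\phi$ vanishes on $[E,\sup J]\supset J$ and is smooth with $\bar\phi(\sup J)=0$, the function $t\mapsto\bar\phi(t)(t-z)^{-1}$ is bounded on $[E,\infty)$ uniformly for $z\in J_\pm$, so $\bar\phi(H)[H-z]^{-1}$ is uniformly bounded and the $\bar\phi(H)$ part is harmless; it therefore suffices to bound $(\d\Gamma(\langle y\rangle)+1)^{-s}\phi(H)[H-z]^{-1}(\d\Gamma(\langle y\rangle)+1)^{-s}$ uniformly on $J_\pm$, and likewise for its Hölder continuity.

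I would then insert a smooth photon-momentum cutoff. Since $\sup J\le\mathrm{C}_0\alpha^{1/2}$ is small compared with $\Lambda$, one has $\tilde\Lambda=\Lambda$ and the computation leading to \eqref{eq:phi(H(P))Gamma_2} applies; moreover $H$ --- hence $\phi(H)$ --- commutes with $\mathcal{U}^*(P_\Omega\otimes\mathds{1})\mathcal{U}$, which in turn commutes with $\Gamma(\kappa^{2\Lambda})$, so that $\phi(H)=\phi(H)\Gamma(\kappa^{2\Lambda})=\Gamma(\kappa^{2\Lambda})\phi(H)$. The decisive ingredient is then the weighted estimate
\[
\big\|\,\langle B^\sigma\rangle^{s}\,\Gamma(\kappa^{2\Lambda})\,(\d\Gamma(\langle y\rangle)+1)^{-s}\,\big\|\le\mathrm{C},\qquad 0\le s\le1 .
\]
To prove it --- exactly as in Corollary~\ref{cor:LAPaway_2} --- I would first treat $s=1$: since $b^\sigma=\kappa^\sigma b\kappa^\sigma$ has all its coefficients supported in $\{|k|\le\sigma\}$, the one-photon operator $b^\sigma\langle y\rangle^{-1}$ is bounded on $\mathrm{L}^2(\mathbb{R}^3\times\mathbb{Z}_2)$ --- in fact $b^\sigma=\langle y\rangle^{1/2}g\langle y\rangle^{1/2}$ with $g$ bounded --- so that, restricting to the $n$-photon subspaces and using the cutoff $\Gamma(\kappa^{2\Lambda})$, one gets $\|B^\sigma\Gamma(\kappa^{2\Lambda})(\d\Gamma(\langle y\rangle)+1)^{-1}\|\le\mathrm{C}$ uniformly in the photon number; the range $0\le s\le1$ follows by interpolation.

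Finally, writing $\kappa:=\kappa^{2\Lambda}$ and using $\phi(H)=\Gamma(\kappa)\phi(H)=\phi(H)\Gamma(\kappa)$ together with $[\phi(H),[H-z]^{-1}]=0$, I would move the cutoffs so as to exhibit $\langle B^\sigma\rangle^{-s}[H-z]^{-1}\langle B^\sigma\rangle^{-s}$ sandwiched between the factor $\langle B^\sigma\rangle^{s}\Gamma(\kappa)(\d\Gamma(\langle y\rangle)+1)^{-s}$ and its adjoint, picking up only one extra factor $\langle B^\sigma\rangle^{-s}\phi(H)\langle B^\sigma\rangle^{s}$; the latter is bounded by the regularity of $\phi(H)$ with respect to $B^\sigma$, which follows from $H\in\mathrm{C}^2(B^\sigma)$ --- verified by the same commutator computations as in the proofs of Corollary~\ref{cor:LAPaway} and Proposition~\ref{lm:LAP_F}. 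Corollary~\ref{cor:LAP_K} then delivers the uniform bound on $J_\pm$, and its Hölder part gives the uniform Hölder continuity in the same way. The step I expect to be the main obstacle is the weighted estimate above, i.e.\ upgrading the one-photon bound for $b^\sigma$ to a bound on $B^\sigma$ uniform over the Fock-space sectors; everything else is a reorganisation of the proof of Corollary~\ref{cor:LAPaway_2}.
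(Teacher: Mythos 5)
Your proposal correctly mirrors the proof of Corollary~\ref{cor:LAPaway_2}, which is exactly what the paper intends; the key estimate you single out, $\|\langle B^\sigma\rangle^s\,\Gamma(\kappa^{2\Lambda})\,(\d\Gamma(\langle y\rangle)+1)^{-s}\|\le\mathrm{C}$, is the right analogue of \eqref{eq:B_dGamma}, and your observation that $b^\sigma=\kappa^\sigma b\kappa^\sigma$ already has coefficients supported in $\{|k|\le\sigma\}$, so that the one-photon bound (and hence, by sector-wise restriction and interpolation, the second-quantized bound) holds uniformly in $\sigma$, is precisely the point that makes the argument carry over. You also correctly flag the extra factor $\langle B^\sigma\rangle^{-s}\phi(H)\langle B^\sigma\rangle^{s}$, which the paper leaves implicit and which is controlled by the $\mathrm{C}^2(B^\sigma)$ regularity of $H$ established by the same commutator computations used elsewhere.

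The one step you gloss over is that Corollary~\ref{cor:LAP_K} delivers the weighted LAP only on the $\sigma$-dependent interval $J_\sigma^<=[\,E+11\rho\sigma/128,\,E+13\rho\sigma/128\,]$, not directly on an arbitrary compact $J\subset(E,\mathrm{C}_0\alpha^{1/2}]$. You must cover $J$ by finitely many such intervals as $\sigma$ ranges over $(0,\mathrm{C}'_0\alpha^{1/2}]$, apply the transfer argument to each, and then take the maximum over the finite cover. This works precisely because the bounds you invoke --- $\|\langle B^\sigma\rangle^s\Gamma(\kappa^{2\Lambda})(\d\Gamma(\langle y\rangle)+1)^{-s}\|\le\mathrm{C}$ and $\|\langle B^\sigma\rangle^{-s}\phi(H)\langle B^\sigma\rangle^s\|\le\mathrm{C}$ --- are uniform in $\sigma$, so the final weight $(\d\Gamma(\langle y\rangle)+1)^{-s}$ is oblivious to the $\sigma$-dependence of the conjugate operator; without this uniformity the covering would not yield a $\sigma$-independent constant. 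This is the step the introduction refers to with ``the intervals above with $\sigma\le\mathrm{C}'_0\alpha^{1/2}$ cover the interval $(E(P),\mathrm{C}_0\alpha^{1/2}]$,'' and it should be stated explicitly in the proof.
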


Now we proceed to the proof of Theorem \ref{thm:Mourre}. It will be divided into a sequence of Lemmata. In what follows we often do not display the argument $\lambda$. First, let us write

\begin{equation}
F = F_0 + W_1 + W_2,
\end{equation}
where
\begin{align}
& F_0 := \mathds{1} \otimes \big ( \frac{1}{2} P_f^2 + H_f \big ) - \nabla E_\sigma \otimes P_f + E_\sigma- \lambda, \\
& W_1 := \chi U_\sigma \chi, \  ( = \chi W_\sigma \chi \mbox{ by Feynman-Hellman; see above}) \label{eq:def_W1} \\
& W_2 := - \chi W_\sigma \bar \chi \big [ H_{\bar \chi} - \lambda \big ]^{-1} \bar \chi W_\sigma \chi. \label{eq:def_W2}
\end{align}
Let us begin by estimating $[F_0,\i B^\sigma]$ from below on the range of $\mathds{1} \otimes \mathds{1}_{ H_f \le \delta \rho \sigma }$, for some suitably chosen $\delta > 0$.
\begin{lemma}\label{lm:[F0,iB]}
Let $|P| \le p_c$ and $\delta > 0$ be such that $\delta \rho \sigma < 3 \sigma / 4$. Then on $\mathrm{Ran} ( \mathds{1} \otimes \mathds{1}_{ H_f \le \delta \rho \sigma } )$,
\begin{align}
& \big [ F_0, \i B^\sigma \big ]  \ge \frac{1}{2} ( \mathds{1} \otimes H_f )  - \mathrm{C} \sigma^2 ,
\end{align}
where $\mathrm{C}$ is a positive constant.
\end{lemma}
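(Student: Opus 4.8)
The plan is to compute the commutator $[F_0,\i B^\sigma]$ explicitly on the range of the low-energy projection and then use elementary positivity. Recall $F_0=\mathds{1}\otimes(\tfrac12 P_f^2+H_f)-\nabla E_\sigma\otimes P_f+E_\sigma-\lambda$ and $B^\sigma=\d\Gamma(b^\sigma)$ with $b^\sigma=\kappa^\sigma b\kappa^\sigma$, which acts only on the factor $\mathcal{F}^\sigma$. Hence $K_\sigma\otimes\mathds{1}$ (implicit in $F_0$ through the reduction to $\mathrm{Ran}(P_\sigma\otimes\mathds{1})$) and the scalar $E_\sigma-\lambda$ commute with $B^\sigma$, so it suffices to commute $\mathds{1}\otimes(\tfrac12 P_f^2+H_f)$ and $\nabla E_\sigma\otimes P_f$ with $\i B^\sigma$. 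Using $[\d\Gamma(a),\d\Gamma(c)]=\d\Gamma([a,c])$, the one-particle relations $[\,|k|,\i b\,]=|k|$ and $[\,k_j,\i b\,]=k_j$ for the dilation generator $b$ (the building blocks of \eqref{eq:[Hf,iB]_away}), and the fact that the multiplication operators $|k|$ and $k_j$ commute with $\kappa^\sigma$, I would obtain
\begin{align}
[\,H_f,\i B^\sigma\,]=\d\Gamma\big(|k|(\kappa^\sigma)^2\big),\qquad [\,P_{f,j},\i B^\sigma\,]=\d\Gamma\big(k_j(\kappa^\sigma)^2\big),
\end{align}
together with the Leibniz formula $[\tfrac12 P_f^2,\i B^\sigma]=\tfrac12\sum_j\big(P_{f,j}\,\d\Gamma(k_j(\kappa^\sigma)^2)+\d\Gamma(k_j(\kappa^\sigma)^2)\,P_{f,j}\big)$. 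These identities are justified on the relevant domains exactly as in the cognate computation for $H(P)$ and $B$ in the proof of Corollary \ref{cor:LAPaway}.

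Next I would restrict to $\mathrm{Ran}(\mathds{1}\otimes\mathds{1}_{H_f\le\delta\rho\sigma})$. Since $\delta\rho\sigma<3\sigma/4$ and $\kappa^\sigma\equiv1$ on $\{|k|\le 3\sigma/4\}$, every photon occurring in a state of this subspace has $|k|\le\delta\rho\sigma<3\sigma/4$, so $(\kappa^\sigma)^2$ acts as the identity there; moreover $\d\Gamma(|k|(\kappa^\sigma)^2)$ and $\d\Gamma(k_j(\kappa^\sigma)^2)$ are diagonal in the photon momenta and hence leave this subspace invariant. Therefore, on $\mathrm{Ran}(\mathds{1}\otimes\mathds{1}_{H_f\le\delta\rho\sigma})$,
\begin{align}
[\,F_0,\i B^\sigma\,]=\mathds{1}\otimes\big(P_f^2+H_f\big)-\nabla E_\sigma\otimes P_f ,
\end{align}
the factor $\tfrac12$ in front of $P_f^2$ being absorbed by the Leibniz doubling.

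Finally I would bound this from below. By Proposition \ref{prop:Esigma(P)}, $|\nabla E_\sigma-P|\le\mathrm{C}\alpha$, hence $|\nabla E_\sigma|\le p_c+\mathrm{C}\alpha\le 1/2$ once $\alpha_0$ is small enough. Since $P_f^2\ge0$ and, as quadratic forms, $\pm\,\nabla E_\sigma\cdot P_f=\pm\,\d\Gamma(\nabla E_\sigma\cdot k)\le|\nabla E_\sigma|\,\d\Gamma(|k|)=|\nabla E_\sigma|\,H_f$, we get, on $\mathrm{Ran}(\mathds{1}\otimes\mathds{1}_{H_f\le\delta\rho\sigma})$,
\begin{align}
[\,F_0,\i B^\sigma\,]\ \ge\ \mathds{1}\otimes\big((1-|\nabla E_\sigma|)H_f\big)\ \ge\ \tfrac12(\mathds{1}\otimes H_f)\ \ge\ \tfrac12(\mathds{1}\otimes H_f)-\mathrm{C}\sigma^2,
\end{align}
which is the claim. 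For this lemma the whole content reduces to the short computation above, so the only genuinely technical point — and thus the ``main obstacle'' — is the routine domain / essential self-adjointness bookkeeping needed to make the commutator identities rigorous for the unbounded conjugate operator $B^\sigma$, handled just as for $B$ in Corollary \ref{cor:LAPaway}. (The harmless summand $-\mathrm{C}\sigma^2$ also leaves room to treat $\tfrac12 P_f^2$ more crudely, e.g. via $\|P_f\|\le\delta\rho\sigma$ on this subspace, should one prefer.)
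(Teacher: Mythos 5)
Your proof is correct and follows essentially the same route as the paper: you use the commutator identities $[H_f,\i B^\sigma]=\d\Gamma(\kappa^\sigma(k)^2|k|)$, $[P_f,\i B^\sigma]=\d\Gamma(\kappa^\sigma(k)^2 k)$ (the paper's \eqref{eq:[Hf,iB]}), the observation that $\delta\rho\sigma<3\sigma/4$ forces every photon in $\mathrm{Ran}(\mathds{1}\otimes\mathds{1}_{H_f\le\delta\rho\sigma})$ to lie in the region where $\kappa^\sigma\equiv1$, and the bound $|\nabla E_\sigma|\le 1/2$ from Proposition \ref{prop:Esigma(P)}. The one small rearrangement is that you strip the cutoff $\kappa^\sigma$ at the outset, so the $\tfrac12 P_f^2$-commutator becomes $P_f^2\ge 0$ and can simply be dropped, whereas the paper keeps $\kappa^\sigma$ until the end and absorbs the cross-term $P_f\cdot\d\Gamma(\kappa^\sigma(k)^2 k)$ into the $-\mathrm{C}\sigma^2$ error; both are fine and your version in fact yields the slightly stronger bound $\ge\tfrac12(\mathds{1}\otimes H_f)$ without the $-\mathrm{C}\sigma^2$ loss.
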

\begin{proof}
We have that
\begin{align}\label{eq:[Hf,iB]}
[ H_f , \i B^\sigma ] = \d \Gamma ( \kappa^\sigma (k)^2 |k| ), \quad [ P_f , \i B^\sigma ] = \d \Gamma ( \kappa^\sigma(k)^2 k ).
\end{align}
Therefore,
\begin{align}\label{eq:[F0,iB]}
\big [ F_0, \i B^\sigma \big ] = \mathds{1} \otimes (P_f \cdot \d \Gamma ( \kappa^\sigma(k)^2 k )+  \d \Gamma ( \kappa^\sigma (k)^2 |k| )) - \nabla E_\sigma \otimes \d \Gamma ( \kappa^\sigma(k)^2 k ).
\end{align}
For $j=1,2,3$, we have
\begin{equation}\label{eq:[Hf,iB]_2}
\pm \d \Gamma ( \kappa^\sigma(k)^2 k_j ) \le \d \Gamma ( \kappa^\sigma(k)^2 | k | ) \le \mathds{1} \otimes H_f,
\end{equation}
so that
\begin{align}
\nabla E_\sigma \otimes \d \Gamma ( \kappa^\sigma(k)^2 k ) & \ge - ( \sum_j | (\nabla E_\sigma)_j | ) \d \Gamma ( \kappa^\sigma(k)^2 | k | ) \notag \\
& \ge - 2 | \nabla E_\sigma | \d \Gamma ( \kappa^\sigma(k)^2 | k | ). \label{eq:[nablaE_Pf,iB]}
\end{align}
Moreover, using again \eqref{eq:[Hf,iB]_2}, it can easily be checked that
\begin{equation}\label{eq:[Pf2,iB]}
\mathds{1} \otimes \big ( P_f \cdot \d \Gamma ( \kappa^\sigma(k)^2 k ) \mathds{1}_{ H_f \le \delta \rho \sigma } \big ) \ge - \mathrm{C} \sigma^2.
\end{equation}
Hence Equations \eqref{eq:[F0,iB]}, \eqref{eq:[nablaE_Pf,iB]} and \eqref{eq:[Pf2,iB]} yield
\begin{align}
& \big [ F_0, \i B^\sigma \big ] ( \mathds{1} \otimes \mathds{1}_{ H_f \le \delta \rho \sigma } ) \notag \\
& \ge ( 1 - 2 | \nabla E_\sigma | ) ( \mathds{1} \otimes \d \Gamma ( \kappa^\sigma (k)^2 |k| ) ) ( \mathds{1} \otimes \mathds{1}_{ H_f \le \delta \rho \sigma } ) - \mathrm{C} \sigma^2 ( \mathds{1} \otimes \mathds{1}_{ H_f \le \delta \rho \sigma } ) \notag \\
& \ge \frac{1}{2} ( \mathds{1} \otimes \d \Gamma ( \kappa^\sigma (k)^2 |k| ) ) ( \mathds{1} \otimes \mathds{1}_{ H_f \le \delta \rho \sigma } ) - \mathrm{C} \sigma^2 ( \mathds{1} \otimes \mathds{1}_{ H_f \le \delta \rho \sigma } ). \label{eq:Hfsigma/2_1}
\end{align}
In the second inequality we used that, by Proposition \ref{prop:Esigma(P)}, $| \nabla E_\sigma | \le |P| + \mathrm{C} \alpha^{\frac{1}{2}} \le 1/4$ for $|P| \le 1/40$ and $\alpha$ sufficiently small. To conclude the proof of the lemma, it remains to justify that the operator $\d \Gamma ( \kappa^\sigma (k)^2 |k| )$ in \eqref{eq:Hfsigma/2_1} can be replaced by $H_f$. To this end, we define
\begin{align}
& H_{f,3\sigma/4}^\sigma = \sum_{\lambda=1,2} \int_{ 3 \sigma / 4 \le |k| \le \sigma } |k| a^*_\lambda( k) a_\lambda(k) \d k, \notag \\
& N_{3\sigma/4}^\sigma = \sum_{\lambda=1,2} \int_{ 3 \sigma / 4 \le |k| \le \sigma } a^*_\lambda(k) a_\lambda(k) \d k,
\end{align}
and $P_{3\sigma/4}^\sigma = \mathds{1}_{ \{ 0 \} } (H_{f,3\sigma/4}^\sigma)$, $\bar P_{3\sigma/4}^\sigma = \mathds{1} - P_{3\sigma/4}^\sigma$. Then we have that
\begin{equation}
( \mathds{1} \otimes H_f ) \bar P_{3\sigma/4}^\sigma \ge H_{f,3\sigma/4}^\sigma \bar P_{3\sigma/4}^\sigma \ge \frac{3\sigma}{4} N_{3\sigma/4}^\sigma \bar P_{3\sigma/4}^\sigma \ge \frac{3\sigma}{4} \bar P_{3\sigma/4}^\sigma.
\end{equation}
Therefore, since $ \mathds{1} \otimes H_f $ commutes with $P_{3\sigma/4}^\sigma$, we get
\begin{align}
\delta \rho \sigma \bar P_{3\sigma/4}^\sigma ( \mathds{1} \otimes \mathds{1}_{H_f \le \delta \rho \sigma} ) & \ge ( \mathds{1} \otimes H_f ) \bar P_{3\sigma/4}^\sigma ( \mathds{1} \otimes \mathds{1}_{H_f \le \delta \rho \sigma} ) \notag \\
& \ge \frac{ 3 \sigma }{4} \bar P_{3\sigma/4}^\sigma ( \mathds{1} \otimes \mathds{1}_{H_f \le \delta \rho \sigma} )
\end{align}
and since $\delta \rho \sigma < 3\sigma / 4$ by assumption, this implies
\begin{equation}\label{eq:Hfsigma/2_2}
( \mathds{1} \otimes \mathds{1}_{H_f \le \delta \rho \sigma} ) = P_{3\sigma/4}^\sigma ( \mathds{1} \otimes \mathds{1}_{H_f \le \delta \rho \sigma} ).
\end{equation}
Since $\kappa^\sigma(k)=1$ for any $|k| \le 3\sigma/4$, we obtain that
\begin{equation}\label{eq:Hfsigma/2_3}
\big ( \mathds{1} \otimes \d \Gamma ( \kappa^\sigma (k)^2 |k| ) \big ) P_{3 \sigma / 4 }^\sigma  = ( \mathds{1} \otimes H_f ) P_{3\sigma/4}^\sigma.
\end{equation}
We conclude the proof using \eqref{eq:Hfsigma/2_1}, \eqref{eq:Hfsigma/2_2}, \eqref{eq:Hfsigma/2_3}, and the fact that
\begin{equation}
\mathds{1} \otimes \d \Gamma ( \kappa^\sigma (k)^2 |k| ) \ge \big ( \mathds{1} \otimes \d \Gamma ( \kappa^\sigma (k)^2 |k| ) \big ) P_{3 \sigma / 4 }^\sigma.
\end{equation}
\end{proof}
The following lemma is an important ingredient to show Theorem \ref{thm:Mourre}. It justifies the fact that one can consider the term $(\nabla K_\sigma - \nabla E_\sigma ) \otimes P_f$ in $W_\sigma$ as a small perturbation. The idea of its proof is due to \cite{AFGG}, and is based on the $\mathrm{C}^2$-regularity of the map $P \mapsto E_\sigma(P)$ uniformly in $\sigma$, established in \cite{Chen} and \cite{FP} (see more precisely inequality \eqref{eq:nablaE(P)-nablaE(P')} in Proposition \ref{prop:Esigma(P)}).

Let $(e_j$, $j=1,2,3$) be the canonical orthonormal basis of $\mathbb{R}^3$. For any $y \in \mathbb{R}^3$, we set $y_j = y \cdot e_j$.
\begin{lemma}\label{lm:nablaK-nablaE}
For any $\mathrm{C}_0>0$, there exists $\alpha_0>0$ such that, for all $|P| \le p_c$, $0 \le \alpha \le \alpha_0$, $0 < \sigma \le \mathrm{C}_0 \alpha^{1/2}$, $\lambda \in J_\sigma^<$, $j \in \{ 1,2,3 \}$, and $0 < \delta \ll 1$, 
\begin{equation}\label{eq:nablaK-,ablaE}
\Big \| \big [ H_{\bar \chi} - \lambda \big ]^{-\frac{1}{2}} \bar \chi \big ( ( \nabla K_\sigma - \nabla E_\sigma )_j P_\sigma \big ) \otimes \mathds{1}_{H_f \le \delta} \Big \| \le \mathrm{C} \big ( 1 + \delta^{\frac{1}{2}} \sigma^{-\frac{1}{2}} \big ).
\end{equation}
\end{lemma}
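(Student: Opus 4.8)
The plan is to bound the operator in \eqref{eq:nablaK-,ablaE} by splitting $(\nabla K_\sigma - \nabla E_\sigma)_j P_\sigma$ into the vacuum-preserving and vacuum-annihilating parts acting on $\mathcal{F}_\sigma$, and to exploit the Feynman--Hellman relation $P_\sigma \nabla K_\sigma P_\sigma = \nabla E_\sigma P_\sigma$ together with the uniform Lipschitz bound \eqref{eq:nablaE(P)-nablaE(P')} on $\nabla E_\sigma$. First I would use Proposition \ref{prop:Hchibar} (more precisely the lower bound \eqref{eq:Feshbach1}/\eqref{eq:Feshbach3}, which gives $H_{\bar\chi} - \lambda \ge \mathrm{c}\sigma$ on $\mathrm{Ran}(\bar\chi)$ for $\lambda \in J_\sigma^<$, since $\lambda \le E_\sigma + \rho\sigma/4$ by Proposition \ref{prop:Esigma(P)} part 2) to control $\| [H_{\bar\chi} - \lambda]^{-1/2}\bar\chi\| \le \mathrm{C}\sigma^{-1/2}$. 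The naive bound would then give the right-hand side $\mathrm{C}\sigma^{-1/2}\| (\nabla K_\sigma - \nabla E_\sigma)_j P_\sigma\| \otimes \mathds{1}_{H_f\le\delta}$, but $\| (\nabla K_\sigma - \nabla E_\sigma)_j P_\sigma\|$ is only $O(1)$, which is too weak — the point of the lemma is that the $\sigma^{-1/2}$ must not appear alone.

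The key idea, attributed in the excerpt to \cite{AFGG}, is to write $[H_{\bar\chi}-\lambda]^{-1/2}\bar\chi = [H_{\bar\chi}-\lambda]^{1/2}\,[H_{\bar\chi}-\lambda]^{-1}\bar\chi$ and, instead of estimating the full operator norm, insert $[H_{\bar\chi}-\lambda]^{1/2}$ (hence a factor $(H_{\bar\chi}-\lambda)^{1/2} \sim ((\tfrac12 P_f^2 + H_f)\otimes\mathds{1} + \text{gap terms})^{1/2}$) to recover a factor $\delta^{1/2}$ from $\mathds{1}_{H_f\le\delta}$ — but the genuinely subtle part is the commutator structure. One should expand, as in the proof of Lemma \ref{lm:f(Hsigma)} (cf.\ \eqref{eq:lm_Feshbach2}--\eqref{eq:FH_formula_2}): $\bar\chi (\nabla K_\sigma - \nabla E_\sigma)_j P_\sigma \otimes \mathds{1}$ has a piece proportional to $(\bar P_\sigma \otimes \mathds{1})\nabla K_\sigma (P_\sigma \otimes \cdot)$ (where the $\nabla E_\sigma$ drops by orthogonality $\bar P_\sigma P_\sigma = 0$) and a piece $(P_\sigma \otimes \bar\chi^\sigma_f - P_\sigma\otimes\mathds{1})(\nabla K_\sigma - \nabla E_\sigma)P_\sigma = 0$ again by Feynman--Hellman. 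So effectively only the off-diagonal block $\bar P_\sigma \nabla K_\sigma P_\sigma$ survives, and by \eqref{eq:lm_Feshbach3}, $\|\nabla K_\sigma P_\sigma\| \le 1/36$. Combining $\|[H_{\bar\chi}-\lambda]^{-1/2}\bar\chi\| \le \mathrm{C}\sigma^{-1/2}$ on the $\bar P_\sigma$-sector with the extra $\mathds{1}_{H_f\le\delta}$ factor: on $\mathrm{Ran}(\bar P_\sigma \otimes \mathds{1})$ one gains nothing from $\delta$ but loses nothing either, giving the "$1$" term; whereas if there is a residual contribution needing the $H_f$-cutoff, one estimates $\|\,(\mathds{1}\otimes H_f^{1/2})(\mathds{1}\otimes\mathds{1}_{H_f\le\delta})\| \le \delta^{1/2}$ and pairs it against $[H_{\bar\chi}-\lambda]^{-1/2}$ to produce $\delta^{1/2}\sigma^{-1/2}$. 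Writing $(\nabla K_\sigma - \nabla E_\sigma)_j P_\sigma = (\bar P_\sigma \nabla K_\sigma P_\sigma)_j$ on $\mathrm{Ran}(P_\sigma\otimes\cdot)$ and bounding the two ranges separately yields exactly $\mathrm{C}(1 + \delta^{1/2}\sigma^{-1/2})$.

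The main obstacle I anticipate is making the "$\delta^{1/2}\sigma^{-1/2}$ versus $1$" dichotomy precise: one must carefully track which block ($P_\sigma$ versus $\bar P_\sigma$ on $\mathcal{F}_\sigma$, and $H_f \le \rho\sigma$ versus $H_f \ge \rho\sigma$ on $\mathcal{F}^\sigma$ through $\bar\chi$) contributes the resolvent's $\sigma^{-1}$ smallness and which one only contributes $O(1)$, and to use the $\mathds{1}_{H_f\le\delta}$ cutoff to interpolate $\|(\mathds{1}\otimes H_f)^{1/2}\mathds{1}_{H_f\le\delta}\| \le \delta^{1/2}$ in the former case while simply using the uniform resolvent bound in the latter. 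The uniformity in $\sigma$ of the Lipschitz constant for $\nabla E_\sigma$ (Proposition \ref{prop:Esigma(P)}, inequality \eqref{eq:nablaE(P)-nablaE(P')}) is what ultimately keeps all constants $\sigma$-independent; technically one also invokes the relative bounds for $A^\sigma$, $P_f$ with respect to $H_f$ collected in Appendix \ref{appendix:estimates} (Lemmata \ref{lm:standard1}, \ref{lm:appendix0}, \ref{lm:appendix1}) to handle the cross terms, but those are routine once the block decomposition is in place.
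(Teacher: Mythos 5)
Your proposal identifies the right difficulty (the naive bound gives only $\mathrm{C}\sigma^{-1/2}$) but does not actually overcome it, and in fact the claim you make for the $\bar P_\sigma$-sector is incorrect. After the reduction to $\bar P_\sigma \nabla K_\sigma P_\sigma$ via Feynman--Hellman, you assert that on $\mathrm{Ran}(\bar P_\sigma\otimes\mathds{1})$ ``one gains nothing from $\delta$ but loses nothing either, giving the `1' term.'' This is not so: the range of $\bar P_\sigma\nabla K_\sigma P_\sigma\otimes\mathds{1}_{H_f\le\delta}$ sits inside $\mathrm{Ran}(\bar\chi)$, where the only lower bound on $H_{\bar\chi}-\lambda$ available is the spectral gap $\rho\sigma$, and $\|\bar P_\sigma\nabla K_\sigma P_\sigma\|$ is only $O(1)$ (of size roughly $1/36$, not small). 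So the estimate on that block is $\mathrm{C}\sigma^{-1/2}\cdot O(1)=O(\sigma^{-1/2})$, which is exactly the naive bound you already rejected as too weak; it is not $O(1+\delta^{1/2}\sigma^{-1/2})$. Nothing in your write-up shows how the off-diagonal block acquires the additional $\sigma^{1/2}$ smallness needed to cancel the resolvent singularity. In fact your whole scheme keeps the $H_f$-cutoff and the $K_\sigma$-gap in orthogonal tensor factors, so there is no mechanism by which the factor $\mathds{1}_{H_f\le\delta}$ on $\mathcal{F}^\sigma$ can help tame the $\sigma^{-1/2}$ coming from the $\bar P_\sigma$-sector on $\mathcal{F}_\sigma$.

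The crucial idea, which your proposal does not contain, is the difference-quotient representation
$(\nabla K_\sigma)_j = \tfrac{1}{u}\big(K_\sigma(P+ue_j)-K_\sigma(P)\big)-\tfrac{u}{2}$,
used together with the Feynman--Hellman formula and the uniform $\mathrm{C}^2$-regularity of $P\mapsto E_\sigma(P)$ (inequality \eqref{eq:nablaE(P)-nablaE(P')}). This gives, for any $\Phi\in\mathrm{Ran}(P_\sigma)$,
\begin{equation*}
\big\|\big(K_\sigma(P+ue_j)-E_\sigma(P+ue_j)\big)^{1/2}\Phi\big\|^2
= E_\sigma(P)-E_\sigma(P+ue_j)+u(\nabla E_\sigma)_j+\tfrac{u^2}{2}\le \mathrm{C}u^2,
\end{equation*}
and hence a factorisation
$(\nabla K_\sigma-\nabla E_\sigma)_j P_\sigma=(K_\sigma(P+ue_j)-E_\sigma(P+ue_j))^{1/2}B_1+B_2$ with $\|B_1\|\le\mathrm{C}$ and $\|B_2\|\le\mathrm{C}u$. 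Choosing $u\le\sigma$, the factor $(K_\sigma-E_\sigma)^{1/2}$ is then transferred onto the resolvent $[H_{\bar\chi}^1-\lambda]^{-1/2}$ and controlled via Lemma~\ref{lm:appendix1} and the explicit form \eqref{eq:lm_Feshbach2} of $H_{\bar\chi}^1$; it is this cancellation that produces $\mathrm{C}(1+\delta^{1/2}\sigma^{-1/2})$ rather than $\mathrm{C}\sigma^{-1/2}$. Your block decomposition, while natural, simply cannot produce this gain, and the step where you replace the honest $O(\sigma^{-1/2})$ estimate by an unjustified $O(1)$ is the gap.
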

\begin{proof}
For any $u>0$, we can write
\begin{equation}\label{eq:nablaK-nablaE_j_0}
( \nabla K_\sigma )_j = \frac{1}{u} \big ( K_\sigma( P + u e_j ) - K_\sigma ( P ) \big ) - \frac{u}{2}.
\end{equation}
Using that $K_\sigma(P) P_\sigma = E_\sigma(P) P_\sigma$, this implies
\begin{align}
( \nabla K_\sigma - \nabla E_\sigma )_j P_\sigma =& \frac{1}{u} ( K_\sigma( P + ue_j ) - E_\sigma(P+ue_j) ) P_\sigma \notag \\
&+ \Big ( \frac{1}{u} ( E_\sigma(P+ue_j) - E_\sigma(P) ) - ( \nabla E_\sigma )_j - \frac{u}{2} \Big ) P_\sigma. \label{eq:nablaK-nablaE_j}
\end{align}
By Proposition \ref{prop:Esigma(P)},
\begin{equation}\label{eq:nablaE_j}
\big | \frac{1}{u} ( E_\sigma(P+ue_j) - E_\sigma(P) ) - ( \nabla E_\sigma )_j \big | \le \mathrm{C} u,
\end{equation}
where $\mathrm{C}$ is independent of $\sigma$. Consequently, it follows from the Feynman-Hellman formula, $P_\sigma (\nabla K_\sigma )_j P_\sigma = (\nabla E_\sigma)_j P_\sigma$, together with Equation \eqref{eq:nablaK-nablaE_j_0} that, for any $\Phi \in \mathrm{Ran}( P_\sigma )$, $\| \Phi \| = 1$,
\begin{align}
& \big \| ( K_\sigma( P + ue_j ) - E_\sigma(P+ue_j) )^{\frac{1}{2}} \Phi \big \|^2 \phantom{ \frac{u^2}{2} } \notag \\
& = \big ( \Phi , ( K_\sigma( P + ue_j ) - E_\sigma(P+ue_j) ) \Phi \big ) \phantom{ \frac{u^2}{2} } \notag \\
& = \big ( \Phi , ( K_\sigma(P) + u ( \nabla K_\sigma )_j + \frac{u^2}{2} - E_\sigma(P+ue_j) ) \Phi \big ) \notag \\
& = E_\sigma(P) - E_\sigma(P+ue_j) + u ( \nabla E_\sigma )_j + \frac{u^2}{2} \le \mathrm{C} u^2. \label{eq:nablaK-nablaE_j_2}
\end{align}
From \eqref{eq:nablaK-nablaE_j}, we obtain that
\begin{align}
( \nabla K_\sigma - \nabla E_\sigma )_j P_\sigma = ( K_\sigma( P + ue_j ) - E_\sigma(P+ue_j) )^{\frac{1}{2}} B_1 + B_2,
\end{align}
where
\begin{align}
& B_1 := \frac{1}{u} ( K_\sigma( P + ue_j ) - E_\sigma(P+ue_j) )^{\frac{1}{2}} P_\sigma, \\
& B_2 := \Big ( \frac{1}{u} ( E_\sigma(P+ue_j) - E_\sigma(P) ) - ( \nabla E_\sigma )_j - \frac{u}{2} \Big ) P_\sigma.
\end{align}
By \eqref{eq:nablaK-nablaE_j_2} and \eqref{eq:nablaE_j}, the operators $B_1,B_2$ are bounded and satisfy
\begin{equation}
\| B_1 \| \le \mathrm{C}, \quad \| B_2 \| \le \mathrm{C} u.
\end{equation}
Thus, choosing $u \le \sigma$, the lemma will follow if we show that
\begin{equation}\label{eq:nablaK-nablaE2}
\Big \| \bar \chi \big [ H_{\bar \chi} - \lambda \big ]^{-\frac{1}{2}} \bar \chi ( K_\sigma( P + ue_j ) - E_\sigma(P+ue_j) )^{\frac{1}{2}} \otimes \mathds{1}_{H_f \le \delta} \Big \|^2 \le \mathrm{C} \delta \sigma^{-1}.
\end{equation}
Let us prove \eqref{eq:nablaK-nablaE2}. To simplify notations, we set
\begin{equation}
\bar \chi_{ \le \delta } := ( \mathds{1} \otimes \mathds{1}_{H_f \le \delta} ) \bar \chi
\end{equation}
Let $\Phi \in \mathrm{Ran}( \bar \chi )$, $\| \Phi \|=1$. Since
\begin{equation}
\big \| \big ( H_{\bar \chi}^1 - \lambda \big ) \big [ H_{\bar \chi} - \lambda \big ]^{-1} \bar \chi \big \| \le \mathrm{C},
\end{equation}
(see the proof of Proposition \ref{prop:Hchibar}), it suffices to estimate
\begin{equation}\label{eq:nablaK-nablaE3}
\Big ( \Phi , \bar \chi \big [ H_{\bar \chi}^1 - \lambda \big ]^{-\frac{1}{2}} \bar \chi_{\le \delta} \big ( ( K_\sigma( P + ue_j ) - E_\sigma(P+ue_j) ) \otimes \mathds{1} \big ) \bar \chi_{\le \delta} \big [ H_{\bar \chi}^1 - \lambda \big ]^{-\frac{1}{2}} \bar \chi \Phi \Big ).
\end{equation}
Using that
\begin{equation}
\Big \| \bar \chi \big [ H_{\bar \chi}^1 - \lambda \big ]^{-\frac{1}{2}} \bar \chi \big ( ( \nabla K_\sigma - \nabla E_\sigma ) \otimes \mathds{1} \big ) \bar \chi \big [ H_{\bar \chi}^1 - \lambda \big ]^{-\frac{1}{2}} \bar \chi \Big \| \le \mathrm{C} \sigma^{-1},
\end{equation}
and since $0 < u \le \sigma$, we get
\begin{equation}\label{eq:nablaK-nablaE4}
\eqref{eq:nablaK-nablaE3} \le \Big ( \Phi , \bar \chi \big [ H_{\bar \chi}^1 - \lambda \big ]^{-\frac{1}{2}} \bar \chi_{\le \delta} \big ( ( K_\sigma( P) - E_\sigma(P) ) \otimes \mathds{1} \big ) \bar \chi_{\le \delta} \big [ H_{\bar \chi}^1 - \lambda \big ]^{-\frac{1}{2}} \bar \chi \Phi \Big ) + \mathrm{C}.
\end{equation}
Next, by Lemma \ref{lm:appendix1},
\begin{align}
& \bar \chi_{\le \delta } \big ( ( K_\sigma( P) - E_\sigma(P) ) \otimes \mathds{1} \big ) \bar \chi_{\le \delta } \notag \\
& \le \frac{1}{1-\delta} \bar \chi_{\le \delta} \big ( \big ( H_\sigma( P) - E_\sigma(P) \big ) + 4 \delta E_\sigma \big ) \bar \chi_{\le \delta}. \label{eq:nablaK-nablaE5}
\end{align}
Using the expression \eqref{eq:lm_Feshbach2} of $H_{\bar\chi}^1$, we conclude from \eqref{eq:nablaK-nablaE5} that
\begin{align}
& \bar \chi_{\le \delta } \big ( ( K_\sigma( P) - E_\sigma(P) ) \otimes \mathds{1} \big ) \bar \chi_{\le \delta } \notag \\
& \le \bar \chi_{\le \delta} \big ( \big ( H_{\bar \chi}^1(P) - E_\sigma(P) \big ) + \mathrm{C} (\sigma + \delta ) \big ) \bar \chi_{\le \delta}. \label{eq:nablaK-nablaE6}
\end{align}
The statement of the lemma follows from \eqref{eq:nablaK-nablaE4}, \eqref{eq:nablaK-nablaE6} and Lemma \ref{lm:f(Hsigma)}.
\end{proof}
In the following lemma, we prove that the ``perturbation'' operators $W_1$, $W_2$ in \eqref{eq:def_W1}--\eqref{eq:def_W2} are of order $O( \alpha^{1/2} \sigma )$.
\begin{lemma}\label{lm:||W||}
For any $\mathrm{C}_0>0$, there exists $\alpha_0>0$ such that, for all $|P| \le p_c$, $0 \le \alpha \le \alpha_0$, $0 < \sigma \le \mathrm{C}_0 \alpha^{1/2}$, and $\lambda \in J_\sigma^<$,
\begin{equation}\label{eq:||W_i||}
\big \| W_i \big \| \le \mathrm{C} \alpha^{\frac{1}{2}} \sigma,\ i=1,2,
\end{equation}
where $W_1$ and $W_2$ are as in \eqref{eq:def_W1}, \eqref{eq:def_W2}.
\end{lemma}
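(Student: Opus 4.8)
The plan is to estimate $W_1 = \chi U_\sigma \chi$ and $W_2 = -\chi W_\sigma \bar\chi[H_{\bar\chi}-\lambda]^{-1}\bar\chi W_\sigma\chi$ separately, exploiting throughout the fact that $\chi = P_\sigma\otimes\chi^\sigma_f$ carries a factor $\chi^\sigma_f = \kappa^{\rho\sigma}(H_f)$, so that it projects onto $H_f \le \rho\sigma$ on the low-energy Fock space $\mathcal{F}^\sigma$, and that every photon in $\mathcal{F}^\sigma$ has energy $\le\sigma$. First I would treat $W_1$. Inserting the expression~\eqref{eq:Rsigma1} for $U_\sigma$, one gets three terms: $-\alpha^{1/2}\chi(\nabla K_\sigma\otimes A^\sigma)\chi$, $\tfrac{\alpha^{1/2}}{2}\chi(\mathds{1}\otimes(A^\sigma\cdot P_f + P_f\cdot A^\sigma))\chi$, and $\tfrac{\alpha}{2}\chi(\mathds{1}\otimes(A^\sigma)^2)\chi$. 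For the first one, I would use the standard $N_f^{1/2}$-type bounds (Lemma~\ref{lm:standard1}) together with $\|h^\sigma\|_{L^2}^2 = O(\sigma^2)$ and $\|h^\sigma/|k|^{1/2}\|_{L^2}^2 = O(\sigma)$: since $A^\sigma = \Phi(h^\sigma)$ is supported on $\{|k|\le\sigma\}$, we have $\|(\mathds{1}\otimes A^\sigma)(\mathds{1}\otimes\mathds{1}_{H_f\le\rho\sigma})\| \le C\sigma$ (each creation/annihilation operator on that support contributes a factor $O(\sigma^{1/2})$, while the number operator restricted to $H_f\le\rho\sigma$ costs $O(\sigma^{-1/2})$ per particle, combining to $O(\sigma)$). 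Here $\|\nabla K_\sigma P_\sigma\| \le C$ by~\eqref{eq:lm_Feshbach3}, and $\|\chi^\sigma_f (A^\sigma)^{\#} \chi^\sigma_f\|$ on $\mathrm{Ran}(\mathds{1}\otimes\mathds{1}_{H_f\le\rho\sigma})$ is $O(\sigma)$; the $\alpha^{1/2}$ prefactor then gives $O(\alpha^{1/2}\sigma)$. For the second term, the extra $P_f$ acting on $\mathcal{F}^\sigma$ is itself $O(\sigma)$ on the range of $\chi^\sigma_f$ (since $\|P_f\chi^\sigma_f\|\le C\sigma$, as already used in the proof of Proposition~\ref{prop:Hchibar}), so naively this term looks like $O(\alpha^{1/2}\sigma^2)$, which is even better; the dominant contribution is therefore still $O(\alpha^{1/2}\sigma)$. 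The third term, carrying $\alpha$ and $\|(A^\sigma)^2$ restricted$\|=O(\sigma^2)$, is $O(\alpha\sigma^2)$, again subleading.

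Next I would treat $W_2$. The strategy is to write $W_\sigma = U_\sigma - (\nabla K_\sigma - \nabla E_\sigma)\otimes P_f$ and to sandwich the middle resolvent as $\bar\chi W_\sigma\bar\chi[H_{\bar\chi}-\lambda]^{-1}\bar\chi W_\sigma\chi = \big(\bar\chi W_\sigma [H_{\bar\chi}-\lambda]^{-1/2}\bar\chi\big)\big(\bar\chi[H_{\bar\chi}-\lambda]^{-1/2}\bar\chi W_\sigma\chi\big)$, using the bounded invertibility and the resolvent bounds $\|\bar\chi[H_{\bar\chi}-\lambda]^{-1}\bar\chi\|\le C\sigma^{-1}$ from Proposition~\ref{prop:Hchibar}. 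For the right factor, $\bar\chi W_\sigma\chi$ decomposes into the $U_\sigma$-part and the $(\nabla K_\sigma-\nabla E_\sigma)\otimes P_f$-part. For the $U_\sigma$-part one argues as for $W_1$: the annihilation/creation operators $A^\sigma$ hitting $\chi = P_\sigma\otimes\chi^\sigma_f$ produce a factor $O(\sigma)$, times the $\alpha^{1/2}$ prefactor, times $\|\bar\chi[H_{\bar\chi}-\lambda]^{-1/2}\bar\chi\|\le C\sigma^{-1/2}$ on the left — this gives $\alpha^{1/2}\cdot\sigma\cdot\sigma^{-1/2} = \alpha^{1/2}\sigma^{1/2}$ for the half-resolvent-weighted piece, and then the other half-resolvent-weighted factor similarly contributes $\alpha^{1/2}\sigma^{1/2}$ (or better), so the product is $O(\alpha\sigma) = O(\alpha^{1/2}\cdot\alpha^{1/2}\sigma) \le O(\alpha^{1/2}\sigma)$ for $\alpha\le\alpha_0$. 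For the $(\nabla K_\sigma - \nabla E_\sigma)\otimes P_f$-part of $\bar\chi W_\sigma\chi$: here $P_f$ acting on $\mathrm{Ran}(\chi^\sigma_f)$ gives a factor $\|P_f\chi^\sigma_f\|\le C\sigma$, and the crucial point is that Lemma~\ref{lm:nablaK-nablaE} with $\delta = \rho\sigma$ controls $\|[H_{\bar\chi}-\lambda]^{-1/2}\bar\chi\big((\nabla K_\sigma - \nabla E_\sigma)_j P_\sigma\big)\otimes\mathds{1}_{H_f\le\rho\sigma}\| \le C(1 + (\rho\sigma)^{1/2}\sigma^{-1/2}) = O(1)$; combined with the $O(\sigma^{-1/2})$ from the remaining half-resolvent on the $U_\sigma$ side (resp.\ the $O(\sigma)$ from $P_f\chi^\sigma_f$ when both $W_\sigma$ factors are of $(\nabla K_\sigma - \nabla E_\sigma)\otimes P_f$ type and a further application of Lemma~\ref{lm:nablaK-nablaE} on the other side), one collects enough powers of $\sigma$: the worst case $\big((\nabla K_\sigma-\nabla E_\sigma)\otimes P_f\big)$ on both sides gives $O(1)\cdot\|P_f\chi^\sigma_f\|\cdot O(1) = O(\sigma)$, and the mixed case gives $O(\alpha^{1/2})\cdot O(\sigma^{1/2})\cdot O(\sigma^{1/2})\cdot O(1) = O(\alpha^{1/2}\sigma)$. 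Since $\sigma \le C_0\alpha^{1/2}$, the $O(\sigma)$ terms are also $O(\alpha^{1/2}\sigma)$ only if we are careful — in fact they are $O(\sigma) \le O(\alpha^{1/2})$ times $\sigma^0$, so one must verify the bookkeeping gives an extra $\sigma$; the resolvent bound $\|\bar\chi[H_{\bar\chi}-\lambda]^{-1}\bar\chi\| \le C\sigma^{-1}$ supplies $\sigma^{-1}$, while the two $P_f\chi^\sigma_f$ factors supply $\sigma^2$, netting $O(\sigma)$, and then one more factor of $\alpha^{1/2}$ is not automatic — this is precisely where Lemma~\ref{lm:nablaK-nablaE}'s $O(1)$ (rather than $O(\sigma^{-1/2})$) bound is essential, because it prevents the loss of a $\sigma^{1/2}$, and where one notices that actually $\chi W_\sigma\bar\chi$ with $W_\sigma = (\nabla K_\sigma-\nabla E_\sigma)\otimes P_f$ has the extra structure $\chi = P_\sigma\otimes\chi^\sigma_f$ forcing $P_\sigma$ on the left, so $P_\sigma(\nabla K_\sigma - \nabla E_\sigma)P_\sigma = 0$ by Feynman--Hellman, meaning this term actually connects $\mathrm{Ran}(P_\sigma)$ to $\mathrm{Ran}(\bar P_\sigma)$ only through the $\chi^\sigma_f - \mathds{1}$ off-diagonal piece, giving an additional gain.

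The main obstacle is the $W_2$ estimate with the $(\nabla K_\sigma - \nabla E_\sigma)\otimes P_f$ contributions, because $\nabla K_\sigma - \nabla E_\sigma$ is not small in operator norm on all of $\mathcal{F}_\sigma$ — only $O(1)$ — so a crude bound would yield $\|W_2\| \le \|W_\sigma\chi\|^2\cdot\sigma^{-1}$ with $\|W_\sigma\chi\| = O(\sigma)$ (from the $P_f\chi^\sigma_f$ factor), producing only $O(\sigma)$ rather than $O(\alpha^{1/2}\sigma)$, and for the mixed term $O(\alpha^{1/2}\sigma^{1/2})\cdot O(\sigma)\cdot\sigma^{-1/2}$-type estimates need to be assembled just right. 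The resolution, as in~\cite{AFGG}, is the sharper Lemma~\ref{lm:nablaK-nablaE}: it says that $[H_{\bar\chi}-\lambda]^{-1/2}$ gains enough regularity when paired against $(\nabla K_\sigma - \nabla E_\sigma)P_\sigma\otimes\mathds{1}_{H_f\le\delta}$ that with $\delta = \rho\sigma$ the right-hand side is $O(1)$ — this, together with the $O(\sigma)$ from the accompanying $P_f\chi^\sigma_f$ factor and the fact that the remaining $W_\sigma$ factor on the other side is either $O(\alpha^{1/2}\sigma^{1/2})$ (when it is a $U_\sigma$-term) or again controlled by Lemma~\ref{lm:nablaK-nablaE} — delivers the claimed bound $\|W_2\| \le C\alpha^{1/2}\sigma$ after using $\sigma\le C_0\alpha^{1/2}$ once to convert any residual $O(\sigma^2)$ or $O(\sigma)$ remainder into $O(\alpha^{1/2}\sigma)$. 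Assembling these estimates, $\|W_i\| \le C\alpha^{1/2}\sigma$ for $i=1,2$, which is the assertion of the lemma.
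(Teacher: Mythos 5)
Your overall route is the same as the paper's: estimate $W_1$ directly via Lemma~\ref{lm:standard1} after localizing with $\chi^\sigma_f$, and estimate $W_2$ by splitting the resolvent as $\big(\chi W_\sigma\bar\chi[H_{\bar\chi}-\lambda]^{-1/2}\big)\big([H_{\bar\chi}-\lambda]^{-1/2}\bar\chi W_\sigma\chi\big)$ and invoking Lemma~\ref{lm:nablaK-nablaE} with $\delta=\rho\sigma$ for the $(\nabla K_\sigma-\nabla E_\sigma)\otimes P_f$ piece. However, your bookkeeping in the worst case is wrong, and the attempted salvage at the end does not work.

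The error is here: you write that the case where $(\nabla K_\sigma-\nabla E_\sigma)\otimes P_f$ appears on both sides ``gives $O(1)\cdot\|P_f\chi^\sigma_f\|\cdot O(1)=O(\sigma)$''. You have dropped a factor. After the half-resolvent split there are \emph{two} factors, each of the form
\begin{equation*}
\big\|[H_{\bar\chi}-\lambda]^{-1/2}\bar\chi\,\big((\nabla K_\sigma-\nabla E_\sigma)P_\sigma\big)\otimes\big(P_f\chi^\sigma_f\big)\big\|
\;\le\; C\cdot\|P_f\chi^\sigma_f\| \;\le\; C'\sigma,
\end{equation*}
where the first inequality is exactly Lemma~\ref{lm:nablaK-nablaE}'s $O(1)$ bound (your identification of the key lemma is right). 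Multiplying the two factors gives $O(\sigma^2)$, not $O(\sigma)$, and then $\sigma^2\le C_0\alpha^{1/2}\sigma$ closes the argument immediately. You instead land on $O(\sigma)$ by applying the full-resolvent bound $\|\bar\chi[H_{\bar\chi}-\lambda]^{-1}\bar\chi\|\le C\sigma^{-1}$ while counting only the combinatorics $\sigma^{-1}\cdot\sigma^2$, which discards precisely the $\sigma^{1/2}$ gain per side that Lemma~\ref{lm:nablaK-nablaE} supplies over the naive $\sigma^{-1/2}$ for each half-resolvent. Note also that your closing remark about ``converting a residual $O(\sigma)$ remainder into $O(\alpha^{1/2}\sigma)$'' is false: $\sigma\le C_0\alpha^{1/2}$ gives $\sigma=O(\alpha^{1/2})$, not $O(\alpha^{1/2}\sigma)$, so a genuine $O(\sigma)$ remainder would break the lemma. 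Finally, the Feynman--Hellman digression ($P_\sigma(\nabla K_\sigma-\nabla E_\sigma)P_\sigma=0$, off-diagonality in $\chi^\sigma_f-\mathds{1}$) is not needed here and is a red herring; the paper's proof of this lemma does not invoke it, and once you count both $P_f\chi^\sigma_f$ factors the $O(\sigma^2)$ comes out and no further gain is required. The cleanest way to organize the estimate is the paper's: show
\begin{equation*}
\big\|[H_{\bar\chi}-\lambda]^{-1/2}\bar\chi W_\sigma\chi\big\|\le C\alpha^{1/4}\sigma^{1/2}
\end{equation*}
term by term (the $(\nabla K_\sigma-\nabla E_\sigma)\otimes P_f$ term yields $C\sigma\le C'\alpha^{1/4}\sigma^{1/2}$), and then square.
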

\begin{proof}
Let us first prove \eqref{eq:||W_i||} for $i=1$. Equation \eqref{eq:Rsigma1} combined with the Feynman-Hellman formula gives
\begin{align}
\chi U_\sigma \chi = & - \alpha^{\frac{1}{2}} \big ( \nabla E_\sigma P_\sigma \big ) \otimes \big ( \chi^\sigma_f A^\sigma \chi^\sigma_f \big ) + \frac{ \alpha^{\frac{1}{2}} }{2} P_\sigma \otimes \Big ( \chi^\sigma_f \big ( P_f \cdot A^\sigma + A^\sigma \cdot P_f \big ) \chi^\sigma_f \Big ) \notag \\
& + \frac{ \alpha }{2} P_\sigma \otimes \Big ( \chi^\sigma_f ( A^\sigma )^2 \chi^\sigma_f \Big ). \label{eq:||W_1||_1}
\end{align}
It follows from Lemma \ref{lm:standard1} that
\begin{align}
& \big \| A^\sigma \chi^\sigma_f \big \| \le \mathrm{C} \sigma^{\frac{1}{2}} \big \| [ H_f + \sigma ]^{\frac{1}{2}} \chi^\sigma_f \big \| \le \mathrm{C}' \sigma, \label{eq:||W_1||_2} \\
& \big \| ( A^\sigma \cdot P_f ) \chi^\sigma_f \big \| \le \mathrm{C} \sigma^{\frac{1}{2}} \big \| [ H_f + \sigma ]^{\frac{1}{2}} | P_f | \chi^\sigma_f \big \| \le \mathrm{C}' \sigma^2. \label{eq:||W_1||_2'}
\end{align}
Therefore \eqref{eq:||W_i||} for $i=1$ follows.

To prove \eqref{eq:||W_i||} for $i=2$ it suffices to show that for $\lambda \in J_\sigma^<$,
\begin{equation}\label{eq:||W_2||_1}
\Big \| \big [ H_{\bar \chi} - \lambda \big ]^{-\frac{1}{2}} \bar \chi W_\sigma \chi \Big \| \le \mathrm{C} \alpha^{\frac{1}{4}} \sigma^{\frac{1}{2}}.
\end{equation}
By Equations \eqref{eq:Rsigma1} and \eqref{eq:Rsigma+},
\begin{align}
W_\sigma \chi = & - \alpha^{ \frac{1}{2} } \big ( \nabla K_\sigma P_\sigma \big ) \otimes \big ( A^\sigma \chi^\sigma_f \big ) \label{eq:||W_2||_2}  \\
& + \frac{ \alpha^{\frac{1}{2}} }{2} P_\sigma \otimes \Big ( \big ( P_f \cdot A^\sigma + A^\sigma \cdot P_f \big ) \chi^\sigma_f \Big ) \label{eq:||W_2||_3} \\
& + \frac{ \alpha }{2} P_\sigma \otimes \big ( ( A^\sigma )^2 \chi^\sigma_f \big ) \label{eq:||W_2||_6} \\
& - \Big ( \big ( \nabla K_\sigma - \nabla E_\sigma \big ) P_\sigma \Big ) \otimes \big ( P_f \chi^\sigma_f \big ). \label{eq:||W_2||_7}
\end{align}
We insert this expression into \eqref{eq:||W_2||_1} and estimate each term separately. First, it follows from Proposition \ref{prop:Hchibar} and Estimate \eqref{eq:||W_1||_2} that
\begin{equation}\label{eq:||W_2||_8}
\Big \| \big [ H_{\bar \chi} - \lambda \big ]^{-\frac{1}{2}} \bar \chi \eqref{eq:||W_2||_2} \Big \| \le \mathrm{C} \alpha^{\frac{1}{2}} \sigma^{\frac{1}{2}}.
\end{equation}
Similarly, Lemma \ref{lm:standard2} combined with Proposition \ref{prop:Hchibar} and \eqref{eq:||W_1||_2}--\eqref{eq:||W_1||_2'} implies
\begin{equation}\label{eq:||W_2||_9}
\Big \| \big [ H_{\bar \chi} - \lambda \big ]^{-\frac{1}{2}} \bar \chi \big ( \eqref{eq:||W_2||_3} + \eqref{eq:||W_2||_6} \big ) \Big \| \le \mathrm{C} \alpha^{\frac{1}{2}} \sigma^{\frac{3}{2}}.
\end{equation}
Finally the contribution from \eqref{eq:||W_2||_7} is estimated thanks to Lemma \ref{lm:nablaK-nablaE}: Using \eqref{eq:nablaK-,ablaE} with $\delta = \rho \sigma$, we get, for $j \in \{ 1,2,3 \}$, 
\begin{equation}
\Big \| \big [ H_{\bar \chi} - \lambda \big ]^{-\frac{1}{2}} \bar \chi \Big ( \big ( \nabla K_\sigma - \nabla E_\sigma \big )_j P_\sigma \Big ) \otimes \mathds{1}_{H_f \le \rho \sigma} \Big \| \le \mathrm{C}.
\end{equation}
Together with $\| (P_f)_j \chi^\sigma_f \| \le \mathrm{C} \sigma$, this yields
\begin{equation}\label{eq:||W_2||_13}
\Big \| \big [ H_{\bar \chi} - \lambda \big ]^{-\frac{1}{2}} \bar \chi \eqref{eq:||W_2||_7}  \Big \| \le \mathrm{C} \sigma \le \mathrm{C}' \alpha^{\frac{1}{4}} \sigma^{\frac{1}{2}}.
\end{equation}
Estimates \eqref{eq:||W_2||_8}, \eqref{eq:||W_2||_9} and \eqref{eq:||W_2||_13} imply \eqref{eq:||W_2||_1}, so \eqref{eq:||W_i||}, $i=2$, follows.
\end{proof}
In the next lemma, we estimate the commutators $[W_i, \i B^\sigma ]$, $i=1,2$.
\begin{lemma}\label{lm:[W,iB]}
For any $\mathrm{C}_0>0$, there exists $\alpha_0>0$ such that, for all $|P| \le p_c$, $0 \le \alpha \le \alpha_0$, $0 < \sigma \le \mathrm{C}_0 \alpha^{1/2}$, and $\lambda \in J_\sigma^<$
\begin{align}
& \| [ W_i , \i B^\sigma ] \| \le \mathrm{C} \alpha^{\frac{1}{2}} \sigma,\ i=1, 2, \label{eq:[W_i,iB]}
\end{align}
where $W_1$ and $W_2$ are as in \eqref{eq:def_W1}, \eqref{eq:def_W2}.
\end{lemma}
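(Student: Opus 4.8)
The plan is to differentiate the expressions for $W_1$ and $W_2$ against $B^\sigma$ and estimate term by term, exploiting the same structural decompositions used in Lemma \ref{lm:||W||} together with the commutation relations \eqref{eq:[Hf,iB]} and the fact that $\kappa^\sigma$ and $b^\sigma$ localize photon momenta in $\{|k| \le \sigma\}$. First I would treat $W_1 = \chi U_\sigma \chi$. Using that $B^\sigma$ acts only on the second tensor factor $\mathcal{F}^\sigma$ and commutes with $P_\sigma \otimes \mathds{1}$, the commutator $[W_1, \i B^\sigma]$ reduces to commutators of $B^\sigma$ with the photon-field operators appearing in \eqref{eq:||W_1||_1}, namely $\chi^\sigma_f A^\sigma \chi^\sigma_f$, $\chi^\sigma_f(P_f \cdot A^\sigma + A^\sigma \cdot P_f)\chi^\sigma_f$, and $\chi^\sigma_f (A^\sigma)^2 \chi^\sigma_f$. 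By \eqref{eq:[Phi,iB]_away}-type identities, $[A^\sigma, \i B^\sigma] = -\Phi(\i b^\sigma h^\sigma)$, and one checks $\|\Phi(\i b^\sigma h^\sigma)\chi^\sigma_f\| \le \mathrm{C}\sigma$ exactly as in \eqref{eq:||W_1||_2} because $b^\sigma h^\sigma$ has the same infrared behavior as $h^\sigma$ up to the harmless smooth cutoff; similarly $[\chi^\sigma_f, \i B^\sigma]$ and $[P_f, \i B^\sigma] = \d\Gamma(\kappa^\sigma(k)^2 k)$ contribute factors bounded by $\mathrm{C}\sigma$ on $\mathrm{Ran}(\chi^\sigma_f)$. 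Collecting the $\alpha^{1/2}$ (or $\alpha$) prefactors and the powers of $\sigma$ exactly as in the proof of Lemma \ref{lm:||W||} yields $\|[W_1,\i B^\sigma]\| \le \mathrm{C}\alpha^{1/2}\sigma$.

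Next I would handle $W_2 = -\chi W_\sigma \bar\chi [H_{\bar\chi}-\lambda]^{-1}\bar\chi W_\sigma \chi$. Writing the commutator by the Leibniz rule produces three groups of terms: $[\chi W_\sigma \bar\chi, \i B^\sigma]$ on the outside (twice, by symmetry), and the middle term $\chi W_\sigma \bar\chi [H_{\bar\chi}-\lambda]^{-1}[H_{\bar\chi}-\lambda, \i B^\sigma][H_{\bar\chi}-\lambda]^{-1}\bar\chi W_\sigma\chi$. For the outside terms, I would insert a resolvent half-power and estimate, using \eqref{eq:||W_2||_1} for $\|[H_{\bar\chi}-\lambda]^{-1/2}\bar\chi W_\sigma\chi\|\le \mathrm{C}\alpha^{1/4}\sigma^{1/2}$, against $\|[H_{\bar\chi}-\lambda]^{1/2}\bar\chi[\cdots,\i B^\sigma]\|$; the commutator $[\chi W_\sigma \bar\chi, \i B^\sigma]$ has the same field-operator structure as $W_\sigma\chi$ in \eqref{eq:||W_2||_2}-\eqref{eq:||W_2||_7} with one extra $b^\sigma$ inserted, so the bounds \eqref{eq:||W_1||_2}-\eqref{eq:||W_1||_2'} and Lemma \ref{lm:nablaK-nablaE} (with $\delta = \rho\sigma$) apply verbatim. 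The middle term uses Proposition \ref{prop:Hchibar}, \eqref{eq:Feshbach3}, to bound $\|\bar\chi[H_{\bar\chi}-\lambda]^{-1}\bar\chi\|\le \mathrm{C}\sigma^{-1}$, while $[H_{\bar\chi}-\lambda, \i B^\sigma] = [T_\sigma, \i B^\sigma] + [\bar\chi W_\sigma\bar\chi, \i B^\sigma]$; the first piece is controlled on $\mathrm{Ran}(\bar\chi \mathds{1}_{H_f\le\rho\sigma})$ as in Lemma \ref{lm:[F0,iB]} (giving $O(\sigma)$ on the relevant spectral subspace, or more crudely $\|[T_\sigma, \i B^\sigma](\mathds{1}\otimes \mathds{1}_{H_f\le\rho\sigma})\|\le\mathrm{C}\sigma$ since $P_f$ and $H_f$ are bounded by $\rho\sigma$ there), and the second piece is $O(\alpha^{1/2}\sigma)$ by the $W_1$-type analysis. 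Combining, the middle term is bounded by $(\alpha^{1/4}\sigma^{1/2})\cdot\sigma^{-1}\cdot\sigma\cdot(\alpha^{1/4}\sigma^{1/2}) = \mathrm{C}\alpha^{1/2}\sigma$, as required.

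A technical point worth isolating is that $[H_{\bar\chi}-\lambda, \i B^\sigma]$ is a priori unbounded (it contains $P_f\cdot\d\Gamma(\kappa^\sigma(k)^2 k)$), so one cannot simply take operator norms; instead the cutoffs $\chi^\sigma_f$ and $\mathds{1}_{H_f\le \rho\sigma}$ sitting next to the resolvents must be propagated through. I would justify, as in the proof of Proposition \ref{lm:LAP_F}, that $[H_{\bar\chi}-\lambda]^{-1}\bar\chi$ maps into $D(H_f)$ with the appropriate bound, so that the commutator with $B^\sigma$ can be computed on a dense set and extended by density. The main obstacle is precisely the bookkeeping of the middle term in $[W_2, \i B^\sigma]$: one must ensure that every factor of $\sigma^{-1}$ coming from a resolvent is compensated by a factor of $\sigma$ from a photon cutoff or a field operator, and in particular that the contribution of the dangerous term \eqref{eq:||W_2||_7} involving $(\nabla K_\sigma - \nabla E_\sigma)\otimes P_f$ — which is only $O(\sigma)$ rather than $O(\alpha^{1/2}\sigma)$ by itself — is absorbed, after commutation with $B^\sigma$, by the $\alpha^{1/4}\sigma^{1/2}$ factor on the opposite side via Lemma \ref{lm:nablaK-nablaE}, just as in \eqref{eq:||W_2||_13}. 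Once these estimates are assembled, \eqref{eq:[W_i,iB]} follows for $i=1,2$.
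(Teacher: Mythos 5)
Your treatment of $W_1 = \chi U_\sigma\chi$ is fine and follows the paper: expand the commutator against the explicit form \eqref{eq:||W_1||_1}, use $[A^\sigma,\i B^\sigma]=-\Phi(\i b^\sigma h^\sigma)$ together with $[\chi^\sigma_f,\i B^\sigma]$ and $[P_f,\i B^\sigma]$, and bound term by term exactly as in Lemma~\ref{lm:||W||}.

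For $W_2$, however, your handling of the ``middle term'' $\chi W_\sigma \bar\chi \,[H_{\bar\chi}-\lambda]^{-1}\,[H_{\bar\chi},\i B^\sigma]\,[H_{\bar\chi}-\lambda]^{-1}\bar\chi W_\sigma\chi$ has a genuine gap. You propose to bound $[H_{\bar\chi},\i B^\sigma]$ by $\mathrm{C}\sigma$ on $\mathrm{Ran}(\mathds{1}\otimes\mathds{1}_{H_f\le\rho\sigma})$ and to justify this by a density/domain argument ($[H_{\bar\chi}-\lambda]^{-1}\bar\chi$ maps into $D(H_f)$). But the relevant input to $[H_{\bar\chi},\i B^\sigma]$ is the \emph{range} of $[H_{\bar\chi}-\lambda]^{-1}\bar\chi W_\sigma\chi$, and that range is \emph{not} contained in any spectral subspace $\mathds{1}_{H_f\le\gamma\sigma}$: the full resolvent $[H_{\bar\chi}-\lambda]^{-1}$ does not commute with $\mathds{1}\otimes H_f$ because $U_\sigma$ inside $H_{\bar\chi}$ changes the photon content. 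Mapping into $D(H_f)$ is a qualitative statement and provides no quantitative cutoff; meanwhile $[H_{\bar\chi},\i B^\sigma]$ contains the term $\mathds{1}\otimes\bigl(P_f\cdot\d\Gamma(\kappa^\sigma(k)^2 k)\bigr)$, which grows like $(\mathds{1}\otimes H_f)^2$ and is not form-bounded by $H_{\bar\chi}-\lambda$, which only controls $\mathds{1}\otimes H_f$ linearly. So one really cannot ``take operator norms'' on an $H_f$-cutoff that isn't there.

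The paper's resolution, which you omit, is to expand $[H_{\bar\chi}-\lambda]^{-1}$ into the Neumann series in terms of $[H^1_{\bar\chi}-\lambda]^{-1}$ from \eqref{eq:Neumann}. The crucial point is that $H^1_{\bar\chi}$ does commute with $\mathds{1}\otimes H_f$, so $[H^1_{\bar\chi}-\lambda]^{-1}$ preserves the spectral cutoff, and each insertion of $U_\sigma$ raises the support of $H_f$ by at most $O(\sigma)$ (Lemma~\ref{lm:standard2}: $U_\sigma\chi^\sigma_f=(\mathds{1}\otimes\mathds{1}_{H_f\le3\sigma})U_\sigma\chi^\sigma_f$). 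After $n$ and $n'$ insertions one has cutoffs $\mathds{1}_{H_f\le(2n+1)\sigma}$ and $\mathds{1}_{H_f\le(2n'+1)\sigma}$ adjacent to $[H_{\bar\chi},\i B^\sigma]$, so the commutator is bounded by $\mathrm{C}(n+n'+1)\sigma$; the factor $n+n'+1$ is then absorbed by the geometric decay $(\mathrm{C}'\alpha^{1/2})^{n+n'}$ of the Neumann series. Without this bookkeeping device, the estimate $\|[W_2,\i B^\sigma]\|\le\mathrm{C}\alpha^{1/2}\sigma$ does not follow from the ingredients you assemble.
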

\begin{proof}
Using for instance the Helffer-Sj{\"o}strand functional calculus, the following identities follow straightforwardly from \eqref{eq:[Hf,iB]}:
\begin{align}
& [ \chi , \i B^\sigma ] = P_\sigma \otimes \big ( \d \Gamma ( \kappa^\sigma(k)^2 |k| ) ( \chi^\sigma_f )' (H_f) \big ), \label{eq:[chi,iB]} \\
& [ \bar \chi , \i B^\sigma ] = P_\sigma \otimes \big ( \d \Gamma ( \kappa^\sigma(k)^2 |k| ) ( \bar \chi^\sigma_f )' (H_f) \big ). \label{eq:[barchi,iB]}
\end{align}
Furthermore,
\begin{equation}\label{eq:[Phi,iB]}
[ A^\sigma , \i B^\sigma ] = - \Phi ( \i b^\sigma h^\sigma ).
\end{equation}
We first prove \eqref{eq:[W_i,iB]} for $i=1$. We have that
\begin{align}
[ W_1 , \i B^\sigma ] = [ \chi , \i B^\sigma ] U_\sigma \chi + \chi [ U_\sigma , \i B^\sigma ] \chi + \chi U_\sigma [ \chi , \i B^\sigma ].
\end{align}
As in the proof of \eqref{eq:||W_i||}, $i=1$, in Lemma \ref{lm:||W||}, we obtain, using \eqref{eq:[chi,iB]}, that
\begin{equation}
\big \| [ \chi , \i B^\sigma ] U_\sigma \chi \big \| = \big \| \chi U_\sigma [ \chi , \i B^\sigma ] \big \| \le \mathrm{C} \alpha^{\frac{1}{2}} \sigma.
\end{equation}
It follows from \eqref{eq:[Hf,iB]} and \eqref{eq:[Phi,iB]} that
\begin{align}
[U_\sigma , \i B^\sigma ] =& \alpha^{ \frac{1}{2} } \nabla K_\sigma \otimes \Phi ( \i b^\sigma h^\sigma ) - \frac{ \alpha^{ \frac{1}{2} } }{2} \mathds{1} \otimes \big ( \Phi( \i b^\sigma h^\sigma ) \cdot P_f + P_f \cdot \Phi ( \i b^\sigma h^\sigma ) \big ) \notag \\
& + \frac{ \alpha^{ \frac{1}{2} } }{2} \mathds{1} \otimes \big ( \Phi(h^\sigma) \cdot \d \Gamma( \kappa^\sigma(k)^2 k ) + \d \Gamma( \kappa^\sigma(k)^2 k ) \cdot \Phi( h^\sigma ) \big ) \notag \\
& - \frac{ \alpha }{2} \mathds{1} \otimes \big ( \Phi ( h^\sigma ) \cdot \Phi( \i b^\sigma h^\sigma ) + \Phi ( \i b^\sigma h^\sigma ) \cdot \Phi ( h^\sigma ) \big ). \phantom{ \frac{ \alpha^{ \frac{1}{2} } }{2} }
\end{align}
Arguing as in the proof of \eqref{eq:||W_i||}, $i=1$, in Lemma \ref{lm:||W||}, we then obtain
\begin{equation}
\big \| \chi [ U_\sigma , \i B^\sigma ] \chi \big \| \le \mathrm{C} \alpha^{\frac{1}{2}} \sigma.
\end{equation}
Hence \eqref{eq:[W_i,iB]}, $i=1$, is proven. In order to prove \eqref{eq:[W_i,iB]}, $i=2$, let us decompose
\begin{align}
[ W_2 , \i B^\sigma ] =& - [ \chi , \i B^\sigma ] W_\sigma \bar \chi \big [ H_{\bar \chi} - \lambda \big ]^{-1} \bar \chi W_\sigma \chi + \text{h.c.} \label{eq:[W_2,iB]_1} \\
& -\chi [ W_\sigma , \i B^\sigma ] \bar \chi \big [ H_{\bar \chi} - \lambda \big ]^{-1} \bar \chi W_\sigma \chi + \text{h.c.} \label{eq:[W_2,iB]_2} \\
& - \chi W_\sigma [ \bar \chi , \i B^\sigma ] \big [ H_{\bar \chi} - \lambda \big ]^{-1} \bar \chi W_\sigma \chi + \text{h.c.} \label{eq:[W_2,iB]_3} \\
& - \chi W_\sigma \bar \chi \big [ \big [ H_{\bar \chi} - \lambda \big ]^{-1} , \i B^\sigma \big ] \bar \chi W_\sigma \chi. \label{eq:[W_2,iB]_4}
\end{align}
Using Equations \eqref{eq:[Hf,iB]}, \eqref{eq:[chi,iB]}, \eqref{eq:[barchi,iB]} and \eqref{eq:[Phi,iB]} for the different commutators entering the terms \eqref{eq:[W_2,iB]_1}, \eqref{eq:[W_2,iB]_2} and \eqref{eq:[W_2,iB]_3}, one can check in the same way as in the proof of \eqref{eq:||W_i||}, $i=2$, in Lemma \ref{lm:||W||} that
\begin{align}\label{eq:[W_2,iB]_estimate1}
\big \| \eqref{eq:[W_2,iB]_1} + \eqref{eq:[W_2,iB]_2} + \eqref{eq:[W_2,iB]_3} \big \| \le \mathrm{C} \alpha^{\frac{1}{2}} \sigma.
\end{align}
To conclude we need to estimate \eqref{eq:[W_2,iB]_4}. We expand $[ H_{\bar \chi} - \lambda ]^{-1}$ into the Neumann series \eqref{eq:Neumann}, which leads to
\begin{align}
& \big [ \big [ H_{\bar \chi} - \lambda \big ]^{-1} , \i B^\sigma \big ] \notag \phantom{\sum_n}\\
&=- \big [ H_{\bar \chi} - \lambda \big ]^{-1} \big [ H_{\bar \chi} , \i B^\sigma \big ] \big [ H_{\bar \chi} - \lambda \big ]^{-1} \phantom{\sum_n} \notag \\
&= - \big [ H_{\bar \chi}^1 - \lambda \big ]^{-1} \sum_{n \ge 0} \Big ( - \bar \chi U_\sigma \bar \chi \big [ H_{\bar \chi}^1 - \lambda \big ]^{-1} \Big )^n \big [ H_{\bar \chi} , \i B^\sigma \big ] \notag \\
& \qquad \times \big [ H_{\bar \chi}^1 - \lambda \big ]^{-1} \sum_{n' \ge 0} \Big ( - \bar \chi U_\sigma \bar \chi \big [ H_{\bar \chi}^1 - \lambda \big ]^{-1} \Big )^{n'}. \label{eq:[W_2,iB]_6}
\end{align}
Inserting this series into \eqref{eq:[W_2,iB]_4} yields a sum of terms of the form
\begin{align}
& \chi W_\sigma \bar \chi \big [ H_{\bar \chi}^1 - \lambda \big ]^{-1} \Big ( \bar \chi U_\sigma \bar \chi \big [ H_{\bar \chi}^1 - \lambda \big ]^{-1} \Big )^n \big [ H_{\bar \chi} , \i B^\sigma \big ] \notag \\
& \times \big [ H_{\bar \chi}^1 - \lambda \big ]^{-1} \Big ( \bar \chi U_\sigma \bar \chi \big [ H_{\bar \chi}^1 - \lambda \big ]^{-1} \Big )^{n'} \bar \chi W_\sigma \chi, \label{eq:[W_2,iB]_6'}
\end{align}
where $n,n' \in \mathbb{N}$. To estimate \eqref{eq:[W_2,iB]_6'}, we notice that, by Lemma \ref{lm:standard2}, $W_\sigma \chi^\sigma_f = ( \mathds{1} \otimes \mathds{1}_{H_f \le 3\sigma} ) W_\sigma \chi^\sigma_f$, and likewise with $U_\sigma$ replacing $W_\sigma$. Thus, since $\mathds{1} \otimes H_f$ commutes with $H_{\bar \chi}^1$, we conclude from \eqref{eq:||W_2||_1} and \eqref{eq:Neumann_2} that
\begin{align}
\big \| \eqref{eq:[W_2,iB]_6'} \big \| \le \mathrm{C} \alpha^{\frac{1}{2}} \sigma \big ( \mathrm{C}' \alpha^{\frac{1}{2}} \big )^{n+n'} & \Big \| \big [ H_{\bar \chi}^1 - \lambda \big ]^{-\frac{1}{2}} ( \mathds{1} \otimes \mathds{1}_{H_f \le (2n+1) \sigma} ) \notag \\
& \big [ H_{\bar \chi} , \i B^\sigma \big ] ( \mathds{1} \otimes \mathds{1}_{H_f \le (2n'+1) \sigma} ) \big [ H_{\bar \chi}^1 - \lambda \big ]^{-\frac{1}{2}} \Big \|.
\end{align}
Using identities \eqref{eq:[Hf,iB]} and \eqref{eq:[chi,iB]}--\eqref{eq:[Phi,iB]}, one can check that, for any $\gamma>0$,
\begin{align}
\big \| \big [ H_{\bar \chi} , \i B^\sigma \big ] ( \mathds{1} \otimes \mathds{1}_{H_f \le \gamma \sigma} ) \big \| \le \mathrm{C} \gamma \sigma.
\end{align}
This implies
\begin{align}
\big \| \eqref{eq:[W_2,iB]_6'} \big \| \le \mathrm{C} \alpha^{\frac{1}{2}} \sigma (n+n'+1) \big ( \mathrm{C}' \alpha^{\frac{1}{2}} \big )^{n+n'}.
\end{align}
Summing over $n,n'$, we get that
\begin{align}\label{eq:[W_2,iB]_estimate3}
\big \| \eqref{eq:[W_2,iB]_4} \big \| \le \mathrm{C} \alpha^{\frac{1}{2}} \sigma,
\end{align}
for $\alpha$ small enough, which concludes the proof of \eqref{eq:[W_i,iB]}, $i=2$.
\end{proof}
In the proof of Theorem \ref{thm:Mourre}, it will be convenient to replace $F$ by an operator $\tilde F$, translated from $F$ in such a way that the unperturbed part in $\tilde F$ do not depend on the spectral parameter $\lambda$ anymore. More precisely, let
\begin{equation}
\tilde F := F + \lambda - E_\sigma.
\end{equation}
Then we have that $\tilde F = \tilde F_0 + W_1 + W_2$, where
\begin{equation}
\tilde F_0 := F_0 + \lambda - E_\sigma = \mathds{1} \otimes \big ( \frac{1}{2} P_f^2 + H_f \big ) - \nabla E_\sigma \otimes P_f,
\end{equation}
and $W_1$, $W_2$ are defined as in \eqref{eq:def_W1}, \eqref{eq:def_W2}.
\begin{lemma}\label{lm:Delta->Delta'}
For any $\mathrm{C}_0>0$, there exists $\alpha_0>0$ such that, for all $|P| \le p_c$, $0 \le \alpha \le \alpha_0$, $0 < \sigma \le \mathrm{C}_0 \alpha^{1/2}$, and $\lambda \in J_\sigma^<$,
\begin{equation}\label{eq:Delta->Delta'}
\mathds{1}_{\Delta_\sigma}(F) = \mathds{1}_{\Delta_\sigma}(F) \mathds{1}_{\Delta'_\sigma}( \tilde F ),
\end{equation}
where $\Delta'_\sigma := [ \rho \sigma / 16 , \rho \sigma / 8 ]$ and $\Delta_\sigma$ is given in \eqref{eq:Deltasigma_2}.
\end{lemma}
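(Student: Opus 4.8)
The plan is to exploit that $\tilde F$ differs from $F$ only by the scalar $\lambda - E_\sigma$, so that the two spectral projections in \eqref{eq:Delta->Delta'} are associated with the \emph{same} self-adjoint operator and the assertion reduces to an elementary inclusion of intervals. Since $\tilde F = F + (\lambda - E_\sigma)$ on $\mathrm{Ran}( P_\sigma \otimes \mathds{1} )$, the functional calculus gives
\[
\mathds{1}_{\Delta_\sigma}(F) = \mathds{1}_{\Delta_\sigma + (\lambda - E_\sigma)}(\tilde F),
\]
where $\Delta_\sigma + c := \{ t + c : t \in \Delta_\sigma \}$ (indeed $F = \tilde F - c$ with $c = \lambda - E_\sigma$, and $\mathds{1}_{\Delta_\sigma}(x-c)=1 \iff x \in \Delta_\sigma + c$). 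Consequently, once one shows $\Delta_\sigma + (\lambda - E_\sigma) \subseteq \Delta'_\sigma$, the relation $\mathds{1}_A(\tilde F)\mathds{1}_B(\tilde F) = \mathds{1}_{A \cap B}(\tilde F)$ for Borel sets $A,B$ yields
\[
\mathds{1}_{\Delta_\sigma}(F)\,\mathds{1}_{\Delta'_\sigma}(\tilde F) = \mathds{1}_{\Delta_\sigma + (\lambda - E_\sigma)}(\tilde F)\,\mathds{1}_{\Delta'_\sigma}(\tilde F) = \mathds{1}_{\Delta_\sigma + (\lambda - E_\sigma)}(\tilde F) = \mathds{1}_{\Delta_\sigma}(F),
\]
which is exactly \eqref{eq:Delta->Delta'}.

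It remains to verify the interval inclusion. I would write $\lambda - E_\sigma = (\lambda - E) + (E - E_\sigma)$. By hypothesis $\lambda \in J_\sigma^< = [E + 11\rho\sigma/128,\, E + 13\rho\sigma/128]$, so $\lambda - E \in [11\rho\sigma/128,\, 13\rho\sigma/128]$; and by part 2) of Proposition \ref{prop:Esigma(P)}, $|E - E_\sigma| \le \mathrm{C}\alpha\sigma$. Choosing $\alpha_0$ small enough, depending only on $\mathrm{C}_0$ through the constant of Proposition \ref{prop:Esigma(P)}, so that $\mathrm{C}\alpha \le \rho/128$ whenever $\alpha \le \alpha_0$, we obtain $\lambda - E_\sigma \in [10\rho\sigma/128,\, 14\rho\sigma/128]$. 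Since $\Delta_\sigma = [-\rho\sigma/128,\, \rho\sigma/128]$ has radius $\rho\sigma/128$, it follows that
\[
\Delta_\sigma + (\lambda - E_\sigma) \subseteq [9\rho\sigma/128,\, 15\rho\sigma/128] \subseteq [\rho\sigma/16,\, \rho\sigma/8] = \Delta'_\sigma,
\]
uniformly in $|P| \le p_c$, $0 < \sigma \le \mathrm{C}_0 \alpha^{1/2}$ and $\lambda \in J_\sigma^<$, which completes the argument.

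There is no genuine analytic difficulty here; the content is purely bookkeeping. The only points deserving a little care are (i) making the translation identity for spectral projections explicit on the subspace $\mathrm{Ran}( P_\sigma \otimes \mathds{1} )$ on which $F$ and $\tilde F$ act, and (ii) checking that the numerical margins built into the definitions of $J_\sigma^<$, $\Delta_\sigma$ and $\Delta'_\sigma$ leave enough slack to absorb the $O(\alpha\sigma)$ discrepancy between $E$ and $E_\sigma$ — which is ensured by shrinking $\alpha_0$ as above. If sharper constants were desired, the same computation with $\mathrm{C}\alpha \le \rho\varepsilon/128$ for small $\varepsilon$ would confine $\Delta_\sigma + (\lambda - E_\sigma)$ to an arbitrarily small neighbourhood of $[10\rho\sigma/128,\, 14\rho\sigma/128]$, still well inside $\Delta'_\sigma$.
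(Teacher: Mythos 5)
Your proof is correct and follows essentially the same route as the paper: both reduce the identity to the inclusion of the translated interval $\Delta_\sigma + (\lambda - E_\sigma)$ into $\Delta'_\sigma$, and then verify this using the definitions of $J_\sigma^<$, $\Delta_\sigma$, $\Delta'_\sigma$ and the bound $|E - E_\sigma| \le \mathrm{C}\alpha\sigma$ from Proposition \ref{prop:Esigma(P)}. You simply carry out explicitly the arithmetic that the paper leaves to the reader as ``one can verify''.
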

\begin{proof}
Since $\tilde F$ is a translate of $F$, it is only necessary to check that
$\Delta_\sigma\subseteq\Delta_\sigma'-\lambda+E_\sigma$ for all $\lambda\in J_\sigma^<$, or equivalently, that
$\Delta_\sigma\subseteq\Delta_\sigma'-J_\sigma^<+E_\sigma$ in the sense of ``sumsets''.
Using the definitions of $\Delta_\sigma$, $\Delta'_\sigma$, $J_\sigma^<$, and the fact that $|E - E_\sigma| \le \mathrm{C} \alpha \sigma$  by Proposition \ref{prop:Esigma(P)}, one can verify that this is the case for $\alpha$ sufficiently small.
\end{proof}
Let $f_\sigma \in \mathrm{C}_0^\infty(\mathbb{R} ; [0,1] )$ be such that $f_\sigma=1$ on $\Delta'_\sigma = [ \rho \sigma / 16 , \rho \sigma / 8 ]$ and
\begin{equation}
\mathrm{supp}(f_\sigma) \subset [ \frac{3}{64} \rho \sigma , \frac{9}{64} \rho \sigma ].
\end{equation}
\begin{lemma}\label{cor:f(F)-f(F_0)}
For any $\mathrm{C}_0>0$, there exists $\alpha_0>0$ such that, for all $|P| \le p_c$, $0 \le \alpha \le \alpha_0$, $0 < \sigma \le \mathrm{C}_0 \alpha^{1/2}$, and $\lambda \in J_\sigma^<$,
\begin{equation}
\big \| f_\sigma ( \tilde F ) - f_\sigma ( \tilde F_0 ) \big \| \le \mathrm{C} \alpha^{\frac{1}{2}}.
\end{equation}
\end{lemma}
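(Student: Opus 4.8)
The plan is to exploit that $\tilde F$ and $\tilde F_0$ differ by a bounded operator of small norm. By construction $\tilde F - \tilde F_0 = W_1 + W_2$, and Lemma~\ref{lm:||W||} gives $\|W_1\| + \|W_2\| \le \mathrm{C}\alpha^{1/2}\sigma$. Since $f_\sigma$ is a smooth function localized on an interval of length $O(\sigma)$ at distance $O(\sigma)$ from the origin, a crude bound $\|f_\sigma(\tilde F) - f_\sigma(\tilde F_0)\| \le \|f_\sigma'\|_\infty \|W_1 + W_2\|$ would cost a factor $\|f_\sigma'\|_\infty \sim \sigma^{-1}$, which is exactly compensated by the factor $\sigma$ in Lemma~\ref{lm:||W||}; the whole point is to make this bookkeeping of powers of $\sigma$ precise via a rescaling.

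Concretely, I would fix $g \in \mathrm{C}_0^\infty(\mathbb{R};[0,1])$ with $g=1$ on $[1/16,1/8]$ and $\mathrm{supp}(g) \subset [3/64,9/64]$, and choose $f_\sigma(\cdot) := g(\cdot/(\rho\sigma))$ (any $f_\sigma$ with the stated support properties may be taken so that $\|f_\sigma^{(k)}\|_\infty \le \mathrm{C}_k(\rho\sigma)^{-k}$, which is all that is used). Writing the self-adjoint operators $A := \tilde F/(\rho\sigma)$ and $A_0 := \tilde F_0/(\rho\sigma)$, one has $f_\sigma(\tilde F) - f_\sigma(\tilde F_0) = g(A) - g(A_0)$ and $\|A - A_0\| = (\rho\sigma)^{-1}\|W_1 + W_2\| \le \mathrm{C}\alpha^{1/2}$ by Lemma~\ref{lm:||W||}.

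Now I would apply the Helffer--Sj\"ostrand formula to the \emph{fixed} function $g$: letting $\tilde g$ be an almost analytic extension of $g$, compactly supported in $\{|\Im z| \le 1\}$ with $|\partial_{\bar z}\tilde g(z)| \le \mathrm{C}_N|\Im z|^N$, and using the resolvent identity,
\[
g(A) - g(A_0) = -\frac{1}{\pi}\int_{\mathbb{C}} \partial_{\bar z}\tilde g(z)\,(A-z)^{-1}(A-A_0)(A_0-z)^{-1}\,\d x\,\d y .
\]
Taking norms, using $\|(A-z)^{-1}\|,\|(A_0-z)^{-1}\| \le |\Im z|^{-1}$, and choosing $N=2$ so that $|\partial_{\bar z}\tilde g(z)|\,|\Im z|^{-2}$ is integrable with compact support,
\[
\big\| g(A) - g(A_0) \big\| \le \frac{1}{\pi}\,\|A-A_0\| \int_{\mathbb{C}} |\partial_{\bar z}\tilde g(z)|\,|\Im z|^{-2}\,\d x\,\d y \le \mathrm{C}_g\,\|A-A_0\| \le \mathrm{C}\alpha^{1/2},
\]
which is the claimed bound. (Equivalently, one may avoid rescaling and run Helffer--Sj\"ostrand directly with $f_\sigma$ and an almost analytic extension cut off at scale $|\Im z| \lesssim \sigma$; then $|\partial_{\bar z}\tilde f_\sigma(z)| \le \mathrm{C}_N\sigma^{-1}(|\Im z|/\sigma)^N\mathds{1}_{|\Im z|\le\sigma}$ on $\mathrm{supp}(f_\sigma)$, and the same computation yields $\|f_\sigma(\tilde F)-f_\sigma(\tilde F_0)\| \le \mathrm{C}\sigma^{-1}\|W_1+W_2\| \le \mathrm{C}\alpha^{1/2}$.)

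There is no genuine obstacle here: the only thing to be careful about is tracking the $\sigma$-scaling, i.e. that the $\sigma^{-1}$ lost in comparing $\|A-A_0\|$ with $\|W_1+W_2\|$ is precisely matched by the extra $\sigma$ supplied by Lemma~\ref{lm:||W||}. Standard references for this functional-calculus estimate are \cite{HS,ABG}.
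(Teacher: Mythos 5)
Your proposal is correct and is essentially the paper's argument: the paper also applies the Helffer--Sj\"ostrand formula together with the second resolvent identity and the bound $\|\tilde F - \tilde F_0\| = \|W_1+W_2\| \le \mathrm{C}\alpha^{1/2}\sigma$ from Lemma~\ref{lm:||W||}, using an almost analytic extension $\tilde f_\sigma$ supported in $\{|\Im z|\le\sigma\}$ with $|\partial_{\bar z}\tilde f_\sigma(z)|\le \mathrm{C}_n\sigma^{-1}(|\Im z|/\sigma)^n$ --- precisely your parenthetical ``non-rescaled'' variant. Your rescaling $A:=\tilde F/(\rho\sigma)$, $A_0:=\tilde F_0/(\rho\sigma)$ and fixed profile $g$ is only a cosmetic reorganization of the same $\sigma$-power bookkeeping.
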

\begin{proof}
Let $\tilde f_\sigma$ be an almost analytic extension of $f_\sigma$ obeying
\begin{align}
\mathrm{supp}( \tilde f_\sigma ) \subset \big \{ z \in \mathbb{C} , \Re(z) \in \mathrm{supp}(f_\sigma) , | \Im(z) | \le \sigma \big \}, \label{eq:tildef1}
\end{align}
$\partial_{\bar z} \tilde f_\sigma (z) = 0$ if $\Im(z)=0$, and
\begin{equation}\label{eq:tildef2}
\big | \frac{\partial \tilde f_\sigma}{ \partial \bar z } (z) \big | \le \frac{ \mathrm{C}_n }{ \sigma } \big ( \frac{ |y| }{ \sigma } \big )^n,
\end{equation}
for any $n\in\mathbb{N}$ (see for instance \cite{HS}). Here we used the notations
\begin{equation}
z = x + \i y, \quad \frac{ \partial }{ \partial \bar z} = \frac{ \partial }{ \partial x } + \i \frac{ \partial }{ \partial y }.
\end{equation}
By the Helffer-Sj{\" o}strand functional calculus and the second resolvent equation,
\begin{equation}\label{eq:f(F)-f(Ksigmachi)}
f_\sigma ( \tilde F ) - f_\sigma ( \tilde F_0 ) = \frac{ \i }{2\pi} \int \frac{ \partial \tilde f_\sigma }{ \partial \bar z }(z) \big [ \tilde F - z \big ]^{-1} \big ( \tilde F - \tilde F_0 \big ) \big [ \tilde F_0 - z \big ]^{-1} \d z \wedge \d \bar z.
\end{equation}
Lemma \ref{lm:||W||} implies
\begin{equation}\label{eq:F-Ksigmachi}
\big \| \tilde F - \tilde F_0 \big \| = \big \| F - F_0 \big \|  = \big \| W_1 + W_2 \| \le \mathrm{C} \alpha^{\frac{1}{2}} \sigma.
\end{equation}
The statement of the lemma then follows from \eqref{eq:tildef1}--\eqref{eq:F-Ksigmachi}.
\end{proof}
Lemma \ref{cor:f(F)-f(F_0)} will allow us to replace $f_\sigma( \tilde F)$ by $f_\sigma( \tilde F_0)$ in our proof of Theorem \ref{thm:Mourre}. In view of Lemma \ref{lm:[F0,iB]}, we shall also need to replace $f_\sigma(\tilde F_0)$ by some function of $H_f$. This is the purpose of the following lemma.
\begin{lemma}\label{lm:f(F_0)-f(Hf)}
For any $\mathrm{C}_0>0$, there exists $\alpha_0>0$ such that, for all $|P| \le p_c$, $0 \le \alpha \le \alpha_0$, $0 < \sigma \le \mathrm{C}_0 \alpha^{1/2}$, and $\lambda \in J_\sigma^<$,
\begin{equation}
f_\sigma( \tilde F_0 ) ( \mathds{1} \otimes \mathds{1}_{ \frac{1}{32} \rho \sigma \le H_f \le \frac{1}{4} \rho \sigma } ) = f_\sigma ( \tilde F_0 ).
\end{equation}
\end{lemma}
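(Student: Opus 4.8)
The plan is to exploit the fact that $\tilde F_0$ commutes with $\mathds{1}\otimes H_f$. Indeed, $P_f$, $P_f^2$ and $H_f$ are all second quantizations of mutually commuting multiplication operators on $\mathrm{L}^2(\{(k,\lambda),|k|\le\sigma\})$, so $\tilde F_0=\mathds{1}\otimes(\frac12 P_f^2+H_f)-\nabla E_\sigma\otimes P_f$ commutes with $\mathds{1}\otimes H_f$ on $\mathrm{Ran}(P_\sigma\otimes\mathds{1})$. It therefore suffices to sandwich $\tilde F_0$ between two functions of $\mathds{1}\otimes H_f$ and then restrict to the spectral subspace $\mathrm{Ran}\,\mathds{1}_{\mathrm{supp}(f_\sigma)}(\tilde F_0)$, on which $\frac{3}{64}\rho\sigma\le\tilde F_0\le\frac{9}{64}\rho\sigma$; since every operator involved commutes with $\mathds{1}\otimes H_f$, the operator inequalities below just amount to scalar inequalities on joint spectral subspaces.

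First I would record two elementary facts. On the $n$-photon sector $|\sum_i k_i|\le\sum_i|k_i|$, hence $P_f^2\le H_f^2$; and by Proposition \ref{prop:Esigma(P)}, $|\nabla E_\sigma|\le|P|+\mathrm{C}\alpha\le 1/4$ for $|P|\le p_c$ and $\alpha$ small. Since $\nabla E_\sigma\otimes P_f$ commutes with $\mathds{1}\otimes H_f$ and $|P_f|\le H_f$, comparing eigenvalues on joint spectral subspaces gives $\pm\,\nabla E_\sigma\otimes P_f\le\frac14\,\mathds{1}\otimes H_f$ on $\mathrm{Ran}(P_\sigma\otimes\mathds{1})$. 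Combining this with $\frac12 P_f^2\ge0$ and $\frac12 P_f^2\le\frac12 H_f^2$ yields, on $\mathrm{Ran}(P_\sigma\otimes\mathds{1})$,
\begin{equation*}
\frac34\,(\mathds{1}\otimes H_f)\ \le\ \tilde F_0\ \le\ \mathds{1}\otimes\Big(\frac54 H_f+\frac12 H_f^2\Big).
\end{equation*}

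Now I would restrict to $\mathrm{Ran}\,\mathds{1}_{\mathrm{supp}(f_\sigma)}(\tilde F_0)$. The left inequality together with $\tilde F_0\le\frac{9}{64}\rho\sigma$ gives $\mathds{1}\otimes H_f\le\frac{3}{16}\rho\sigma\le\frac14\rho\sigma$, which is the upper bound. Feeding $H_f\le\frac{3}{16}\rho\sigma$ back into the right inequality gives $\tilde F_0\le\big(\frac54+\frac{3}{32}\rho\sigma\big)\,\mathds{1}\otimes H_f\le\frac32\,\mathds{1}\otimes H_f$ once $\rho\sigma$ is small; combined with $\tilde F_0\ge\frac{3}{64}\rho\sigma$ this forces $\mathds{1}\otimes H_f\ge\frac{1}{32}\rho\sigma$. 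Hence $\mathds{1}_{\mathrm{supp}(f_\sigma)}(\tilde F_0)\le\mathds{1}\otimes\mathds{1}_{\frac{1}{32}\rho\sigma\le H_f\le\frac14\rho\sigma}$, and since $f_\sigma(\tilde F_0)=f_\sigma(\tilde F_0)\,\mathds{1}_{\mathrm{supp}(f_\sigma)}(\tilde F_0)$ and all these operators commute, multiplying by $\mathds{1}\otimes\mathds{1}_{\frac{1}{32}\rho\sigma\le H_f\le\frac14\rho\sigma}$ leaves $f_\sigma(\tilde F_0)$ unchanged, which is the assertion.

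There is no real obstacle here: once one notices the commutativity with $\mathds{1}\otimes H_f$, the argument is elementary. The only points needing a little care are the bookkeeping of the numerical constants — checking $\frac43\cdot\frac{9}{64}=\frac{3}{16}\le\frac14$ and $\frac23\cdot\frac{3}{64}=\frac{1}{32}$ — and verifying that the quadratic term $\frac12 H_f^2$ is genuinely lower order on the relevant subspace, which is exactly where the smallness of $\rho\sigma$ (hence of $\sigma\le\mathrm{C}_0\alpha^{1/2}$, $\alpha$ small) is used.
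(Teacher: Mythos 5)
Your proof is correct and follows essentially the same route as the paper's: both reduce the claim to a scalar analysis of the joint spectrum of $H_f$ and $P_f$ (using that $\tilde F_0$ commutes with $\mathds{1}\otimes H_f$, that $|P_f|\le H_f$, and the bound $|\nabla E_\sigma|\le 1/4$), and then verify the interval inclusion by explicit arithmetic. The only cosmetic difference is that you control the quadratic term $\frac12 H_f^2$ by absorbing it into a linear bound $\frac{3}{32}\rho\sigma\,H_f$ on the relevant spectral subspace, whereas the paper bounds $\frac12 X_0^2$ by a constant using the already-derived upper bound on $X_0$; the two devices are equivalent here.
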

\begin{proof}
We recall that
\begin{equation}
	\tilde F_0=\tilde F_0(H_f,P_f)
	=  \mathds{1} \otimes \big ( \frac{1}{2} P_f^2 + H_f \big ) - \nabla E_\sigma \otimes P_f \,.
\end{equation}
The claim of the lemma is equivalent to the statement that whenever $\tilde F_0(X_0,X)\in  \mathrm{supp}( f_\sigma ) $
with $|X|\leq X_0$, then $X_0\in  [\frac{1}{32}\rho\sigma,\frac{1}{4}\rho\sigma] $.

Let $[a,b]\equiv [ \frac{3}{64} \rho \sigma , \frac{9}{64} \rho \sigma ]\supset \mathrm{supp}( f_\sigma )$. We assume that
\begin{equation}
	a \, \le \, \tilde F(X_0,X) \, = \, X_0+\frac12 X^2-\nabla E_\sigma\cdot X \, \le \, b
\end{equation}
with $|X|\leq X_0$. Clearly, this implies, on the one hand, that
\begin{equation}
	X_0-|\nabla E_\sigma|X_0 \, \le \,  \tilde F(X_0,X) \, \le \, b
\end{equation}
so that $X_0\le(1-|\nabla E_\sigma|)^{-1}b$, and, on the other hand,
\begin{equation}
	 X_0+\frac12 X_0^2 +|\nabla E_\sigma|X_0 \, \ge \, \tilde F(X_0,X) \, \ge \, a
\end{equation}
so that $X_0\ge(1+|\nabla E_\sigma|)^{-1}(a-\frac12(1-|\nabla E_\sigma|)^{-2}b^2)$.

By Proposition \ref{prop:Esigma(P)},
$| \nabla E_\sigma | \le |P| + \mathrm{C} \alpha \le 1/10$ for $|P| \le 1/40$
and $\alpha$ sufficiently small. Thus, one concludes that
$X_0\in [\frac{1}{32} \rho \sigma,\frac{1}{4} \rho \sigma]$, as claimed.
\end{proof}
We will also make use of the following easy lemma.
\begin{lemma}\label{lm:[F,iB]bounded}
For any $\mathrm{C}_0>0$, there exists $\alpha_0>0$ such that, for all $|P| \le p_c$, $0 \le \alpha \le \alpha_0$, $0 < \sigma \le \mathrm{C}_0 \alpha^{1/2}$, and $\lambda \in J_\sigma^<$, the operators $[ F , \i B^\sigma ] f_\sigma(\tilde F_0)$ and $[ F , \i B^\sigma ] f_\sigma(\tilde F)$ are bounded on $\mathrm{Ran}( P_\sigma \otimes \mathds{1} )$ and satisfy
\begin{equation}\label{eq:[F,iB]bounded}
\big \| [ F , \i B^\sigma ] f_\sigma(\tilde F_0) \big \| \le \mathrm{C} \sigma, \quad \big \| [ F , \i B^\sigma ] f_\sigma(\tilde F) \big \| \le \mathrm{C} \sigma.
\end{equation}
\end{lemma}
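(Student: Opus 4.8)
The plan is to decompose $F=F_0+W_1+W_2$ as above and to bound each piece separately, reducing the whole estimate to a bound on $[F_0,\i B^\sigma]$ localized at low photon energies. Since $f_\sigma$ takes values in $[0,1]$, Lemma~\ref{lm:[W,iB]} immediately gives $\|[W_i,\i B^\sigma]f_\sigma(\tilde F_0)\|\le\|[W_i,\i B^\sigma]\|\le\mathrm{C}\alpha^{\frac{1}{2}}\sigma\le\mathrm{C}'\sigma$ for $\alpha\le\alpha_0$, and likewise with $f_\sigma(\tilde F)$ in place of $f_\sigma(\tilde F_0)$; so only the contribution of $[F_0,\i B^\sigma]$ has to be analyzed. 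For that, the single quantitative input I would establish is
\[
\big\|\,[F_0,\i B^\sigma]\,(\mathds{1}\otimes\mathds{1}_{H_f\le\gamma\sigma})\,\big\|\le\mathrm{C}\,\gamma\,\sigma,\qquad\gamma>0,
\]
uniformly in the parameters. This follows from the explicit formula \eqref{eq:[F0,iB]}, the inequalities \eqref{eq:[Hf,iB]_2}, the bound $|\nabla E_\sigma|\le 1/4$ from Proposition~\ref{prop:Esigma(P)}, and the fact that on $\mathcal{F}^\sigma$ every photon carries momentum $|k|\le\sigma$, so that $|P_f|\le H_f$ and $\d\Gamma(\kappa^\sigma(k)^2|k|)\le H_f$ there: on $\mathrm{Ran}(\mathds{1}\otimes\mathds{1}_{H_f\le\gamma\sigma})$ the operators $\mathds{1}\otimes\d\Gamma(\kappa^\sigma(k)^2|k|)$ and $\mathds{1}\otimes\d\Gamma(\kappa^\sigma(k)^2 k)$ are bounded by $\gamma\sigma$, and the quadratic term $\mathds{1}\otimes P_f\cdot\d\Gamma(\kappa^\sigma(k)^2 k)$ by $\mathrm{C}(\gamma\sigma)^2$. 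In particular $[F_0,\i B^\sigma]$ composed with such a spectral projection of $H_f$ extends to a bounded operator.

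It then remains to check that $f_\sigma(\tilde F_0)$ and $f_\sigma(\tilde F)$ each map into the spectral subspace $\{H_f\le\rho\sigma\}$ of $\mathds{1}\otimes H_f$, so that the cutoff in the displayed bound can be inserted. For $\tilde F_0=\mathds{1}\otimes(\frac{1}{2}P_f^2+H_f)-\nabla E_\sigma\otimes P_f$ this is immediate: $\tilde F_0$ is a function of the commuting pair $(H_f,P_f)$, hence commutes with every spectral projection of $\mathds{1}\otimes H_f$, while Lemma~\ref{lm:f(F_0)-f(Hf)} gives $f_\sigma(\tilde F_0)=(\mathds{1}\otimes\mathds{1}_{\frac{1}{32}\rho\sigma\le H_f\le\frac{1}{4}\rho\sigma})f_\sigma(\tilde F_0)$. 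For $\tilde F$ the key point --- and the only genuinely delicate step --- is that both $W_1=\chi U_\sigma\chi$ and $W_2=-\chi W_\sigma\bar\chi[H_{\bar\chi}-\lambda]^{-1}\bar\chi W_\sigma\chi$ carry the outer factor $\chi=P_\sigma\otimes\chi^\sigma_f$ with $\chi^\sigma_f=\kappa^{\rho\sigma}(H_f)$, which vanishes for $H_f>\rho\sigma$; hence $W_i(\mathds{1}\otimes\mathds{1}_{H_f>\rho\sigma})=0$, and by self-adjointness of $W_i$ also $(\mathds{1}\otimes\mathds{1}_{H_f>\rho\sigma})W_i=0$, so $W_i$ commutes with $\mathds{1}\otimes\mathds{1}_{H_f>\rho\sigma}$. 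Since $\tilde F_0$ does too, $\tilde F$ commutes with $\mathds{1}\otimes\mathds{1}_{H_f>\rho\sigma}$; writing $f_\sigma(\tilde F)=f_\sigma(\tilde F)(\mathds{1}\otimes\mathds{1}_{H_f\le\rho\sigma})+f_\sigma(\tilde F)(\mathds{1}\otimes\mathds{1}_{H_f>\rho\sigma})$, on the second summand $\tilde F$ acts as $\tilde F_0$, so that term equals $f_\sigma(\tilde F_0)(\mathds{1}\otimes\mathds{1}_{H_f>\rho\sigma})$, which is $0$ by Lemma~\ref{lm:f(F_0)-f(Hf)}. Therefore $f_\sigma(\tilde F)=f_\sigma(\tilde F)(\mathds{1}\otimes\mathds{1}_{H_f\le\rho\sigma})$.

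Finally I would assemble the estimates: $[F_0,\i B^\sigma]f_\sigma(\tilde F_0)=[F_0,\i B^\sigma](\mathds{1}\otimes\mathds{1}_{H_f\le\frac{1}{4}\rho\sigma})f_\sigma(\tilde F_0)$ has norm $\le\mathrm{C}\sigma$ by the displayed bound together with $\|f_\sigma(\tilde F_0)\|\le1$, and similarly $[F_0,\i B^\sigma]f_\sigma(\tilde F)=[F_0,\i B^\sigma](\mathds{1}\otimes\mathds{1}_{H_f\le\rho\sigma})f_\sigma(\tilde F)$ has norm $\le\mathrm{C}\sigma$; adding the $O(\alpha^{1/2}\sigma)$ bounds for the two $[W_i,\i B^\sigma]$ terms from Lemma~\ref{lm:[W,iB]} yields \eqref{eq:[F,iB]bounded} and, in passing, shows that both compositions extend to bounded operators on $\mathrm{Ran}(P_\sigma\otimes\mathds{1})$. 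The main obstacle is the one flagged above: producing the photon-energy cutoff in front of $[F,\i B^\sigma]$ for the function $f_\sigma(\tilde F)$ without any a priori boundedness of $[F,\i B^\sigma]$ itself; it is resolved precisely because the perturbations $W_i$ are sandwiched between copies of $\chi^\sigma_f$ and therefore commute with the relevant spectral projection of $H_f$.
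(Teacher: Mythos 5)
Your argument is correct, and for the key step — establishing $f_\sigma(\tilde F)=(\mathds{1}\otimes\mathds{1}_{H_f\le\rho\sigma})f_\sigma(\tilde F)$ — it takes a genuinely different route from the paper. The paper also observes that $\tilde F$ commutes with $Q:=\mathds{1}\otimes\mathds{1}_{H_f\ge\rho\sigma}$, but then invokes Lemma~\ref{lm:||W||} to estimate $\tilde F Q\ge\tilde F_0Q-\mathrm{C}\alpha^{1/2}\sigma Q$ and pushes this through to the quantitative lower bound $\tilde F Q\ge\tfrac12\rho\sigma Q$, which places $\mathrm{Ran}(Q)$ above $\mathrm{supp}(f_\sigma)$. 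You instead notice the stronger algebraic fact that both perturbations $W_1,W_2$ carry $\chi=P_\sigma\otimes\chi^\sigma_f$ on the right (hence $W_iQ=0$) and are self-adjoint (hence $QW_i=0$), so $\tilde F$ and $\tilde F_0$ actually coincide as operators on $\mathrm{Ran}(Q)$; this reduces $f_\sigma(\tilde F)Q=f_\sigma(\tilde F_0)Q$, and Lemma~\ref{lm:f(F_0)-f(Hf)} already says this vanishes. Your version avoids the energy estimate entirely and makes transparent why no smallness in $\alpha$ is needed at this stage (the paper's lower-bound argument requires $\alpha$ small). The remaining pieces of your proof — the $\mathrm{O}(\gamma\sigma)$ bound on $[F_0,\i B^\sigma](\mathds{1}\otimes\mathds{1}_{H_f\le\gamma\sigma})$ derived from \eqref{eq:[F0,iB]}, \eqref{eq:[Hf,iB]_2} and $|\nabla E_\sigma|\le1/4$, and the use of Lemmata~\ref{lm:[W,iB]} and \ref{lm:f(F_0)-f(Hf)} for the rest — are the same as the paper's.
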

\begin{proof}
The first bound in \eqref{eq:[F,iB]bounded} is a consequence of Lemmata \ref{lm:[W,iB]} and \ref{lm:f(F_0)-f(Hf)}. Indeed, using expression \eqref{eq:[F0,iB]} for $[F_0,\i B^\sigma]$, we get
\begin{align}
\big \| [ F , \i B^\sigma ] f_\sigma(\tilde F_0) \big \| &\le \big \| [ F_0 , \i B^\sigma ] ( \mathds{1} \otimes \mathds{1}_{H_f \le \frac{1}{4} \rho \sigma } ) \big \| + \big \| [ W_1 , \i B^\sigma ] \big \| + \big \| [ W_2 , \i B^\sigma ] \big \| \notag \\
& \le \mathrm{C} \sigma.
\end{align}
Likewise, to prove the second bound in \eqref{eq:[F,iB]bounded}, it suffices to show that
\begin{equation}\label{eq:1(F)1(Hf)}
f_\sigma( \tilde F ) = ( \mathds{1} \otimes \mathds{1}_{H_f \le \rho \sigma } ) f_\sigma( \tilde F ).
\end{equation}
Since $\chi^\sigma_f \mathds{1}_{ H_f \le \rho \sigma } = \chi^\sigma_f$, and since $\tilde F_0$ commutes with $\mathds{1} \otimes \mathds{1}_{ H_f \le \rho \sigma }$, it follows that $\tilde F$ commutes with $\mathds{1} \otimes \mathds{1}_{H_f \le \rho \sigma}$. By Lemma \ref{lm:||W||},
\begin{align}
\tilde F ( \mathds{1} \otimes \mathds{1}_{H_f \ge \rho \sigma} ) \ge \tilde F_0 ( \mathds{1} \otimes \mathds{1}_{H_f \ge \rho \sigma} ) - \mathrm{C} \alpha^{\frac{1}{2}} \sigma ( \mathds{1} \otimes \mathds{1}_{H_f \ge \rho \sigma} ).
\end{align}
Using the fact that $| \nabla E_\sigma | \le 1/8$ for $|P|\le 1/40$ and $\alpha$ sufficiently small (see Proposition \ref{prop:Esigma(P)}), we obtain
\begin{align}
\tilde F_0 ( \mathds{1} \otimes \mathds{1}_{H_f \ge \rho \sigma} ) & = \big ( \mathds{1} \otimes \big ( \frac{1}{2} P_f^2 + H_f \big ) - \nabla E_\sigma \otimes P_f \big ) ( \mathds{1} \otimes \mathds{1}_{H_f \ge \rho \sigma} ) \notag \\
& \ge ( 1 - 2 | \nabla E_\sigma | ) ( \mathds{1} \otimes H_f ) ( \mathds{1} \otimes \mathds{1}_{H_f \ge \rho \sigma} ) \phantom{ \frac{1}{2} } \notag \\
& \ge \frac{3}{4} \rho \sigma ( \mathds{1} \otimes \mathds{1}_{H_f \ge 2 \rho \sigma} ) .
\end{align}
Hence, for $\alpha$ sufficiently small,
\begin{align}
\tilde F ( \mathds{1} \otimes \mathds{1}_{H_f \ge \rho \sigma} ) \ge \frac{ 1 }{ 2 } \rho \sigma ( \mathds{1} \otimes \mathds{1}_{H_f \ge \rho \sigma} ).
\end{align}
Since $\mathrm{supp}(f_\sigma) \subset [ 3 \rho \sigma / 64 , 9 \rho \sigma / 64 ]$, it follows that $(\mathds{1} \otimes \mathds{1}_{H_f \ge \rho \sigma} ) f_\sigma( \tilde F ) = 0$, which establishes \eqref{eq:1(F)1(Hf)} and concludes the proof.
\end{proof}
Next, we turn to the proof of Theorem \ref{thm:Mourre}. Recall that the intervals $\Delta_\sigma$, $\Delta'_\sigma$ are given by $\Delta_\sigma = [ - \rho \sigma / 128 , \rho \sigma / 128 ]$, $\Delta'_\sigma = [ \rho \sigma / 16 , \rho \sigma /8 ]$, and that the function $f_\sigma \in \mathrm{C}_0^\infty( \mathbb{R} ; [0,1] )$ is such that $f_\sigma = 1$ on $\Delta'_\sigma$ and $\mathrm{supp}(f_\sigma) \subset [3 \rho \sigma / 64 , 9 \rho \sigma / 64 ]$. Let us also recall the notations $\tilde F = F + \lambda - E_\sigma$, $\tilde F_0 = F_0 + \lambda - E_\sigma$. By Lemma \ref{lm:Delta->Delta'}, we have that
\begin{align}
& \mathds{1}_{\Delta_\sigma} (F) [ F , \i B^\sigma ] \mathds{1}_{\Delta_\sigma}(F) \notag \\
& = \mathds{1}_{\Delta_\sigma} (F) \mathds{1}_{\Delta'_\sigma} ( \tilde F ) [ F , \i B^\sigma ] \mathds{1}_{\Delta'_\sigma} ( \tilde F ) \mathds{1}_{\Delta_\sigma}(F) \label{eq:mourre0} \\
& = \mathds{1}_{\Delta_\sigma} (F) \mathds{1}_{\Delta'_\sigma} ( \tilde F ) f_\sigma ( \tilde F ) [ F , \i B^\sigma ] f_\sigma( \tilde F ) \mathds{1}_{\Delta'_\sigma} ( \tilde F ) \mathds{1}_{\Delta_\sigma}(F).
\end{align}
Next, we write
\begin{align}
&f_\sigma ( \tilde F ) [ F , \i B^\sigma ] f_\sigma( \tilde F ) \notag \\
& = f_\sigma( \tilde F_0 ) [ F , \i B^\sigma ] f_\sigma( \tilde F_0 )  \label{eq:mourre1} \\
& \quad + ( f_\sigma( \tilde F ) - f_\sigma( \tilde F_0 ) ) [ F , \i B^\sigma ] f_\sigma( \tilde F ) + f_\sigma( \tilde F_0 ) [ F , \i B^\sigma ] ( f_\sigma ( \tilde F ) - f_\sigma ( \tilde F_0 ) ). \label{eq:mourre2}
\end{align}
Lemmata \ref{cor:f(F)-f(F_0)} and \ref{lm:[F,iB]bounded} imply
\begin{equation}\label{eq:mourre4}
\| \eqref{eq:mourre2} \| \le \mathrm{C} \alpha^{\frac{1}{2}} \sigma.
\end{equation}
Using Lemmata \ref{lm:[F0,iB]}, \ref{lm:[W,iB]}, \ref{cor:f(F)-f(F_0)} and \ref{lm:f(F_0)-f(Hf)}, we estimate \eqref{eq:mourre1} from below as follows:
\begin{align}
&  f_\sigma( \tilde F_0 ) [ F , \i B^\sigma ] f_\sigma( \tilde F_0 ) \notag \\
& \ge  f_\sigma( \tilde F_0 ) [ F_0 , \i B^\sigma ] f_\sigma( \tilde F_0 )   - \mathrm{C} \alpha^{\frac{1}{2}} \sigma f_\sigma( \tilde F_0 )^2 \notag \\
& \ge f_\sigma( \tilde F_0 ) [F_0,\i B^\sigma] ( \mathds{1} \otimes \mathds{1}_{ \frac{1}{32} \rho \sigma \le H_f \le \frac{1}{4} \rho \sigma } ) f_\sigma( \tilde F_0 ) - \mathrm{C} \alpha^{\frac{1}{2}} \sigma f_\sigma( \tilde F_0 )^2 \notag \\
& \ge \frac{1}{2} f_\sigma( \tilde F_0 ) ( \mathds{1} \otimes H_f ) ( \mathds{1} \otimes \mathds{1}_{ \frac{1}{32}  \rho \sigma \le H_f \le \frac{1}{4} \rho \sigma } ) f_\sigma( \tilde F_0 ) - \mathrm{C}' \alpha^{\frac{1}{2}} \sigma f_\sigma( \tilde F_0 )^2 \notag \\
& \ge \frac{\rho \sigma }{64} f_\sigma( \tilde F_0 )^2 - \mathrm{C}' \alpha^{\frac{1}{2}} \sigma f_\sigma( \tilde F_0)^2 \notag \\
& \ge \frac{\rho \sigma}{64} f_\sigma( \tilde F )^2 - \mathrm{C}'' \alpha^{\frac{1}{2}} \sigma. \label{eq:[F,iB]ge_5}
\end{align}
Inequality \eqref{eq:[F,iB]ge_5} combined with \eqref{eq:mourre4} yield
\begin{align}
f_\sigma( \tilde F ) [ F , \i B^\sigma ] f_\sigma( \tilde F ) & \ge \frac{ \rho \sigma }{64} f_\sigma( \tilde F )^2 - \mathrm{C} \alpha^{\frac{1}{2}} \sigma \notag \\
&\ge \frac{ \rho \sigma }{128} f_\sigma( \tilde F )^2 - \mathrm{C} \alpha^{\frac{1}{2}} \sigma \big ( \mathds{1} - f_\sigma( \tilde F )^2 \big ), \label{eq:mourre6}
\end{align}
provided that $\alpha$ is sufficiently small. Multiplying both sides of \eqref{eq:mourre6} by $\mathds{1}_{\Delta'_\sigma}( \tilde F )$ gives
\begin{equation}
\mathds{1}_{\Delta'_\sigma}( \tilde F ) [ F , \i B^\sigma ] \mathds{1}_{\Delta'_\sigma}( \tilde F ) \ge \frac{ \rho \sigma }{128} \mathds{1}_{\Delta'_\sigma}( \tilde F ).
\end{equation}
Inserting this into \eqref{eq:mourre0} and using Lemma \ref{lm:Delta->Delta'} conclude the proof of the theorem. \hfill $\square$

\appendix

\section{Technical estimates}\label{appendix:estimates}

In this appendix we collect some estimates that were used in Sections \ref{section:Feshbach} and \ref{section:mourre}. For $f: \mathbb{R}^3 \times \mathbb{Z}_2 \mapsto \mathbb{C}$ and $\gamma > 0$, we define
\begin{equation}
f^\gamma( k,\lambda ) = f( k,\lambda ) \mathds{1}_{ |k| \le \gamma }.
\end{equation}
Similarly we set
\begin{equation}
H_f^\gamma = \sum_{\lambda=1,2} \int_{ |k| \le \gamma } |k| a^*_\lambda( k ) a_\lambda( k ) \d k.
\end{equation}
We begin with two well-known lemmata; (see for instance \cite{BFS} for a proof).
\begin{lemma}\label{lm:standard1}
For any $f \in \mathrm{L}^2( \mathbb{R}^3 \times \mathbb{Z}_2 )$ such that $|k|^{-1/2} f \in \mathrm{L}^2( \mathbb{R}^3 \times \mathbb{Z}_2 )$, and any $\gamma > 0$,
\begin{align}
& \| a( f^\gamma ) [ H_f^\gamma + \gamma ]^{-1/2} \| \le \| |k|^{-\frac{1}{2}} f^\gamma \|, \\
& \| a^*( f^\gamma ) [ H_f^\gamma + \gamma ]^{-1/2} \| \le \| |k|^{-\frac{1}{2}} f^\gamma \| + \gamma^{-\frac{1}{2}} \| f^\gamma \|.
\end{align}
\end{lemma}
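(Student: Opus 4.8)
The final statement to prove is Lemma \ref{lm:standard1}, the standard relative bounds on annihilation and creation operators with a form factor $f$ having $|k|^{-1/2}f \in \mathrm{L}^2$.

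The plan is to reduce everything to the $n$-particle sectors of Fock space and use the pull-through formula together with Cauchy--Schwarz. First I would recall that for $\Psi \in \mathcal{F}$ with components $\Psi^{(n)}$, the annihilation operator acts by $(a(f^\gamma)\Psi)^{(n)}(k_1,\dots,k_n) = \sqrt{n+1}\sum_\lambda \int \overline{f^\gamma(k,\lambda)} \Psi^{(n+1)}(k,\lambda,k_1,\dots,k_n)\,\d k$, and $H_f^\gamma$ acts diagonally by multiplication by $\sum_{j=1}^n |k_j|$ on the $n$-particle sector (restricted to momenta with $|k_j|\le \gamma$). To bound $\|a(f^\gamma)[H_f^\gamma+\gamma]^{-1/2}\Psi\|$, write $\Phi = [H_f^\gamma+\gamma]^{-1/2}\Psi$ and estimate $\|a(f^\gamma)\Phi\|^2$ sector by sector: inside the integral defining $(a(f^\gamma)\Phi)^{(n)}$, insert a factor $(|k|^{1/2})(|k|^{-1/2})$ and apply Cauchy--Schwarz in the $k$ variable, getting $\big|\int \overline{f^\gamma}\Phi^{(n+1)}\big|^2 \le \big(\int |k|^{-1}|f^\gamma|^2\big)\big(\int |k| |\Phi^{(n+1)}|^2\big)$. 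Summing over $n$ and over the remaining variables, the factor $\int |k| |\Phi^{(n+1)}(k,\cdot)|^2$ is controlled by $\langle \Phi, H_f^\gamma \Phi\rangle \le \langle \Phi,(H_f^\gamma+\gamma)\Phi\rangle = \|\Psi\|^2$, while the combinatorial factor $(n+1)$ from the symmetrization matches the number of ways the extra momentum can sit, so everything assembles into $\|a(f^\gamma)[H_f^\gamma+\gamma]^{-1/2}\Psi\| \le \||k|^{-1/2}f^\gamma\|\,\|\Psi\|$.

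For the creation operator bound, I would use the canonical commutation relation $a^*(f^\gamma) = a(f^\gamma)^* $ together with $[a(f^\gamma), a^*(f^\gamma)] = \|f^\gamma\|^2$, or more directly estimate $\|a^*(f^\gamma)\Phi\|^2 = \|a(f^\gamma)\Phi\|^2 + \|f^\gamma\|^2\|\Phi\|^2$. The first term is bounded as above by $\||k|^{-1/2}f^\gamma\|^2\langle\Phi,H_f^\gamma\Phi\rangle$; the second by $\|f^\gamma\|^2\|\Phi\|^2 \le \gamma^{-1}\|f^\gamma\|^2\langle\Phi,(H_f^\gamma+\gamma)\Phi\rangle$ since $[H_f^\gamma+\gamma]^{-1}\le \gamma^{-1}$. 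Taking square roots and using $\sqrt{a+b}\le\sqrt a+\sqrt b$ gives $\|a^*(f^\gamma)[H_f^\gamma+\gamma]^{-1/2}\| \le \||k|^{-1/2}f^\gamma\| + \gamma^{-1/2}\|f^\gamma\|$, as claimed.

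There is no real obstacle here; the lemma is textbook material (the excerpt itself points to \cite{BFS} for a proof), and the only points requiring a little care are bookkeeping the symmetrization/combinatorial factors correctly on the $n$-particle sectors and noting that the cutoff $|k|\le\gamma$ makes $H_f^\gamma$ act as advertised so that the bound $[H_f^\gamma+\gamma]^{-1}\le\gamma^{-1}$ is legitimate on the relevant subspace. Accordingly I would keep the write-up short, presenting the sector-wise Cauchy--Schwarz estimate for $a(f^\gamma)$ and then deriving the $a^*(f^\gamma)$ bound from the commutation relation, and simply cite \cite{BFS} for the standard details.
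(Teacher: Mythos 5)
Your argument is correct and is exactly the standard argument for these relative bounds; the paper itself gives no proof and simply refers the reader to \cite{BFS}. The sector-wise Cauchy--Schwarz for $a(f^\gamma)$ (using the symmetry to absorb the $(n+1)$ factor into $\langle\Phi,H_f^\gamma\Phi\rangle$) and the CCR identity $\|a^*(f^\gamma)\Phi\|^2=\|a(f^\gamma)\Phi\|^2+\|f^\gamma\|^2\|\Phi\|^2$ together with $(H_f^\gamma+\gamma)^{-1}\le\gamma^{-1}$ are precisely the steps the cited reference uses.
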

\begin{lemma}\label{lm:standard2}
For any $f \in \mathrm{L}^2( \mathbb{R}^3 \times \mathbb{Z}_2 )$, and any $\gamma > 0$, $\delta>0$,
\begin{align}
& a(f^\gamma) \mathds{1}_{ H_f^\gamma \le \delta } = \mathds{1}_{ H_f^\gamma \le \delta } a(f^\gamma) \mathds{1}_{ H_f^\gamma \le \delta } \\
& a^*(f^\gamma) \mathds{1}_{ H_f^\gamma \le \delta } = \mathds{1}_{ H_f^\gamma \le \gamma + \delta } a^*(f^\gamma) \mathds{1}_{ H_f^\gamma \le \delta }
\end{align}
\end{lemma}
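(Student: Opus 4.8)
The plan is to check both identities sector by sector on the symmetric Fock space $\mathcal{F}=\mathbb{C}\oplus\bigoplus_{n\ge1}\mathcal{F}^{(n)}$, with $\mathcal{F}^{(n)}=S_n\mathrm{L}^2(\mathbb{R}^3\times\mathbb{Z}_2)^{\otimes n}$. On $\mathcal{F}^{(n)}$ the operator $H_f^\gamma$ acts as multiplication by $\omega_\gamma(k_1,\lambda_1,\dots,k_n,\lambda_n):=\sum_{i=1}^n|k_i|\mathds{1}_{|k_i|\le\gamma}$, so that $\mathds{1}_{H_f^\gamma\le\delta}|_{\mathcal{F}^{(n)}}$ is multiplication by the indicator of $\{\omega_\gamma\le\delta\}$; the operator $a(f^\gamma)$ maps $\mathcal{F}^{(n)}\to\mathcal{F}^{(n-1)}$ by $\sqrt n$ times the contraction of the first variable against $\overline{f^\gamma}$, and $a^*(f^\gamma)$ maps $\mathcal{F}^{(n)}\to\mathcal{F}^{(n+1)}$ by the symmetrized multiplication by $f^\gamma$.

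First I would handle the annihilation operator. Let $\psi\in\mathcal{F}^{(n)}$ with $\mathds{1}_{H_f^\gamma\le\delta}\psi=\psi$, i.e. $\psi$ supported in $\{\omega_\gamma\le\delta\}$. Then, up to the combinatorial factor, $(a(f^\gamma)\psi)(k_1,\lambda_1,\dots)$ is $\sum_\lambda\int\overline{f^\gamma(k,\lambda)}\,\psi(k,\lambda,k_1,\lambda_1,\dots)\,\d k$; since $f^\gamma$ is supported in $\{|k|\le\gamma\}$, only $(k,\lambda)$ with $|k|\le\gamma$ contribute, and for such $k$ the factor $\psi(k,\lambda,\cdot)$ is supported where $|k|+\sum_{i=1}^{n-1}|k_i|\mathds{1}_{|k_i|\le\gamma}\le\delta$, hence where $\sum_{i=1}^{n-1}|k_i|\mathds{1}_{|k_i|\le\gamma}\le\delta$. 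Thus $a(f^\gamma)\psi$ lies in $\mathrm{Ran}(\mathds{1}_{H_f^\gamma\le\delta})$ on $\mathcal{F}^{(n-1)}$, which is the first identity on this sector. The creation operator is the same bookkeeping in reverse: each term of $(a^*(f^\gamma)\psi)(k_1,\lambda_1,\dots,k_{n+1},\lambda_{n+1})$ is $f^\gamma(k_j,\lambda_j)$ times $\psi$ evaluated at the remaining $n$ variables; $f^\gamma(k_j,\lambda_j)\ne0$ forces $|k_j|\le\gamma$, contributing at most $\gamma$ to $\omega_\gamma$, while the remaining variables already contribute at most $\delta$, so $a^*(f^\gamma)\psi$ is supported in $\{\omega_\gamma\le\gamma+\delta\}$ on $\mathcal{F}^{(n+1)}$, giving the second identity.

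Finally I would assemble the sectors. Since $a^\#(f^\gamma)$ is $[H_f^\gamma+1]^{1/2}$-bounded (compare Lemma \ref{lm:standard1}), both sides of each identity are densely defined and agree on a common core, e.g. the algebraic direct sum of the $\mathcal{F}^{(n)}$ intersected with $\mathrm{Ran}(\mathds{1}_{H_f^\gamma\le\delta})$, on which the sector computation applies verbatim, and the identities extend by closure. Alternatively one may phrase the argument intrinsically via the pull-through relations $[H_f^\gamma,a(f^\gamma)]=-a(|k|f^\gamma)$ and $[H_f^\gamma,a^*(f^\gamma)]=a^*(|k|f^\gamma)$ together with $0\le|k|f^\gamma\le\gamma f^\gamma$ on $\mathrm{supp}(f^\gamma)$. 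There is no genuine obstacle here; the only point that deserves a line of care is the passage from the $n$-particle identities to an identity between (unbounded) operators on all of $\mathcal{F}$, which is routine given the relative bound just mentioned.
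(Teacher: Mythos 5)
Your proof is correct; the paper simply invokes the pull-through identity $a(k)g(H_f^\gamma)=g(H_f^\gamma+|k|)a(k)$ for $|k|\le\gamma$, which is exactly what your sector-by-sector support argument proves, and which you yourself note as the intrinsic reformulation. The two are the same argument at different levels of abstraction, and your handling of the domain issues via relative boundedness is fine.
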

\begin{proof}
The statement of the lemma follows directly from the ``pull-through formula''
\begin{equation}
a(k) g( H_f^\gamma ) = g( H_f^\gamma + |k| ) a(k),
\end{equation}
which holds for any bounded measurable function $g : [ 0 , \infty ) \to \mathbb{C}$, and any $k \in \mathbb{R}^3$, $|k| \le \gamma$.
\end{proof}
In the following, the parameters $\alpha$, $\sigma$ and $P$ are fixed with $ 0 \le \alpha \le \alpha_0$, where $\alpha_0$ is sufficiently small, $0 < \sigma \le \mathrm{C}_0 \alpha^{1/2}$, where $\mathrm{C}_0$ is a positive constant, and $|P| \le p_c = 1/40$. We use the notations introduced in Section \ref{section:decomp}.
\begin{lemma}\label{lm:appendix0}
For any $\mathrm{c} \ge 1/2$, we have that
\begin{equation}\label{eq:lm_Hsigmac}
K_\sigma \otimes \mathds{1} + \mathds{1} \otimes \big ( \frac{1}{2} P_f^2 + \mathrm{c} H_f \big ) - \nabla K_\sigma \otimes P_f \ge E_\sigma.
\end{equation}
In particular,
\begin{equation}\label{eq:Hf_le_Hsigma}
\mathds{1} \otimes H_f \le 2 ( H_\sigma -  E_\sigma ).
\end{equation}
\end{lemma}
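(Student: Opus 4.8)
The plan is to prove \eqref{eq:lm_Hsigmac} first for the borderline value $\mathrm{c}=1/2$, and to deduce everything else by monotonicity. For general $\mathrm{c}\ge 1/2$ the left-hand side of \eqref{eq:lm_Hsigmac} equals the $\mathrm{c}=1/2$ expression plus $(\mathrm{c}-\tfrac12)\,\mathds{1}\otimes H_f\ge 0$, so the case $\mathrm{c}=1/2$ implies all cases. Moreover, since by \eqref{eq:K^sigma=} the operator $H_\sigma$ is exactly the $\mathrm{c}=1$ expression, the bound $K_\sigma\otimes\mathds{1}+\mathds{1}\otimes(\tfrac12 P_f^2+\tfrac12 H_f)-\nabla K_\sigma\otimes P_f\ge E_\sigma$ rearranges to $H_\sigma-\tfrac12\,\mathds{1}\otimes H_f\ge E_\sigma$, which is precisely \eqref{eq:Hf_le_Hsigma}.

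The key algebraic point for the $\mathrm{c}=1/2$ bound is an exact Taylor identity in $P$: because $H_\sigma(P)=\tfrac12(\nabla H_\sigma(P))^2+H_f$ with $\nabla H_\sigma(P)=P-P_f-\alpha^{1/2}A_\sigma=\partial_P H_\sigma(P)$ is affine in $P$, replacing $P$ by $P-k$ gives $H_\sigma(P-k)=H_\sigma(P)-k\cdot\nabla H_\sigma(P)+\tfrac12|k|^2$ for every $k\in\mathbb{R}^3$; restricting to the invariant subspace $\mathcal{F}_\sigma$ yields $K_\sigma(P-k)=K_\sigma(P)-k\cdot\nabla K_\sigma(P)+\tfrac12|k|^2$. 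Now decompose $\mathcal{F}^\sigma$ as a direct integral over the joint spectrum of the commuting photon-momentum operators $P_f$ acting on $\mathcal{F}^\sigma$. Since $K_\sigma(P)$ and $\nabla K_\sigma(P)$ act on the tensor factor $\mathcal{F}_\sigma$ and commute with $P_f$ on $\mathcal{F}^\sigma$, the operator $K_\sigma\otimes\mathds{1}-\nabla K_\sigma\otimes P_f+\tfrac12\,\mathds{1}\otimes P_f^2$ acts on the fiber over $k$ as $K_\sigma(P)-k\cdot\nabla K_\sigma(P)+\tfrac12|k|^2=K_\sigma(P-k)$, so the full $\mathrm{c}=1/2$ operator restricts there to $K_\sigma(P-k)+\tfrac12\,H_f|_{\mathcal{F}^\sigma_k}$.

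On this fiber I would then apply two lower bounds. First, $K_\sigma(P-k)\ge\inf\sigma\big(K_\sigma(P-k)\big)\ge\inf\sigma\big(H_\sigma(P-k)\big)=E_\sigma(P-k)$ (the restriction of a self-adjoint operator to an invariant subspace cannot lower the infimum of the spectrum), followed by Proposition \ref{prop:Esigma(P)}(4), $E_\sigma(P-k)\ge E_\sigma(P)-\tfrac13|k|$. Second, the remaining term restricts to $\tfrac12\,H_f|_{\mathcal{F}^\sigma_k}\ge\tfrac12|k|$, using the elementary operator inequality $H_f\ge|P_f|$ (pointwise $\sum_i|k_i|\ge|\sum_i k_i|$ on each $n$-photon sector), evaluated on the fiber. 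Adding the two,
\[
K_\sigma(P-k)+\tfrac12\,H_f|_{\mathcal{F}^\sigma_k}\ \ge\ E_\sigma(P)-\tfrac13|k|+\tfrac12|k|\ =\ E_\sigma(P)+\tfrac16|k|\ \ge\ E_\sigma ,
\]
and integrating over $k$ gives \eqref{eq:lm_Hsigmac} for $\mathrm{c}=1/2$. I do not anticipate a serious obstacle; the only points demanding care are the bookkeeping for the direct-integral (equivalently, joint functional-calculus) decomposition of $\mathcal{F}^\sigma$ with respect to $P_f$ — verifying that all operators are affiliated with the correct tensor factor and that the form domains match — and noting that Proposition \ref{prop:Esigma(P)}(4) is invoked for arbitrary shifts $k$, with no smallness restriction on $|P-k|$.
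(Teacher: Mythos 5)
Your proof is correct, but it takes a genuinely different route from the paper's. The paper proves \eqref{eq:lm_Hsigmac} by an induction on the number of soft photons in $\mathcal{F}^\sigma$: it pulls one photon of momentum $k$ through via the pull-through formula, which shifts $P\mapsto P-k$ and adds $\mathrm{c}|k|$, and then closes the induction with the Young-type inequality $k\cdot\nabla H_\sigma\le\tfrac14|k|+|k|(\nabla H_\sigma)^2$ combined with $(\nabla H_\sigma)^2\le 2H_{\sigma,\mathrm{c}}$ and the smallness $E_\sigma\le1/100$. You instead reduce to $\mathrm{c}=1/2$ by monotonicity, decompose over the joint spectrum of $\mathds{1}\otimes P_f$ (equivalently, fiber over the total soft-photon momentum $k$ in one step rather than photon-by-photon), identify the fiber operator as $K_\sigma(P-k)+\tfrac12 H_f$ via the exact quadratic Taylor identity, and then invoke Proposition \ref{prop:Esigma(P)}(4) together with $H_f\ge|P_f|$. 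Your argument is more conceptual and avoids the induction, but it pushes the analytic content into Proposition \ref{prop:Esigma(P)}(4), which is an imported result, whereas the paper's inductive proof is self-contained within the lemma and does not rely on that item. Both approaches are sound; one small point worth stating explicitly in your write-up is that $\mathcal{F}_\sigma\cong\mathcal{F}_\sigma\otimes\Omega^\sigma$ is in fact a reducing (not merely invariant) subspace for $H_\sigma(P-k)$, so that $\sigma(K_\sigma(P-k))\subset\sigma(H_\sigma(P-k))$ and hence $K_\sigma(P-k)\ge E_\sigma(P-k)$.
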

\begin{proof}
To simplify notations, we set
\begin{equation}\label{eq:Hsigmac}
H_{\sigma,\mathrm{c}} = H_{\sigma,\mathrm{c}}(P) = K_\sigma \otimes \mathds{1} + \mathds{1} \otimes \big ( \frac{1}{2} P_f^2 + \mathrm{c} H_f \big )- \nabla K_\sigma \otimes P_f.
\end{equation}
Note that
\begin{align}
H_{\sigma,\mathrm{c}} &= \frac{1}{2} \big ( P - P_f - \alpha^{\frac{1}{2}} A_\sigma \big ) ^2 + H_f \otimes \mathds{1} + \mathrm{c} \, \mathds{1} \otimes H_f \notag \\
& = \frac{1}{2} \big ( \nabla H_\sigma \big )^2 + H_f \otimes \mathds{1} + \mathrm{c} \, \mathds{1} \otimes H_f. \label{eq:Hsigmac_1}
\end{align}
Let $\Phi \in D( H_{\sigma,\mathrm{c}})$, $\| \Phi \|=1$. We propose to show that
\begin{equation}\label{eq:lm_Feshbach1}
( \Phi , H_{\sigma,\mathrm{c}} \Phi ) \ge E_\sigma.
\end{equation}
Since the number operator $N^\sigma = \sum_{\lambda=1,2} \int_{|k|\le\sigma} a^*_\lambda(k) a_\lambda( k) \d k$ commutes with $H_{\sigma,\mathrm{c}}$, in order to prove \eqref{eq:lm_Feshbach1}, it suffices to consider $\Phi \in D( H_{\sigma,\mathrm{c}} )$ of the form $\Phi = \Phi_1 \otimes \Phi_2$ where $\Phi_1 \in \mathcal F_\sigma $ and $\Phi_2$ is an eigenstate of $N^\sigma |_{\mathcal{F}^\sigma}$. Let us prove the following assertion by induction:
\begin{itemize}
\item[$\mathbf{(h_n)}$] For all $\Phi = \Phi_1 \otimes \Phi_2 \in D( H_{\sigma,\mathrm{c}} )$ such that $\| \Phi_1 \| = \| \Phi_2 \| = 1$ and $N^\sigma \Phi_2 = n \Phi_2$,  \eqref{eq:lm_Feshbach1} holds.
\end{itemize}
Since $H_{\sigma,\mathrm{c}} ( \Phi_1 \otimes \Omega ) = ( K_\sigma \Phi_1 ) \otimes \Omega$ and since $E_\sigma = \inf \sigma ( K_\sigma )$ (see Proposition \ref{prop:Esigma(P)}), $\mathbf{(h_0)}$ is obviously satisfied.
Assume that $\mathbf{(h_n)}$ holds and let $\Phi = \Phi_1 \otimes \Phi_2 \in D( H_{\sigma,c} )$ with $\| \Phi_1 \| = \| \Phi_2 \| = 1$ and $N^\sigma \Phi_2 = (n+1) \Phi_2$. Let us write
\begin{equation}
\Phi_2 \big ( (k,\lambda),(k_1,\lambda_1),\dots,(k_n,\lambda_n) \big ) = \Phi_2 ( k,\lambda )\big ( (k_1,\lambda_1),\dots,(k_n,\lambda_n) \big ).
\end{equation}
One can compute
\begin{align}
( \Phi , H_{\sigma,\mathrm{c}} \Phi ) = \sum_{\lambda=1,2} \int_{ |k| \le \sigma } \big ( & \Phi_1 \otimes \Phi_2 ( k,\lambda ) , \notag \\
&( H_{\sigma,\mathrm{c}} ( P - k ) + \mathrm{c} |k| )  \Phi_1 \otimes \Phi_2 ( k,\lambda ) \big ) \d k.
\end{align}
Next, it follows from \eqref{eq:Hsigmac_1} that
\begin{align}
& \big ( \Phi_1 \otimes \Phi_2 ( k,\lambda ) , ( H_{\sigma,\mathrm{c}} ( P - k ) + \mathrm{c} |k| ) \Phi_1 \otimes \Phi_2 ( k,\lambda ) \big ) \notag \\
& = \Big ( \Phi_1 \otimes \Phi_2 ( k,\lambda ) , \Big ( H_{\sigma,\mathrm{c}} - k \cdot \nabla H_\sigma + \frac{k^2}{2} + \mathrm{c} |k| \Big ) \Phi_1 \otimes \Phi_2 ( k,\lambda ) \Big ).
\end{align}
Using that $k\cdot \nabla H_\sigma \le |k| / 4 + |k| ( \nabla H_\sigma )^2$ and that $ ( \nabla H_\sigma )^2 \le 2 H_{\sigma,\mathrm{c}}$, we obtain that
\begin{align}
& \big ( \Phi_1 \otimes \Phi_2 ( k,\lambda ) , ( H_{\sigma,\mathrm{c}} ( P - k ) + \mathrm{c} |k| ) \Phi_1 \otimes \Phi_2 ( k,\lambda ) \big ) \notag \\
& \ge \Big ( \Phi_1 \otimes \Phi_2 ( k,\lambda ) , \Big ( H_{\sigma,\mathrm{c}} - |k| ( \nabla H_\sigma )^2 + \frac{k^2}{2} + ( \mathrm{c} - \frac{1}{4} ) |k| \Big ) \Phi_1 \otimes \Phi_2 ( k,\lambda ) \Big ) \notag \\
& \ge \Big ( \Phi_1 \otimes \Phi_2 ( k,\lambda ) , \Big ( ( 1 - 2 |k| ) H_{\sigma,\mathrm{c}} + ( \mathrm{c} - \frac{1}{4} ) |k| \Big ) \Phi_1 \otimes \Phi_2 ( k,\lambda ) \Big ).
\end{align}
Since by the induction hypothesis $( \Phi_1 \otimes \Phi_2(k,\lambda) , H_{\sigma,\mathrm{c}} \Phi_1 \otimes \Phi_2(k,\lambda) ) \ge E_\sigma \| \Phi_2(k,\lambda) \|^2 $, this implies
\begin{align}
& \big ( \Phi_1 \otimes \Phi_2 ( k,\lambda ) , ( H_{\sigma,\mathrm{c}}( P - k ) + |k| ) \Phi_1 \otimes \Phi_2 ( k,\lambda ) \big ) \notag \\
& \ge \Big ( ( 1 - 2 |k| ) E_\sigma + (\mathrm{c} - \frac{ 1}{4}) |k| \Big ) \| \Phi_2 ( k,\lambda ) \|^2 \notag \\
& \ge \Big ( E_\sigma + |k| ( \mathrm{c} - \frac{1}{4} - 2 E_\sigma ) \Big ) \| \Phi_2(k,\lambda) \|^2.
\end{align}
By
Rayleigh-Ritz (see Proposition \ref{prop:Esigma(P)}),
\begin{equation}\label{eq:Esigma_small}
E_\sigma \le \frac{1}{2} P^2 + \mathrm{C} \alpha
\le \frac{1}{100}
\end{equation}
for $\alpha$ sufficiently small and $|P| \le 1/40$, so that, in particular, $\mathrm{c} - 1/4 - 2 E_\sigma \ge 0$; (recall that $\mathrm{c} \ge 1/2$). Therefore
$\mathbf{(h_{n+1})}$ holds, and hence \eqref{eq:lm_Feshbach1} is proven.

To prove \eqref{eq:Hf_le_Hsigma}, it suffices to write, using \eqref{eq:lm_Hsigmac} with $\mathrm{c}=1/2$,
\begin{align}
H_\sigma &= K_\sigma \otimes \mathds{1} + \mathds{1} \otimes \big ( \frac{1}{2} P_f^2 + \frac{1}{2}  H_f \big ) - \nabla K_\sigma \otimes P_f + \frac{1}{2} \big ( \mathds{1} \otimes H_f \big ) \notag \\
& \ge E_\sigma + \frac{1}{2} \big ( \mathds{1} \otimes H_f \big ).
\end{align}
\end{proof}
\begin{lemma}\label{lm:appendix1}
Let $0 < \delta < 1$. Then
\begin{equation}\label{eq:lm_appendix1}
H_\sigma ( \mathds{1} \otimes \mathds{1}_{H_f \le \delta} ) \ge ( 1 - \delta ) \big ( K_\sigma \otimes \mathds{1} \big ) ( \mathds{1} \otimes \mathds{1}_{H_f \le \delta} ).
\end{equation}
\end{lemma}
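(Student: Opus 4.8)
The plan is to work directly from the tensor-product decomposition \eqref{eq:K^sigma=},
\[
H_\sigma = K_\sigma\otimes\mathds{1} + \mathds{1}\otimes\big(\tfrac12 P_f^2 + H_f\big) - \nabla K_\sigma\otimes P_f,
\]
and to absorb the off-diagonal term $-\nabla K_\sigma\otimes P_f$ at the cost of a factor $\delta$ in front of $K_\sigma$ together with a piece of the photon energy $\mathds{1}\otimes H_f$. Equivalently, I would show that $H_\sigma - (1-\delta)(K_\sigma\otimes\mathds{1}) = \delta\,K_\sigma\otimes\mathds{1} + \mathds{1}\otimes\big(\tfrac12 P_f^2 + H_f\big) - \nabla K_\sigma\otimes P_f$ is nonnegative once restricted to $\mathrm{Ran}(\mathds{1}\otimes\mathds{1}_{H_f\le\delta})$.

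First I would apply an elementary Cauchy--Schwarz estimate across the two tensor factors: since $(\nabla K_\sigma)_j$ acts on $\mathcal{F}_\sigma$, $(P_f)_j$ acts on $\mathcal{F}^\sigma$, and the two commute, expanding $\big(\sqrt\delta\,(\nabla K_\sigma)_j\otimes\mathds{1} + \tfrac{1}{\sqrt\delta}\,\mathds{1}\otimes(P_f)_j\big)^2\ge 0$ and summing over $j=1,2,3$ yields
\[
-\nabla K_\sigma\otimes P_f \ \ge\ -\frac{\delta}{2}\,(\nabla K_\sigma)^2\otimes\mathds{1} - \frac{1}{2\delta}\,\mathds{1}\otimes P_f^2 .
\]
Next, using $K_\sigma = \tfrac12(\nabla K_\sigma)^2 + H_f$ on $\mathcal{F}_\sigma$ and $H_f\ge 0$, one has $(\nabla K_\sigma)^2\le 2K_\sigma$, so $-\tfrac{\delta}{2}(\nabla K_\sigma)^2\otimes\mathds{1}\ge -\delta\,K_\sigma\otimes\mathds{1}$; substituting, the $\delta\,K_\sigma\otimes\mathds{1}$ terms cancel and one is left with
\[
H_\sigma - (1-\delta)(K_\sigma\otimes\mathds{1})\ \ge\ \mathds{1}\otimes H_f + \tfrac12\big(1 - \tfrac{1}{\delta}\big)\,\mathds{1}\otimes P_f^2 .
\]

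Finally I would restrict to $\mathrm{Ran}(\mathds{1}\otimes\mathds{1}_{H_f\le\delta})$. On $\mathcal{F}^\sigma$ one has $P_f^2\le H_f^2$ (in each $n$-photon sector this is the pointwise inequality $|\sum_i k^{(i)}|^2\le(\sum_i|k^{(i)}|)^2$), and on $\mathrm{Ran}(\mathds{1}_{H_f\le\delta})$ one has $H_f^2\le\delta H_f$; hence $\mathds{1}\otimes P_f^2\le\delta\,\mathds{1}\otimes H_f$ there. Since $\tfrac12(1-\tfrac{1}{\delta})\le 0$ for $0<\delta<1$, this gives, on that subspace,
\[
H_\sigma - (1-\delta)(K_\sigma\otimes\mathds{1})\ \ge\ \mathds{1}\otimes H_f + \tfrac{\delta-1}{2}\,\mathds{1}\otimes H_f\ =\ \tfrac{1+\delta}{2}\,\mathds{1}\otimes H_f\ \ge\ 0 ,
\]
which is \eqref{eq:lm_appendix1}, understood (as it is used in the proof of Lemma~\ref{lm:f(Hsigma)}) as the quadratic-form inequality $\langle\Phi,H_\sigma\Phi\rangle\ge(1-\delta)\langle\Phi,(K_\sigma\otimes\mathds{1})\Phi\rangle$ for $\Phi\in\mathrm{Ran}(\mathds{1}\otimes\mathds{1}_{H_f\le\delta})$. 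I do not expect a genuine obstacle here: the only points needing a modicum of care are running the operator inequalities on a common form core for $H_\sigma$ (e.g.\ finite-particle vectors with smooth compactly supported components), and recording the elementary bound $P_f^2\le H_f^2$ on $\mathcal{F}^\sigma$.
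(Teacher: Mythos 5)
Your proof is correct, and it is mildly but genuinely different in packaging from the paper's. The paper first reduces, using the commutation of $H_\sigma$, $K_\sigma\otimes\mathds{1}$ with $N^\sigma$ and $\mathds{1}\otimes\mathds{1}_{H_f\le\delta}$, to simple tensor states $\Phi=\Phi_1\otimes\Phi_2$, writes out $\langle\Phi,H_\sigma\Phi\rangle$ explicitly, and then applies the scalar Cauchy--Schwarz/Young bound $|\langle\Phi_1,\nabla K_\sigma\Phi_1\rangle\,\langle\Phi_2,P_f\Phi_2\rangle|\le \tfrac12\langle\Phi_1,(\nabla K_\sigma)^2\Phi_1\rangle\langle\Phi_2,H_f\Phi_2\rangle+\tfrac12\langle\Phi_2,H_f\Phi_2\rangle$ together with $(\nabla K_\sigma)^2\le 2K_\sigma$ and $\langle\Phi_2,H_f\Phi_2\rangle\le\delta$. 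You instead run the weighted Cauchy--Schwarz at the operator level, $-\nabla K_\sigma\otimes P_f\ge-\tfrac{\delta}{2}(\nabla K_\sigma)^2\otimes\mathds{1}-\tfrac{1}{2\delta}\mathds{1}\otimes P_f^2$, still use $(\nabla K_\sigma)^2\le 2K_\sigma$, and replace the scalar bound on $\langle\Phi_2,H_f\Phi_2\rangle$ by the operator inequality $P_f^2\le H_f^2\le\delta H_f$ on $\mathrm{Ran}(\mathds{1}_{H_f\le\delta})$. The net inequalities used are the same, but your version stays entirely within operator inequalities and therefore does not need the reduction to simple tensors (which, because of the cross term $-\nabla K_\sigma\otimes P_f$, is the one step in the paper's argument that requires some care to justify rigorously, e.g.\ via a direct-integral decomposition in $P_f^\sigma$). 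Both approaches discard a leftover positive $\mathds{1}\otimes H_f$ contribution at the end; your computation even records it explicitly as $\tfrac{1+\delta}{2}\,\mathds{1}\otimes H_f\ge0$. The only minor point worth flagging is that the quadratic-form inequalities should be run on a core for $H_\sigma$, as you already note, and that $[\mathds{1}\otimes\mathds{1}_{H_f\le\delta},H_\sigma]=0$ so that restricting to that subspace is unambiguous; neither is an obstacle.
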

\begin{proof}
Note that $\mathds{1} \otimes \mathds{1}_{H_f \le \delta }$ commutes both with $H_\sigma$ and $K_\sigma \otimes \mathds{1}$. In addition, since the number operator $N^\sigma$ also commutes with $H_\sigma$ and $K_\sigma \otimes \mathds{1}$, it suffices to prove \eqref{eq:lm_appendix1} on states $\Phi \in D(H_\sigma)$ of the form $\Phi = \Phi_1 \otimes \Phi_2$ with $\| \Phi_1 \| = \| \Phi_2 \| = 1$, $\Phi_1 \in D(K_\sigma)$, and $\Phi_2 \in \mathrm{Ran}( \mathds{1}_{H_f \le \delta})$ is an eigenstate of $N^\sigma |_{\mathcal{F}^\sigma}$. For such a vector $\Phi$, we have
\begin{align}
\big ( \Phi , H_\sigma \Phi \big ) &= \big ( \Phi_1 , K_\sigma \Phi_1 \big ) + \big ( \Phi_2 , ( \frac{1}{2} P_f^2 + H_f ) \Phi_2 \big ) \notag \\
&\quad - \big ( \Phi_1 , \nabla K_\sigma \Phi_1 \big ) \big ( \Phi_2 , P_f \Phi_2 \big ). \label{eq:lm_appendix1_2}
\end{align}
One can check that
\begin{align}
&\big | \big ( \Phi_1 , \nabla K_\sigma \Phi_1 \big ) \big | \le \big ( \Phi_1 , ( \nabla K_\sigma )^2 \Phi_1 \big )^{1/2}, \\
&\big | \big ( \Phi_2 , P_f \Phi_2 \big ) \big | \le \big ( \Phi_2 , H_f \Phi_2 \big ),
\end{align}
and hence
\begin{align}
\big ( \Phi_1 , \nabla K_\sigma \Phi_1 \big ) \big ( \Phi_2 , P_f \Phi_2 \big ) \le \frac{1}{2} \big ( \Phi_1 , ( \nabla K_\sigma )^2 \Phi_1 \big ) \big ( \Phi_2 , H_f \Phi_2 \big ) + \frac{1}{2} \big ( \Phi_2 , H_f \Phi_2 \big ).
\end{align}
Inserting this into \eqref{eq:lm_appendix1_2} and using that $( \nabla K_\sigma )^2 \le 2 K_\sigma$, we obtain
\begin{align}
\big ( \Phi , H_\sigma \Phi \big ) &\ge \big ( \Phi_1 , K_\sigma \Phi_1 \big ) + \frac{1}{2} \big ( \Phi_2 , H_f \Phi_2 \big ) - \frac{1}{2} \big ( \Phi_1 , (\nabla K_\sigma )^2 \Phi_1 \big ) \big ( \Phi_2 , H_f \Phi_2 \big ) \notag \\
&\ge \big ( \Phi_1 , K_\sigma \Phi_1 \big ) - \delta \big ( \Phi_1 , \frac{1}{2} (\nabla K_\sigma )^2 \Phi_1 \big ) \notag \\
&\ge ( 1 - \delta ) \big ( \Phi_1 , K_\sigma \Phi_1 \big ), \phantom{ \frac{1}{2} }
\end{align}
which concludes the proof.
\end{proof}
%
%

\section{The smooth Feshbach-Schur map}\label{appendix:Feshbach}

In this appendix we recall the definition and some of the main properties of the smooth Feshbach-Schur map introduced in \cite{BCFS}. The version we present uses aspects developed in \cite{GH} and \cite{FGS3}.

Let $\mathcal{H}$ be a separable Hilbert space. Let $\chi$,  $\bar\chi$ be nonzero bounded operators on $\mathcal{H}$, such that $[ \chi , \bar \chi ] = 0$ and $\chi^{2}+ \bar\chi^{2} = 1$. Let $H$ and $T$ be two closed operators on $\mathcal{H}$ such that $D(H)=D(T)$. Define $W = H - T$ on $D(T)$ and
\begin{align}
H_\chi = T + \chi W \chi, \quad H_{\bar\chi} = T + \bar\chi W \bar\chi
\end{align}
We make the following hypotheses:
\begin{itemize}
\item[(1)] $\chi T \subset T \chi$ and $\bar\chi T \subset T \bar\chi$.
\item[(2)] $T,H_{\bar\chi}: D(T) \cap \mathrm{Ran}(\bar\chi) \to \mathrm{Ran}(\bar\chi)$ are bijections with bounded inverses.
\item[(3)] $W \chi$ and $\chi W$ extend to bounded operators on $\mathcal{H}$.
\end{itemize}
Given the above assumptions, the (smooth) Feshbach-Schur map $F_\chi(H)$ is defined by
\begin{equation}
F_\chi(H) = H_\chi - \chi W \bar\chi H_{\bar\chi}^{-1} \bar\chi W \chi.
\end{equation}
Note that $F_\chi(H)$ is well-defined on $D(T)$. If Hypotheses (1),(2),(3) above are satisfied, we say that $H$ is in the domain of $F_\chi$. In addition, we consider the two auxiliary bounded operators $Q_\chi(H)$ and $Q_\chi^{\#}(H)$ defined by
\begin{equation}
Q_\chi(H) = \chi - \bar\chi H_{\bar\chi}^{-1} \bar\chi W \chi, \quad Q_\chi^{\#}(H) = \chi - \chi W \bar\chi H_{\bar\chi}^{-1} \bar\chi.
\end{equation}
It follows from \cite{BCFS,GH,FGS3} that the smooth Feshbach-Schur map $F_\chi$ is isospectral in the following sense:
\begin{theorem}\label{thm:prop_Feshbach}
Let $H,T,\chi,\bar\chi$ be as above. Then the following holds:
\begin{itemize}
\item[(i)] Let $V$ be a subspace such that $\mathrm{Ran} \chi \subset V \subset \mathcal{H}$, $T:D(T)\cap V \to V$ and $\bar\chi T^{-1} \bar\chi V \subset V$. Then $H : D(T) \to \mathcal{H}$ is bounded invertible if and only if $F_\chi(H) : D(T) \cap V \to V$ is bounded invertible, and we have
\begin{align}
&H^{-1} = Q_\chi(H) F_\chi(H)^{-1} Q_\chi^{\#}(H) + \bar\chi H_{\bar\chi}^{-1} \bar\chi, \label{eq:H^-1=}\\
&F_\chi(H)^{-1} = \chi H^{-1} \chi + \bar \chi T^{-1} \bar\chi. \label{eq:F^-1=}
\end{align}
\item[(ii)] If $\phi \in \mathcal{H} \setminus \{0\}$ solves $H \phi = 0$ then $\psi := \chi \phi \in \mathrm{Ran} \chi \setminus \{0\}$ solves $F_{\chi} (H) \, \psi = 0$.
\item[(iii)] If $\psi \in \mathrm{Ran}\, \chi \setminus \{0\}$ solves $F_{\chi} (H) \, \psi = 0$ then $\phi := Q_{\chi}(H) \psi \in \mathcal{H} \setminus \{0\}$ solves $H \phi = 0$.
\item[(iv)] The multiplicity of the spectral value $\{0\}$ is conserved in the sense that 
\begin{equation}
\dim \mathrm{Ker} H = \dim \mathrm{Ker} F_{\chi} (H).
\end{equation}
\end{itemize}
\end{theorem}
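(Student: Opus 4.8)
The plan is to reduce all four assertions to a short list of purely algebraic identities relating $H$, $T$, $F_\chi(H)$, $Q_\chi(H)$ and $Q_\chi^\#(H)$, and to prove these identities by direct computation from Hypotheses~(1)--(3) alone. The recurring tools would be: the commutations $\chi T\subset T\chi$ and $\bar\chi T\subset T\bar\chi$, so that $\chi,\bar\chi$ pass freely through $T$ on $D(T)$; the partition of unity $\chi^2+\bar\chi^2=\mathds{1}$ together with $[\chi,\bar\chi]=0$; the boundedness of $\chi W$ and $W\chi$ from Hypothesis~(3); and the elementary resolvent identity $T\,H_{\bar\chi}^{-1}\bar\chi=\bar\chi-\bar\chi W\bar\chi\,H_{\bar\chi}^{-1}\bar\chi$ on $\mathrm{Ran}(\bar\chi)$, which is immediate from $H_{\bar\chi}=T+\bar\chi W\bar\chi$ and the bounded invertibility of $H_{\bar\chi}$ on $\mathrm{Ran}(\bar\chi)$ in Hypothesis~(2). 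I would emphasize at the outset that $\chi$ and $\bar\chi$ are general bounded operators, not complementary projections, so that powers $\chi^2,\bar\chi^2$ must be carried along; this non-idempotence is exactly what forces the auxiliary subspace $V$ to appear in part~(i).

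First I would establish, by expanding the definitions and collecting terms via the resolvent identity and $\chi^2+\bar\chi^2=\mathds{1}$, the intertwining identities
\begin{equation*}
H\,Q_\chi(H)=\chi\,F_\chi(H)\quad\text{and}\quad Q_\chi^\#(H)\,H=F_\chi(H)\,\chi\quad\text{on }D(T),
\end{equation*}
together with the ``resolution'' identities
\begin{equation*}
Q_\chi(H)\,\chi+\bar\chi\,H_{\bar\chi}^{-1}\bar\chi\,H=\mathds{1}\ \text{on }D(T),\qquad
\chi\,Q_\chi^\#(H)+H\,\bar\chi\,H_{\bar\chi}^{-1}\bar\chi=\mathds{1}\ \text{on }\mathcal{H}.
\end{equation*}
Parts~(ii) and~(iii) then follow. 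If $H\phi=0$, the second intertwining identity gives $F_\chi(H)\,\chi\phi=0$, and the first resolution identity gives $Q_\chi(H)\,\chi\phi=\phi$, so in particular $\chi\phi\neq0$; this proves~(ii). Symmetrically, if $F_\chi(H)\,\psi=0$ with $\psi\in\mathrm{Ran}(\chi)$, the first intertwining identity gives $H\,Q_\chi(H)\,\psi=0$, and a companion computation using $F_\chi(H)\,\psi=0$ shows $Q_\chi(H)\,\psi\neq0$, which proves~(iii). For~(iv), these same relations exhibit $\phi\mapsto\chi\phi$ and $\psi\mapsto Q_\chi(H)\,\psi$ as mutually inverse linear bijections between $\mathrm{Ker}\,H$ and $\mathrm{Ker}\,F_\chi(H)$, whence $\dim\mathrm{Ker}\,H=\dim\mathrm{Ker}\,F_\chi(H)$.

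It remains to prove~(i), which carries the analytic content. Suppose first that $F_\chi(H):D(T)\cap V\to V$ is bounded invertible, and set $R:=Q_\chi(H)\,F_\chi(H)^{-1}\,Q_\chi^\#(H)+\bar\chi\,H_{\bar\chi}^{-1}\bar\chi$, a bounded operator by Hypotheses~(2),(3) and the assumption; here the hypotheses $\mathrm{Ran}(\chi)\subset V$, $T(D(T)\cap V)\subset V$ and $\bar\chi T^{-1}\bar\chi V\subset V$ guarantee that $Q_\chi^\#(H)$ maps $\mathcal{H}$ into $V$ and that $F_\chi(H)^{-1}$ is applied only to vectors of $V$. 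Using $H\,Q_\chi(H)=\chi\,F_\chi(H)$ one gets $HR=\chi\,Q_\chi^\#(H)+H\,\bar\chi\,H_{\bar\chi}^{-1}\bar\chi=\mathds{1}$ on $\mathcal{H}$ by the second resolution identity, and using $Q_\chi^\#(H)\,H=F_\chi(H)\,\chi$ one gets $RH=Q_\chi(H)\,\chi+\bar\chi\,H_{\bar\chi}^{-1}\bar\chi\,H=\mathds{1}$ on $D(T)$ by the first; hence $H$ is bounded invertible with $H^{-1}=R$, which is \eqref{eq:H^-1=}. Conversely, if $H:D(T)\to\mathcal{H}$ is bounded invertible, set $S:=\chi H^{-1}\chi+\bar\chi T^{-1}\bar\chi$ on $V$, bounded by Hypothesis~(2), and verify $S\,F_\chi(H)=\mathds{1}$ on $D(T)\cap V$ and $F_\chi(H)\,S=\mathds{1}$ on $V$ by the same intertwining identities together with $T\,T^{-1}\bar\chi=\bar\chi$ on $\mathrm{Ran}(\bar\chi)$; uniqueness of inverses gives $F_\chi(H)^{-1}=S$, which is \eqref{eq:F^-1=}. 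The one genuinely delicate point --- and the part I expect to occupy most of the write-up --- is the verification of the algebraic identities above, and their use in~(i), while simultaneously keeping track of the domain $D(T)$, the unboundedness of $T$ and $H_{\bar\chi}$, the merely \emph{restricted} invertibility in Hypothesis~(2), and the non-idempotence of $\chi,\bar\chi$; once these identities are in hand, the rest is mechanical.
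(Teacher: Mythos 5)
The paper does not actually prove Theorem~\ref{thm:prop_Feshbach}; it states it and cites \cite{BCFS,GH,FGS3}. Your proposal reconstructs precisely the strategy of those references: derive the intertwining identities $H\,Q_\chi(H)=\chi\,F_\chi(H)$ and $Q_\chi^{\#}(H)\,H=F_\chi(H)\,\chi$ and the two resolution identities from $\chi^2+\bar\chi^2=\mathds{1}$ and the elementary resolvent relation for $H_{\bar\chi}$, then feed them into (i)--(iv). Your verification of (i) and (ii) on this basis is correct, including the domain bookkeeping with $V$.

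The one genuine soft spot is in (iii) and (iv), which you treat as ``symmetric'' to (ii). They are not: for (ii) the non-vanishing of $\chi\phi$ falls out of the resolution identity $Q_\chi(H)\chi+\bar\chi H_{\bar\chi}^{-1}\bar\chi H=\mathds{1}$, but the ``companion computation'' you invoke in (iii) to show $Q_\chi(H)\psi\neq0$ is \emph{not} a mirror image of this; it requires Hypothesis~(2) in the form of injectivity of $T$ on $D(T)\cap\mathrm{Ran}(\bar\chi)$. Concretely, if $Q_\chi(H)\psi=0$ and $F_\chi(H)\psi=0$, put $\eta:=H_{\bar\chi}^{-1}\bar\chi W\chi\psi\in D(T)\cap\mathrm{Ran}(\bar\chi)$, so $\chi\psi=\bar\chi\eta$; then $H_{\bar\chi}\eta=\bar\chi W\chi\psi$ reduces (using $\bar\chi W\bar\chi\eta=\bar\chi W\chi\psi$) to $T\eta=0$, hence $\eta=0$ and $\chi\psi=0$; but then $F_\chi(H)\psi=T\psi=0$ with $\psi=\bar\chi^2\psi\in D(T)\cap\mathrm{Ran}(\bar\chi)$, forcing $\psi=0$. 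The ``mutually inverse'' claim in your (iv) likewise needs the companion identity $\chi Q_\chi(H)\psi=\psi$ on $\mathrm{Ker}(F_\chi(H))\cap\mathrm{Ran}(\chi)$, which you do not state; it follows by the same device, observing that $T\bigl(\bar\chi\psi+\chi H_{\bar\chi}^{-1}\bar\chi W\chi\psi\bigr)=0$ when $F_\chi(H)\psi=0$, so this vector in $D(T)\cap\mathrm{Ran}(\bar\chi)$ vanishes and therefore $\psi-\chi Q_\chi(H)\psi=\bar\chi\bigl(\bar\chi\psi+\chi H_{\bar\chi}^{-1}\bar\chi W\chi\psi\bigr)=0$. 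With those two supplements your argument is complete and coincides with the one in the cited literature.
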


Next, we recall a result given in \cite{FGS3} showing that a LAP for $H$ can be deduced from a corresponding LAP for $F_\chi(H-\lambda)$, for suitably chosen $\lambda$'s. Notice that, in \cite{FGS3}, $F_\chi( H - \lambda )$ is considered as an operator on $\mathcal{H}$, whereas its restriction to some closed subspace $V$ is considered here. However, the the following theorem can be proven is the same way. For the convenience of the reader, we recall the proof.

\begin{theorem} \label{thm:LAPtransfer}
Let $H,T,\chi,\bar\chi$ be as above. Let $\Delta$ be an open interval in $\mathbb{R}$. Let $V$ be a closed subspace of $\mathcal{H}$ satisfying the assumptions of Theorem \ref{thm:prop_Feshbach}$(i)$. Let $B$ a self-adjoint operator on $\mathcal{H}$ such that $B:D(B) \cap V \to V$ and $[B \pm \i]^{-1} V \subset V$. Assume that $\forall \lambda \in \Delta$,
\begin{equation}
[ A_\lambda , B ] \ \mbox{extends to a bounded operator}, \label{eqn:5}
\end{equation}
where $A_\lambda$ stands for one of the operators $A_\lambda = \chi,\ \overline{\chi},\ \chi W,\ W\chi,\  \bar\chi [ H_{\bar\chi} - \lambda ]^{-1} \bar\chi$.
If $H - \lambda$ is in the domain of $F_{\chi}$, then for any $\nu\ge 0$ and $0 < s \le 1$,
\begin{align}
& \lambda \mapsto \langle B \rangle^{-s} (F_\chi ( H - \lambda ) - \i 0 )^{-1} \langle B \rangle^{-s} \in C^\nu(\Delta ; \mathcal{B}( V ) ) \notag \\
& \text{implies that}\quad \lambda \mapsto \langle B \rangle^{-s} ( H - \lambda - \i 0 )^{-1} \langle B \rangle^{-s} \in C^\nu(\Delta ; \mathcal{B} ( \mathcal{H} ) ). \label{eqn:7}
\end{align}
\end{theorem}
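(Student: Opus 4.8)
The plan is to read off the LAP for $H-\lambda$ from the isospectral identity \eqref{eq:H^-1=}, applied to the operator $H-\lambda-\i\varepsilon$ with $\varepsilon>0$, and then to pass to the limit $\varepsilon\to 0^+$. For $\varepsilon\neq0$ the operator $H-\lambda-\i\varepsilon$ is boundedly invertible, and since passing from $H$ to $H-\lambda-\i\varepsilon$ replaces $T$ by $T-\lambda-\i\varepsilon$ and $H_{\bar\chi}$ by $H_{\bar\chi}-\lambda-\i\varepsilon$ while leaving $W$, $\chi$, $\bar\chi$ unchanged, Hypotheses (1)--(3) are still satisfied; hence Theorem \ref{thm:prop_Feshbach}(i) gives
\[
(H-\lambda-\i\varepsilon)^{-1} = Q_\chi\, F_\chi(H-\lambda-\i\varepsilon)^{-1}\, Q_\chi^{\#} + \bar\chi\,[H_{\bar\chi}-\lambda-\i\varepsilon]^{-1}\,\bar\chi,
\]
with $Q_\chi = Q_\chi(H-\lambda-\i\varepsilon)$ and $Q_\chi^{\#} = Q_\chi^{\#}(H-\lambda-\i\varepsilon)$. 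Multiplying this identity by $\langle B\rangle^{-s}$ on both sides and inserting $\langle B\rangle^{s}\langle B\rangle^{-s}$ on either side of $F_\chi(H-\lambda-\i\varepsilon)^{-1}$, one rewrites the right-hand side as a product of the three ``outer'' factors $\langle B\rangle^{-s}Q_\chi\langle B\rangle^{s}$, $\langle B\rangle^{-s}F_\chi(H-\lambda-\i\varepsilon)^{-1}\langle B\rangle^{-s}$, $\langle B\rangle^{s}Q_\chi^{\#}\langle B\rangle^{-s}$, plus the remainder $\langle B\rangle^{-s}\bar\chi[H_{\bar\chi}-\lambda-\i\varepsilon]^{-1}\bar\chi\langle B\rangle^{-s}$.

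The next step is to show that all these pieces except the middle one are real-analytic, in particular $C^\nu$, in $\lambda$ on $\Delta$, with limits in operator norm as $\varepsilon\to0^+$ that are locally uniform in $\lambda$. Two ingredients enter. First, the standard fact (Helffer--Sj{\"o}strand calculus or interpolation; cf. \cite{ABG,HS}) that if $A$ and $[A,B]$ are bounded, then $\langle B\rangle^{-s}A\langle B\rangle^{s}$ and $\langle B\rangle^{s}A\langle B\rangle^{-s}$ are bounded for $0<s\le1$. Applying this to $A=\chi,\bar\chi,\chi W,W\chi$ and to $A=\bar\chi[H_{\bar\chi}-\lambda]^{-1}\bar\chi$ — all of which have bounded commutators with $B$ by hypothesis \eqref{eqn:5} — and recalling $Q_\chi = \chi - \bar\chi[H_{\bar\chi}-\cdot]^{-1}\bar\chi\,W\chi$ and $Q_\chi^{\#} = \chi - \chi W\,\bar\chi[H_{\bar\chi}-\cdot]^{-1}\bar\chi$, yields the required boundedness of the outer factors and of the remainder. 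Second, by Hypothesis (2) the operator $H_{\bar\chi}-\lambda$ is boundedly invertible on $\mathrm{Ran}\,\bar\chi$ for every $\lambda\in\Delta$ — an open condition — so $z\mapsto\bar\chi[H_{\bar\chi}-z]^{-1}\bar\chi$ is analytic in a complex neighbourhood of $\Delta$; the first resolvent identity then propagates the bound on $[\bar\chi[H_{\bar\chi}-\lambda]^{-1}\bar\chi,B]$ to that neighbourhood, so $Q_\chi$, $Q_\chi^{\#}$ and $\bar\chi[H_{\bar\chi}-\lambda-\i\varepsilon]^{-1}\bar\chi$ extend analytically across $\varepsilon=0$ and depend analytically on $\lambda$ there.

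It then remains to combine this with the middle factor: $\langle B\rangle^{-s}F_\chi(H-\lambda-\i\varepsilon)^{-1}\langle B\rangle^{-s}$ converges, as $\varepsilon\to0^+$, to $\langle B\rangle^{-s}(F_\chi(H-\lambda)-\i0)^{-1}\langle B\rangle^{-s}$, which lies in $C^\nu(\Delta;\mathcal{B}(V))$ by assumption. Since a product of operator-norm $C^\nu$ maps valued in a Banach algebra is again $C^\nu$ (Leibniz rule), the right-hand side of the displayed identity converges in $\mathcal{B}(\mathcal{H})$ as $\varepsilon\to0^+$, and the limit, $\lambda\mapsto\langle B\rangle^{-s}(H-\lambda-\i0)^{-1}\langle B\rangle^{-s}$, is of class $C^\nu$ on $\Delta$; this is \eqref{eqn:7}, and the case of $+\i0$ is identical.

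The main obstacle — rather than routine bookkeeping — is the uniformity in the regularising parameter: one must be sure that, among the factors on the right-hand side of \eqref{eq:H^-1=}, only $F_\chi(H-\lambda-\i\varepsilon)^{-1}$ is singular as $\varepsilon\to0^+$, i.e.\ that $Q_\chi$, $Q_\chi^{\#}$ and $\bar\chi[H_{\bar\chi}-\lambda-\i\varepsilon]^{-1}\bar\chi$ remain analytic in a \emph{fixed} complex neighbourhood of $\Delta$ with locally uniform bounds. This is precisely what Hypothesis (2) provides — and, in the application of the present paper, what Proposition \ref{prop:Hchibar} verifies. The auxiliary mapping properties of $\langle B\rangle^{\pm s}$ invoked above require no new idea.
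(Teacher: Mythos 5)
Your proof is correct and follows essentially the same route as the paper: apply the Feshbach identity \eqref{eq:H^-1=} to $H-\lambda-\i\varepsilon$, isolate the middle factor $\langle B\rangle^{-s}F_\chi(H-\lambda-\i\varepsilon)^{-1}\langle B\rangle^{-s}$, show the outer factors $\langle B\rangle^{\mp s}Q_\chi\langle B\rangle^{\pm s}$, $\langle B\rangle^{\pm s}Q_\chi^{\#}\langle B\rangle^{\mp s}$ and the remainder $\langle B\rangle^{-s}\bar\chi[H_{\bar\chi}-\lambda-\i\varepsilon]^{-1}\bar\chi\langle B\rangle^{-s}$ are bounded and smooth (even analytic) in $\lambda$ across $\varepsilon=0$ using the commutator hypotheses \eqref{eqn:5} and Hypothesis (2), and then pass to the limit. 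The only cosmetic difference is that you invoke Helffer--Sj\"ostrand/interpolation to obtain boundedness of $\langle B\rangle^{-s}A\langle B\rangle^{s}$ from $[A,B]$ bounded, whereas the paper uses the explicit integral representation $\langle B\rangle^{-s}=\mathrm{C}_s\int_0^\infty\omega^{-s/2}(\omega+1+B^2)^{-1}\d\omega$; these are the same tool.
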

\begin{proof}
It follows form Equation \eqref{eq:H^-1=} with $H$ replaced by $H - \lambda - \i \varepsilon$ that
\begin{align}
[ H - \lambda - \i \varepsilon ]^{-1} =& Q_\chi(H-\lambda-\i\varepsilon) F_\chi(H-\lambda-\i\varepsilon)^{-1} Q_\chi^{\#}(H-\lambda-\i\varepsilon) \notag \\
& + \bar \chi [ H_{\bar\chi} - \lambda -\i \varepsilon ]^{-1} \bar\chi.
\end{align}
The map $\varepsilon \mapsto [ H_{\bar\chi} - \lambda - \i \varepsilon ]^{-1} \in \mathcal{B}( \mathrm{Ran}( \bar \chi ))$ is analytic in a neighborhood of 0, and can be expanded as
\begin{equation}
[ H_{\bar\chi} - \lambda - \i \varepsilon ]^{-1} = [ H_{\bar\chi} - \lambda ]^{-1} + \i \varepsilon [ H_{\bar\chi} - \lambda ]^{-1} \bar\chi^2 [ H_{\bar\chi} - \lambda]^{-1} + O(\varepsilon^2).
\end{equation}
This yields
\begin{equation} \label{BFB}
\lim_{ \varepsilon \rightarrow 0} \langle B \rangle^{-s} F_\chi ( H - \lambda - \i \varepsilon)^{-1} \langle B \rangle^{-s} = \langle B \rangle^{-s} [ F_\chi ( H - \lambda ) - \i 0 ]^{-1} \langle B \rangle^{-s}.
\end{equation}
Note that
\begin{equation}
\langle B \rangle^{-s} = \mathrm{C}_s \int_0^\infty \frac{\d\omega}{\omega^{s/2}}(\omega+1+B^2)^{-1},
\end{equation}
where $\mathrm{C}_s:=\big[\int_0^\infty \frac{\d\omega}{\omega^{s/2}}(\omega+1)^{-1}\big]^{-1}$.
Hence, Conditions \eqref{eqn:5} imply that the operators
\begin{equation}
\langle B \rangle^{-s} \chi \langle B \rangle^{s},\quad \langle B \rangle^{-s} \overline{\chi} \langle B \rangle^{s},\quad \langle B \rangle^{-s} \chi \langle B \rangle^s,\quad \langle B \rangle^{-s} \bar \chi \langle B \rangle^s
\end{equation}
are bounded. Similarly, the maps
\begin{equation}
\lambda \mapsto \langle B \rangle^{-s} \bar \chi [ H_{\bar\chi} - \lambda ]^{-1} \bar\chi \langle B \rangle^{s} \quad \text{ and } \quad  \lambda \mapsto \langle B \rangle^{s} \bar \chi [ H_{\bar\chi} - \lambda ]^{-1} \bar\chi \langle B \rangle^{-s}
\end{equation}
are in $C^\infty(\Delta ; \mathcal{B}( \mathcal{H} ))$. This property shows that
\begin{equation}
\langle B \rangle^s Q_\chi(H-\lambda) \langle B \rangle^{-s} \quad \text{ and } \quad \langle B \rangle^s Q^\#_\chi( H - \lambda ) \langle B \rangle^{-s}
\end{equation}
are bounded and smooth in $\lambda \in \Delta$. The theorem then follows from \eqref{BFB}, the fact that $H - \lambda$ is in the domain of $F_\chi$, and \eqref{eq:H^-1=}.
\end{proof}

\section{Bound particles coupled to a quantized radiation field}\label{appendix:bound}

In this appendix, we explain how to adapt the proof of Theorem \ref{thm:LAP} to the case of non-relativistic particles interacting with an infinitely heavy nucleus and coupled to a massless radiation field. To simplify matters, we assume that the non-relativistic particles are spinless, and that the bosons are scalar (Nelson's model). The Hamiltonian $H^{\mathrm{N}}$ associated to this system acts on $\mathcal{H} = \mathcal{H}_{\mathrm{el}} \otimes \mathcal{F}$, where $\mathcal{H}_{\mathrm{el}} = \mathrm{L}^2( \mathbb{R}^{3N} )$, and $\mathcal{F} = \Gamma_s( \mathrm{L}^2( \mathbb{R}^3 ) )$ is the symmetric Fock space over $\mathrm{L}^2(\mathbb{R}^3)$. It is given by
\begin{equation}\label{eq:def_HgN}
H^{\mathrm{N}} := H_{\mathrm{el}} \otimes \mathds{1} + \mathds{1} \otimes H_f + W .
\end{equation}
Here, $H_{\mathrm{el}} = \sum_{j=1}^N p_j^2/2m_j + V$ denotes an $N$-particle Schr\"odinger operator on $\mathcal{H}_{\mathrm{el}}$.  For $k$ in $\mathbb{R}^3$, we denote by $a^*(k)$ and $a(k)$ the usual phonon creation and annihilation operators on $\mathcal{F}$ obeying the canonical commutation relations
\begin{equation} \label{commrelations}
\left [ a^*(k) , a^*(k') \right ] = \left [ a(k) , a(k') \right ] = 0 \quad,\quad \left [ a(k) , a^*(k') \right ] = \delta (k-k').
\end{equation}
The operator associated with the energy of the free boson field, $H_f$, is given by the expression \eqref{eq:def_Hf}, except that the operators $a^*(k)$ and $a(k)$ now are scalar creation and annihilation operators as given above. The interaction $W$ in \eqref{eq:def_HgN} is assumed to be of the form $W = g \phi ( G_x )$ where $g$ is a small coupling constant, $x=(x_1,x_2,\dots,x_n)$ and
\begin{equation}
\label{eq:FormFactor}
\phi ( G_x ) := \frac{1}{\sqrt{2}} \sum_{j=1}^N \int_{ \mathbb{R}^3 } \frac{ \kappa^\Lambda(k) }{ |k|^{1/2-\mu} } \left [ e^{-ik \cdot x_j} a^*(k) + e^{ik \cdot x_j} a(k) \right ] \d k.
\end{equation}
As above, the function $\kappa^\Lambda$ denotes an ultraviolet cutoff, and the parameter $\mu$ is assumed to be non-negative.

We assume that $V$ is infinitely small with respect to $\sum_j p_j^2$, and that the spectrum of $H_{\mathrm{el}}$ consists of a sequence of discrete eigenvalues, $e_0, e_1 ,\dots$, below some semi-axis $[ \Sigma , \infty )$. Let $E^{\mathrm{N}} := \inf ( \sigma ( H^{\mathrm{N}} ) )$ and $y := \i \nabla_k$. Adapting the proof of Theorem \ref{thm:LAP}, one can show the following
\begin{theorem}\label{thm:Nelson}
Let $H^{\mathrm{N}}$ be given as above. For any $\mu \ge 0$, there exists $g_0>0$ such that, for any $0 \le g \le g_0$, $1/2 < s \le 1$, and any compact interval $J \subset (E^{\mathrm{N}}, ( e_0 + e_1 )/2 )$,
\begin{equation}\label{eq:LAP_Nelson}
\sup_{ z \in J_\pm } \big \| (  \d \Gamma ( \langle y \rangle ) + 1 )^{- s} \big [ H^{\mathrm{N}} - z \big ]^{-1} ( \d \Gamma ( \langle y \rangle )  + 1 )^{-s} \big \| \le \mathrm{C},
\end{equation}
where $\mathrm{C}$ is a positive constant depending on $J$ and $s$. In particular, the spectrum of $H^{\mathrm{N}}$ in $( E^{\mathrm{N}} , (e_0+e_1)/2 )$ is absolutely continuous.
Moreover, the map
\begin{equation}
J \ni \lambda \mapsto ( \d \Gamma ( \langle y \rangle ) + 1 )^{- s} \big [ H^{\mathrm{N}} - \lambda \pm \i 0 \big ]^{-1} ( \d \Gamma ( \langle y \rangle ) + 1 )^{- s} \in B( \mathcal{H} )
\end{equation}
is uniformly H\"older continuous in $\lambda$ of order $s-1/2$.
\end{theorem}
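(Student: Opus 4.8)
\noindent\textbf{Proof proposal for Theorem \ref{thm:Nelson}.}
The plan is to follow, step by step, the two-part proof of Theorem \ref{thm:LAP}, with the important observation that the lack of translation invariance actually \emph{simplifies} the argument: since $H^{\mathrm N}$ contains no term quadratic in $P_f$, there is no fiber shift $-\nabla E_\sigma\otimes P_f$, hence no need for the auxiliary operator $T_\sigma$ of Section \ref{section:Feshbach}, and the delicate Lemma \ref{lm:nablaK-nablaE} — the main difficulty in the QED case — has no counterpart here. I would begin by introducing the infrared-cutoff Hamiltonian $H^{\mathrm N}_\sigma := H_{\mathrm{el}}\otimes\mathds 1 + \mathds 1\otimes H_f + g\,\phi(G_{x,\sigma})$, where $G_{x,\sigma}$ is $G_x$ with the photon momentum restricted to $\{|k|\ge\sigma\}$, and its restriction $K_\sigma := H^{\mathrm N}_\sigma|_{\mathcal H_{\mathrm{el}}\otimes\mathcal F_\sigma}$. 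Because the particles are confined, $K_\sigma$ — like $H^{\mathrm N}$ itself — has a ground state (existence for massless bosons of a confined system being known, including the infrared-singular case $\mu=0$); for $g$ and $\sigma$ small, $E_\sigma:=\inf\sigma(K_\sigma)$ is a non-degenerate isolated eigenvalue with $\mathrm{Gap}(K_\sigma)\ge\rho\sigma$ for some $\rho\in(0,1)$ (the gap being governed by $\min(\sigma,e_1-e_0)$ and $\sigma\ll e_1-e_0$), with $|E_\sigma-E^{\mathrm N}|\le\mathrm Cg\sigma$ and $E_\sigma\to e_0$ as $g\to0$. This is the analogue — and a simplification — of Proposition \ref{prop:Esigma(P)}.

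In the tensor representation $\mathcal F\simeq\mathcal F_\sigma\otimes\mathcal F^\sigma$ one has, \emph{with no cross term}, $H^{\mathrm N}_\sigma=K_\sigma\otimes\mathds 1+\mathds 1\otimes H_f$, so that $T:=H^{\mathrm N}_\sigma-\lambda$ commutes with $\chi:=P_\sigma\otimes\chi^\sigma_f$, $P_\sigma:=\mathds 1_{\{E_\sigma\}}(K_\sigma)$, and the smooth Feshbach–Schur map of Appendix \ref{appendix:Feshbach} applies directly with $W=W_\sigma:=g\,\phi(G^\sigma_x)$, $G^\sigma_x:=G_x-G_{x,\sigma}$. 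Lemmata \ref{lm:standard1}–\ref{lm:standard2}, valid for every $\mu\ge0$ since $\||k|^{-1/2}G^\sigma_x\|\le\mathrm C\sigma^{1/2}$ uniformly in $\mu\ge0$ — this is exactly where the harmlessness of the infrared singularity enters — give $\|W_\sigma\chi\|,\|\chi W_\sigma\|=O(g\sigma)$; combined with the gap estimate, the analogue of Lemma \ref{lm:appendix0}, and the bound $\bar\chi H^{\mathrm N}_\sigma\bar\chi\ge(E_\sigma+\tfrac38\rho\sigma)\bar\chi$ on $\mathrm{Ran}\,\bar\chi$, a Neumann series (carried out as in Proposition \ref{prop:Hchibar}, but directly, without the operator $H^1_{\bar\chi}$) yields the bounded invertibility of $H_{\bar\chi}-\lambda$ on $\mathrm{Ran}\,\bar\chi$ for $\lambda\le E_\sigma+\rho\sigma/4$, with $\|\bar\chi[H_{\bar\chi}-\lambda]^{-1}\bar\chi\|=O(\sigma^{-1})$. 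The resulting Feshbach operator is $F(\lambda)=K_\sigma\otimes\mathds 1+\mathds 1\otimes H_f-\lambda+W_1+W_2$ with $\|W_i\|=O(g\sigma)$.

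The two Mourre estimates are then obtained as in Sections \ref{section:LAP_away} and \ref{section:mourre}. For the LAP away from $E^{\mathrm N}$ I would take $B=\d\Gamma(b)$ and prove $\mathds 1_J(H^{\mathrm N})[H^{\mathrm N},\i B]\mathds 1_J(H^{\mathrm N})\ge\mathrm c\,\mathds 1_J(H^{\mathrm N})$ for $J\subset[E^{\mathrm N}+\mathrm C_0 g,(e_0+e_1)/2]$: here $[H^{\mathrm N},\i B]=\mathds 1\otimes H_f+g[\phi(G_x),\i B]$, the second term being relatively $(\mathds 1\otimes H_f+1)^{1/2}$-form bounded with an $O(g)$ constant (the form factor $bG_x$ is square-integrable with an extra factor $O(\langle x\rangle)$, harmless on the spectral subspace $\mathds 1_{H^{\mathrm N}<\Sigma}$ by Agmon-type decay of the eigenfunctions of $H_{\mathrm{el}}$ below $\Sigma$), and the positivity $\mathds 1_J(H^{\mathrm N})(\mathds 1\otimes H_f)\mathds 1_J(H^{\mathrm N})\ge(\mathrm{dist}(J,E^{\mathrm N})-\mathrm Cg)\mathds 1_J(H^{\mathrm N})$ follows, for $g=0$, from the fact that a state of total energy in $J$ either lies in the field-vacuum sector of the electron ground state — where the field energy is then $\ge\mathrm{dist}(J,e_0)$ — or is excluded because every excited electronic level exceeds $\sup J<(e_0+e_1)/2$; the case $g\neq0$ follows by Helffer–Sjöstrand perturbation. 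For the LAP near $E^{\mathrm N}$ I would run Section \ref{section:mourre} with $B^\sigma=\d\Gamma(\kappa^\sigma b\kappa^\sigma)$: since $\nabla E_\sigma$ is absent, $[F_0,\i B^\sigma]=\mathds 1\otimes\d\Gamma(\kappa^\sigma(k)^2|k|)$ is bounded below by $\mathds 1\otimes H_f$ on the relevant spectral subspace (the cutoff argument of Lemma \ref{lm:[F0,iB]}, now free of the constraint $|\nabla E_\sigma|\le1/4$), while $\|[W_i,\i B^\sigma]\|=O(g\sigma)$ as in Lemma \ref{lm:[W,iB]} — with the simplification that the term \eqref{eq:||W_2||_7} and Lemma \ref{lm:nablaK-nablaE} drop out. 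The localization lemmata \ref{lm:Delta->Delta'}, \ref{cor:f(F)-f(F_0)}, \ref{lm:f(F_0)-f(Hf)} and \ref{lm:[F,iB]bounded} carry over essentially verbatim, yielding $\mathds 1_{\Delta_\sigma}(F)[F,\i B^\sigma]\mathds 1_{\Delta_\sigma}(F)\ge\mathrm c\rho\sigma\,\mathds 1_{\Delta_\sigma}(F)$ for $\lambda$ in a small interval $J^<_\sigma$ just above $E^{\mathrm N}$.

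Finally, the $\mathrm C^2(B)$- and $\mathrm C^2(B^\sigma)$-regularity of $H^{\mathrm N}$ and $F(\lambda)$ (verified as in \cite[Proposition~9]{FGS1}), standard Mourre theory, and Theorem \ref{thm:LAPtransfer} — whose hypotheses \eqref{eqn:5} hold since $[\chi,\i B^\sigma]$, $[\bar\chi,\i B^\sigma]$, $[W_\sigma\chi,\i B^\sigma]$ and $[\bar\chi(H_{\bar\chi}-\lambda)^{-1}\bar\chi,\i B^\sigma]$ are all bounded, exactly as in the proof of Lemma \ref{lm:[W,iB]} — transfer the LAP for $F(\lambda)$ to $H^{\mathrm N}$ on $J^<_\sigma$. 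Letting $\sigma$ range over $(0,\mathrm C'_0 g]$ covers $(E^{\mathrm N},E^{\mathrm N}+\mathrm C_0 g]$, which together with the away-estimate covers all of $(E^{\mathrm N},(e_0+e_1)/2)$; the replacement of the weights $\langle B\rangle^{-s}$ and $\langle B^\sigma\rangle^{-s}$ by $(\d\Gamma(\langle y\rangle)+1)^{-s}$ is the argument of Corollaries \ref{cor:LAPaway_2} and \ref{cor:LAP_K_2}, using the ultraviolet cutoff in the form factor, and absolute continuity is the usual consequence of the LAP. I expect the genuine work to lie in the away-Mourre estimate — specifically the positivity $\mathds 1_J(\mathds 1\otimes H_f)\mathds 1_J\gtrsim\mathrm{dist}(J,E^{\mathrm N})$, which is precisely why $J$ must remain below the first excited electronic level — and, to a lesser extent, in checking that the infrared singularity $\mu=0$ is controlled uniformly throughout the commutator estimates.
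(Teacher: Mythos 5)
Your proposal follows exactly the route taken in the paper's appendix: introduce the infrared cutoff Hamiltonian $H^{\mathrm N}_\sigma$, use the absence of the cross term $-\nabla E_\sigma\otimes P_f$ to apply the smooth Feshbach--Schur map directly with $T=H^{\mathrm N}_\sigma-\lambda$ (so that indeed neither $T_\sigma$ nor Lemma \ref{lm:nablaK-nablaE} is needed), prove a Mourre estimate for $F(\lambda)$ with the conjugate operator $B^\sigma$, and transfer to $H^{\mathrm N}$ via Theorem \ref{thm:LAPtransfer}; the away-regime is handled via $B=\d\Gamma(b)$ as in Section \ref{section:LAP_away}.

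One point where the emphasis in your proposal is slightly off, and which the paper explicitly singles out, concerns the near-regime commutators $[W_i,\i B^\sigma]$: you state that $\|[W_i,\i B^\sigma]\|=O(g\sigma)$ ``as in Lemma \ref{lm:[W,iB]}'', but this is not quite automatic. In the Nelson model the form factor is $G^\sigma_x(k)=\frac{\kappa^\Lambda(k)\mathds{1}_{|k|\le\sigma}}{|k|^{1/2-\mu}}e^{-\i k\cdot x_j}$, and $b^\sigma$ contains $\nabla_k$, so $\i b^\sigma G^\sigma_x$ acquires an unbounded multiplicative factor $\sim x_j$ which has no analogue in the translation-invariant case of Lemma \ref{lm:[W,iB]} (there, $h^\sigma$ is independent of $x$). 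You correctly identify this phenomenon and its remedy --- exponential (Agmon) decay of the electronic eigenfunctions below $\Sigma$ --- in the away-regime estimate for $[\phi(G_x),\i B]$, but the same mechanism is needed once more for $[W^\sigma,\i B^\sigma]$ in the near-regime, this time using the exponential decay of the ground state of $K^{\mathrm N}_\sigma$ in $\mathrm{Ran}(P_\sigma\otimes\mathds1)$. The paper flags exactly this as the one nontrivial adaptation (``some exponential decay in the electronic position variables $x_j$ has to be used in order to control the commutator of $W^\sigma$ with $B^\sigma$''); by contrast, your closing paragraph locates the ``genuine work'' in the away-Mourre positivity, which the paper regards as standard (citing \cite{BFS}). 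This is a question of emphasis rather than a gap --- you have the right tool in hand and merely need to invoke it in both places --- but it is worth correcting since the paper identifies it as \emph{the} new feature of the bound-particle case.
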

Let us emphasize that Theorem \ref{thm:Nelson} does not require any infrared regularization in the form factor. In comparison, the proof of \cite{FGS1} would give Theorem \ref{thm:Nelson} for any $\mu \ge 1$, and the one in \cite{FGS3} for any $\mu>0$. However, for the standard model of non-relativistic QED (which is considered in \cite{FGS1} and \cite{FGS3}), thanks to a Pauli-Fierz transformation, the methods given in \cite{FGS1} and \cite{FGS3} work without any infrared regularization.
\begin{proof}
We briefly explain how to adapt the proof of Theorem \ref{thm:LAP}. First, using the generator of dilatations on Fock space, $B$, as a conjugate operator, it follows from standard estimates that a Mourre estimate holds outside a neighborhood of $E_g^{\mathrm{N}}$; see \cite{BFS}.

To obtain the LAP near $E^{\mathrm{N}}$, we modify Sections \ref{section:Feshbach} and \ref{section:mourre} as follows: We take $T_\sigma = H_{\sigma}^{\mathrm{N}}$, where $H_{\sigma}^{\mathrm{N}}$ is the infrared cutoff Hamiltonian
\begin{equation}
H_{\sigma}^{\mathrm{N}} := H_{\mathrm{el}} \otimes \mathds{1} + \mathds{1} \otimes H_f + W_{\sigma} .
\end{equation}
Here $W_{\sigma} = g \phi( G_{x,\sigma} )$, and $\phi( G_{x,\sigma} )$ is given by \eqref{eq:FormFactor} except that the integral over $\mathbb{R}^3$ is replaced by the integral over $\{ k \in \mathbb{R}^3 , |k| \ge \sigma \}$. We define similarly $W^\sigma = H^{\mathrm{N}} - H_{\sigma}^{\mathrm{N}} = g \phi( G_{x}^\sigma )$ with the obvious notation. The Hilbert space $\mathcal{H}$ is unitarily equivalent to $\mathcal{H}_\sigma \otimes \mathcal{F}^\sigma$, where $\mathcal{H}_\sigma = \mathcal{H}_{\mathrm{el}} \otimes \mathcal{F}_\sigma$ and $\mathcal{F}_\sigma = \Gamma_s ( \mathrm{L}^2 ( \{ k \in \mathbb{R}^3 , |k| \ge \sigma \} ) )$, respectively $\mathcal{F}^\sigma = \Gamma_s ( \mathrm{L}^2 ( \{ k \in \mathbb{R}^3 , |k| \le \sigma \} ) )$. In this representation, we can write
\begin{equation}
H^{\mathrm{N}} = K_{\sigma}^{\mathrm{N}} \otimes \mathds{1} + \mathds{1} \otimes H_f + W^\sigma,
\end{equation}
where $K_{\sigma}^{\mathrm{N}}$ denotes the restriction of $H_{\sigma}^{\mathrm{N}}$ to $\mathcal{H}_\sigma$. It is known that the ground state energy $E_{\sigma}^{\mathrm{N}}$ of $K_{\sigma}^{\mathrm{N}}$ is separated from the rest of the spectrum by a gap of order $O(\sigma)$. Thus, letting $P_\sigma = \mathds{1}_{ \{ E_{\sigma}^{\mathrm{N}} \} }( K_{\sigma}^{\mathrm{N}} )$ and $\chi = P_\sigma \otimes \chi^\sigma_f$, one can define the smooth Feshbach-Schur operator in the same way as in Section \ref{section:Feshbach}, that is
\begin{align}
F(\lambda) &= F_\chi( H^{\mathrm{N}} - \lambda ) |_{\mathrm{Ran}(P_\sigma \otimes \mathds{1} ) } \notag \\
&= E_{\sigma}^{\mathrm{N}} - \lambda + \mathds{1} \otimes H_f + \chi W^\sigma \chi - \chi W^\sigma \bar\chi \big [ H_{\bar\chi} - \lambda \big ]^{-1} \bar \chi W^\sigma \chi,
\end{align}
for $\lambda$ in a neighborhood of $E_{\sigma}^{\mathrm{N}}$. The proof of the Mourre estimate for $F(\lambda)$ follows then in the same way as in Section \ref{section:mourre}, using $B^\sigma$ as a conjugate operator. Note that the ``perturbation'' $W^\sigma$ is simpler here than the one considered in Section \ref{section:Feshbach}, in that it only consists of the sum of a creation and an annihilation operator. However, some exponential decay in the electronic position variables $x_j$ has to be used in order to control the commutator of $W^\sigma$ with $B^\sigma$. (We do not present details.)
\end{proof}

\newpage

\section{List of notations}\label{appendix:notations}
\noindent \textbf{Hilbert spaces} \\
\begin{footnotesize}
\begin{align}
&\mathcal{H} = \mathrm{L}^2( \mathbb{R}^3 ) \otimes \mathcal{F}, \phantom{\frac{1}{2}} \\
&\mathcal{F} = \Gamma_s( \mathrm{L}^2( \mathbb{R}^3 \times \mathbb{Z}_2 )), \phantom{\frac{1}{2}} \\
&\mathcal{F_\sigma} = \Gamma_s( \mathrm{L}^2( \{ (k,\lambda) \in \mathbb{R}^3 \times \mathbb{Z}_2 , |k| \ge \sigma \} ) ), \phantom{\frac{1}{2}} \\
&\mathcal{F}^\sigma = \Gamma_s( \mathrm{L}^2( \{  (k,\lambda) \in \mathbb{R}^3 \times \mathbb{Z}_2 , |k| \le \sigma \} ) ). \phantom{\frac{1}{2}}
\end{align}
\end{footnotesize}
\noindent \textbf{Hamiltonians} \\
\begin{footnotesize}
\begin{align}
& H = \frac{1}{2} ( P - P_f - \alpha^{\frac{1}{2}} A )^2 + H_f,  \\
& H_\sigma = \frac{1}{2} ( P - P_f - \alpha^{\frac{1}{2}} A_\sigma )^2 + H_f \text{ (as an operator on } \mathcal{F} \text{)}, \\
& \phantom{H_\sigma } = K_\sigma \otimes \mathds{1} + \mathds{1} \otimes \big ( \frac{1}{2} P_f^2 + H_f ) - \nabla K_\sigma \otimes P_f \text{ (as an operator on } \mathcal{F_\sigma} \otimes \mathcal{F}^\sigma \text{)}, \\
& \nabla H_\sigma = P - P_f - \alpha^{\frac{1}{2}}A_\sigma, \phantom{\frac{1}{2}} \\
& K_\sigma = H_\sigma |_{\mathcal{F_\sigma}}, \quad \nabla K_\sigma = \nabla H_\sigma |_{\mathcal{F_\sigma}}, \phantom{\frac{1}{2}} \\
& U_\sigma = H - H_\sigma, \phantom{\frac{1}{2}} \\
& T_\sigma = K_\sigma \otimes \mathds{1} + \mathds{1} \otimes \big ( \frac{1}{2} P_f^2 + H_f ) - \nabla E_\sigma \otimes P_f, \\
& W_\sigma = H - T_\sigma = U_\sigma - ( \nabla K_\sigma - \nabla E_\sigma ) \otimes P_f, \phantom{\frac{1}{2}} \\
& H_\chi = T_\sigma + \chi W_\sigma \chi, \quad H_{\bar\chi} = T_\sigma + \bar\chi W_\sigma \bar\chi, \phantom{\frac{1}{2}} \\
& H_{\bar\chi}^1 = T_\sigma - \bar\chi ( \nabla K_\sigma - \nabla E_\sigma ) \otimes P_f \bar \chi, \phantom{\frac{1}{2}} \\
& F = F_\chi( H - \lambda ) |_{\mathrm{Ran}(P_\sigma\otimes\mathds{1})} \phantom{\frac{1}{2}} \\
& \phantom{F} = E_\sigma - \lambda + \mathds{1} \otimes \big ( \frac{1}{2} P_f^2 + H_f \big ) - \nabla E_\sigma \otimes P_f + \chi U_\sigma \chi - \chi W_\sigma \bar \chi \big [ H_{\bar \chi} - \lambda \big ]^{-1} \bar \chi W_\sigma \chi, \phantom{\frac{1}{2}} \\
& F_0 = E_\sigma- \lambda + \mathds{1} \otimes \big ( \frac{1}{2} P_f^2 + H_f \big ) - \nabla E_\sigma \otimes P_f, \phantom{\frac{1}{2}} \\
& W_1= \chi U_\sigma \chi, \phantom{\frac{1}{2}} \\
& W_2= - \chi W_\sigma \bar \chi \big [ H_{\bar \chi} - \lambda \big ]^{-1} \bar \chi W_\sigma \chi, \phantom{\frac{1}{2}} \\
& \tilde F = F + \lambda - E_\sigma, \quad \tilde F_0  = F_0 + \lambda - E_\sigma. \phantom{\frac{1}{2}}
\end{align}
\end{footnotesize}

\noindent \textbf{Conjugate operators} \\
\begin{footnotesize}
\begin{align}
& B = \d \Gamma ( b ), \quad b = \frac{ \i }{2} ( k \cdot \nabla_k + \nabla_k \cdot k ), \phantom{\frac{1}{2}} \\
& B^\sigma =  \d \Gamma( b^\sigma ), \quad b^\sigma = \kappa^\sigma b \kappa^\sigma. \phantom{\frac{1}{2}}
\end{align}
\end{footnotesize}

\noindent \textbf{Intervals} \\
\begin{footnotesize}
\begin{align}
& E = \inf \sigma( H ), \quad E_\sigma = \inf \sigma( H_\sigma ), \phantom{\frac{1}{2}} \\
& J_\sigma^> = E + [ \sigma , 2 \sigma ] \text{ (for } \sigma \ge \mathrm{C}_0 \alpha^{\frac{1}{2}} \text{)}, \phantom{\frac{1}{2}} \\
& J_\sigma^< = E + [ 11 \rho \sigma / 128 , 13 \rho \sigma / 128 ] \text{ (for } \sigma \le \mathrm{C}'_0 \alpha^{\frac{1}{2}} \text{)}, \phantom{\frac{1}{2}} \\
& \rho: \text{ fixed parameter such that } 0 < \rho < 1 \text{ and } \mathrm{Gap}( K_\sigma ) \ge \rho \sigma, \phantom{\frac{1}{2}} \\
& \Delta_\sigma = [ - \rho \sigma / 128 , \rho \sigma / 128 ], \phantom{\frac{1}{2}} \\
& \Delta'_\sigma = [ \rho \sigma / 16 , \rho \sigma / 8 ], \phantom{\frac{1}{2}}
\end{align}
\end{footnotesize}

\begin{figure}[htbp]
\centering
\resizebox{0.9\textwidth}{!}{\includegraphics{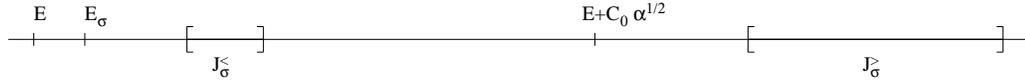}}
\caption{\textbf{The intervals $J_\sigma^<$ and $J_\sigma^>$}} \label{Intervals}
\end{figure}

\noindent \textbf{Functions} \\
\begin{footnotesize}
\begin{align}
& \kappa^\Lambda \in \mathrm{C}_0^\infty( \{ k , |k| \le \Lambda \} ; [0,1] )\ \mbox{and}\ \kappa^\Lambda =1\ \mbox{on}\ \{ k , |k| \le 3 \Lambda / 4 \}, \phantom{\frac{1}{2}} \\
& f_\sigma \in \mathrm{C}_0^\infty( [ 3 \rho \sigma / 64 ; 9 \rho \sigma / 64 ] ; [0,1] )\ \mbox{and}\ f_\sigma =1\ \mbox{on}\ \Delta'_\sigma, \phantom{\frac{1}{2}} \\
& \tilde f_\sigma : \text{ almost analytic extension of } f_\sigma. \phantom{\frac{1}{2}}
\end{align}
\end{footnotesize}

\noindent \textbf{(Almost) projections} \\
\begin{footnotesize}
\begin{align}
& P_\sigma = \mathds{1}_{ \{ E_\sigma \} }( K_\sigma ), \quad \bar P_\sigma = \mathds{1} - P_\sigma, \phantom{\frac{1}{2}} \\
& \chi_f^\sigma = \kappa^{\rho\sigma}( H_f ), \quad \bar \chi_f^\sigma = \sqrt{ \mathds{1} - (\chi_f^\sigma)^2 }, \phantom{\frac{1}{2}} \\
& \chi = P_\sigma \otimes \chi_f^\sigma, \quad \bar \chi = P_\sigma \otimes \bar \chi_f^\sigma + \bar P_\sigma \otimes \mathds{1}. \phantom{\frac{1}{2}}
\end{align}
\end{footnotesize}

\bibliographystyle{amsalpha}

\end{document}